\PassOptionsToPackage{expansion=false}{microtype}
\documentclass[acmsmall, screen]{acmart}

\acmJournal{TOSEM}
\acmYear{2026}
\acmVolume{0}
\acmNumber{0}
\acmArticle{0}
\acmMonth{0}
\setcopyright{none}    
\acmConference[arXiv preprint]{Manuscript under review for ACM TOSEM}{}{}

\usepackage{amsmath, amsthm}
\usepackage{mathtools}
\usepackage{booktabs}
\usepackage{multirow}
\usepackage{listings}
\usepackage{xcolor}
\usepackage[breakable]{tcolorbox}
\usepackage{enumitem}
\usepackage{tikz}
\usetikzlibrary{positioning, arrows.meta, shapes.geometric, fit, calc, backgrounds}
\newcommand{\cmark}{\textbf{Y}}
\newcommand{\xmark}{\ensuremath{-}}
\newcommand{\tildemark}{\textbf{P}}
\microtypesetup{expansion=false}

\usepackage{adjustbox}

\newtheorem{theorem}{Theorem}
\newtheorem{lemma}{Lemma}
\newtheorem{proposition}{Proposition}
\newtheorem{definition}{Definition}
\newtheorem{hypothesis}{Hypothesis}
\newtheorem{remark}{Remark}
\newtheorem*{conjecture*}{Conjecture}
\newtheorem*{corollary*}{Corollary}

\lstdefinestyle{pythonstyle}{
  language=Python,
  basicstyle=\ttfamily\footnotesize,
  keywordstyle=\color{blue!70!black},
  commentstyle=\color{green!50!black}\itshape,
  stringstyle=\color{red!70!black},
  showstringspaces=false,
  breaklines=true,
  frame=single,
  rulecolor=\color{gray!30},
  numbers=left,
  numberstyle=\tiny\color{gray},
  numbersep=5pt
}

\title{NOETHER: A Constructive Framework for Metamorphic Pattern Discovery from Operator Algebras}

\newcommand{\threeaffil}{%
  \affiliation{%
    \institution{School of Computing, University of South China}
    \city{Hengyang}
    \postcode{421001}
    \country{China}
  }%
  \affiliation{%
    \institution{Hunan Engineering Research Center of Software Evaluation and Testing for Intellectual Equipment}
    \city{Hengyang}
    \postcode{421001}
    \country{China}
  }%
  \affiliation{%
    \institution{CNNC Key Laboratory on High Trusted Computing}
    \city{Hengyang}
    \postcode{421001}
    \country{China}
  }%
}

\author{Meng Li}
\authornote{Corresponding author.}
\email{mlemon@usc.edu.cn}
\threeaffil

\author{Xiaohua Yang}
\threeaffil

\author{Jie Liu}
\threeaffil

\author{Shiyu Yan}
\threeaffil

\begin{abstract}
\textbf{Context.} Metamorphic Testing is recognised in IEEE/ISO software-testing standards and increasingly recommended for AI systems, but its progress is bottlenecked by metamorphic relation (MR) identification: existing approaches (structured frameworks, mining and evolutionary pipelines, LLM-assisted methods, MetaPattern catalogues) share an inductive grounding that leaves three foundational questions open: \emph{origin}, \emph{closure}, and \emph{transferability}.

\textbf{Objective.} We propose a framework whose downstream step from program-induced operator algebra to MetaPattern set is mechanical and provable, while making the upstream curation of the algebra a stated empirical hypothesis with an explicit scope precondition.

\textbf{Method.} We introduce \textbf{NOETHER}, a two-layer framework. The upstream layer is an eight-block decomposition over recurrent mathematical structures (symmetry, order, self-adjoint, time-reversal, limit, qualitative-dynamics, method-comparison, relational equivalence), curated as an empirical hypothesis. The downstream layer is the \texttt{CONSTRUCT-MP} algorithm: given the decomposition, it produces a MetaPattern set with an algebraic-closure guarantee under the \texttt{Translate} operator (Theorem~1) and polynomial-time decidability under a finite generating set (Theorem~2). Theorem~1 converts an empirical-adequacy claim into a structural-adequacy obligation within an explicitly bounded scope. We test the framework on three operator-algebraic domains and an empirical head-to-head against an automated SOTA baseline.

\textbf{Results.} On Boltzmann reactor physics the framework systematises a prior inductive catalogue and re-classifies further equivalence classes; on equivariant ML it derives executable MRs for rotation invariance, adjoint duality, and training-trajectory reversibility; on relational query optimisers it exercises the relational-equivalence block. The central falsifiable prediction --- $\mathcal{L}^{*}$-blindness on homogeneity-preserving mutators, derivable ex-ante from the algebra and the mutator specification --- holds on the in-scope substrate. The empirical head-to-head against a GP-evolved baseline reports Set~N dominated on the scope-matched D1 stratum; the framework's contribution is read as algebraic derivability, per-block complementarity, and an out-of-scope D2-stratum boundary no inductive baseline can derive. The absolute-completeness conjecture (Theorem~$1'$) is falsified on $\mathcal{A}_{\mathrm{PWR}}$ via two independent counterexamples that identify five \texttt{Translate}-extension dimensions, with five further candidate dimensions on the equivariant-ML and relational-query algebras as the principal locus of follow-up work.

\textbf{Conclusion.} NOETHER lifts induction from per-program MR sampling to a stable per-domain algebraic layer; the downstream step is deductive and mechanical, while the upstream empirical layer is stated as an explicit hypothesis with documented out-of-scope cases (web applications, RLHF reward models, distributed-consensus protocols, compiler-internal optimisations).
\end{abstract}

\ccsdesc[500]{Software and its engineering~Software testing and debugging}
\ccsdesc[300]{Software and its engineering~Software verification and validation}
\ccsdesc[300]{Theory of computation~Algebraic semantics}

\keywords{metamorphic testing, metamorphic relation identification, MetaPattern, operator algebra, algebraic closure, equivariance, mutation testing, software testing foundations}

\begin{document}
\maketitle

\section{Introduction}
\label{sec:intro}

Noether's first theorem replaced an empirically curated catalogue of conservation laws with a derivation from the structure of an action functional.\footnote{The theorem appeared in Noether's 1918 paper \emph{Invariante Variationsprobleme}, addressing a question Hilbert and Klein had raised about energy conservation in general relativity. We do not invoke it as a theorem about programs: program semantics and metamorphic relations do not provide an action functional. The analogy is methodological only.} A catalogue of observed invariants can sometimes be replaced by a derivation procedure grounded in the underlying structure. The replacement does not eliminate empirical work; it lifts the empirical step from per-instance enumeration to per-domain structural identification, one cycle of induction per stable domain rather than per program.

Software testing faces a related problem. Metamorphic Testing (MT), introduced by Chen et~al.\ in 1998~\cite{Chen1998}, was admitted into the IEEE/ISO/IEC software-testing standards in 2022~\cite{ISO29119} and has been increasingly endorsed as a technique for testing AI and machine-learning systems~\cite{Segura2016, LiTOSEM2025}. Its central artefact is the \emph{metamorphic relation} (MR): a property that constrains how a program's outputs must covary across multiple executions, thereby substituting for an explicit oracle in domains where one cannot exist. Despite MT's maturity, MR identification, the task of deciding \emph{which} properties hold for a program under test, remains its binding constraint. Practitioners report three recurring difficulties: high dependence on tester domain knowledge, mutually incompatible MR formulations for the same program, and low reuse of MR sets across teams or projects~\cite{Segura2016, LiTOSEM2025}. Recent work has responded at the application layer, by mining MRs for particular domains, and at the integration layer, by automating search through evolutionary, mining-based, or LLM-assisted pipelines~\cite{MRScout2024, GenMorph2024, Shin2024}. The foundational layer has not advanced at the same pace.

The principal artefact at that foundational layer is the \emph{MetaPattern} (MP): an equivalence class over MRs that captures a recurrent structural strategy a tester invokes when reasoning about program properties. MetaPatterns organise the otherwise unbounded MR design space into a small, interpretable scaffold; methods that exploit them, including the structured MR identification approaches METRIC~\cite{ChenMETRIC2016} and METRIC+~\cite{SunMETRICplus2021}, and the recent wave of LLM-prompted MR generators, depend on the scaffold's quality. Yet across the literature, MetaPattern catalogues continue to be assembled in the same way conservation laws were assembled before 1918: by induction over observed examples. A typical proposal lists $k$ patterns drawn from cluster analysis or expert codification, demonstrates that the patterns ``cover'' some corpus of MRs to a target threshold, and stops. None of the existing MP proposals, including the authors' own prior work on five reactor-physics patterns, answers the three questions that any foundational theory of MetaPatterns should answer:

\begin{enumerate}
  \item \textbf{Origin.} \emph{Why} exactly these MetaPatterns and not others? What is the structural source of an MP, as distinct from an empirical regularity in the corpus on which it was induced?
  \item \textbf{Closure.} Under what mathematical conditions is a discovered MP set \emph{closed under a stated derivation operator}, in the sense that it is guaranteed not to miss patterns reachable through the operator from a structurally fixed input? Absolute completeness over all properties one might write over the underlying structure is a strictly stronger demand and remains open in general.
  \item \textbf{Transferability.} When the program family changes (from reactor physics to natural-language inference, from numerical libraries to recommender systems), how does the MP set change, and can the new set be obtained without re-running the entire empirical induction in the new domain?
\end{enumerate}

We call this the \emph{origin--closure--transferability gap}. The gap matters because it explains why MR sets continue to grow as one-off artefacts. Without a structural source, there is no clear boundary on what the pattern space contains. Without a transfer rule, a relation written for one program rarely moves cleanly to another, even when both programs share a deep structural template. These are not only tooling limitations; they follow from grounding MetaPatterns in observed examples rather than in the structure that makes the relations hold.

This paper proposes such a structural source. We introduce \textbf{NOETHER}, a constructive framework that derives MetaPatterns from the operator-algebraic structure of the program family under test. The framework extracts MetaPatterns from invariants of an operator algebra $\mathcal{A}_P$ (formally defined in Section~\ref{sec:prelim}) that captures the program family's mathematical scaffolding. Given $\mathcal{A}_P$, NOETHER produces a MetaPattern set $\mathbb{M}(\mathcal{A}_P)$ (constructed by the algorithm of Section~\ref{sec:framework}) together with a closure guarantee over the algebra-induced MR space. Given a different program family with a different algebra $\mathcal{A}_{P'}$, the same construction produces $\mathbb{M}(\mathcal{A}_{P'})$ without re-running empirical induction. The framework does not abolish induction: domain experts must still distil $\mathcal{A}_P$ from program semantics, and Section~\ref{sec:threats-limitations} states this limitation explicitly. What becomes algebraic is the downstream step from $\mathcal{A}_P$ to the MetaPattern set. Figure~\ref{fig:noether-arch} summarises the two-layer architecture.

\begin{figure}[h]
\centering
\adjustbox{max width=\textwidth}{%
\begin{tikzpicture}[
  font=\small,
  node distance=4mm and 6mm,
  box/.style={draw, rounded corners=2pt, align=center, inner sep=4pt, minimum height=8mm, fill=white},
  thmbox/.style={draw, rounded corners=2pt, align=center, inner sep=3pt, minimum height=7mm, fill=gray!8, font=\footnotesize\itshape},
  inst/.style={draw, rounded corners=2pt, align=center, inner sep=3pt, minimum height=6mm, fill=gray!5, font=\footnotesize},
  arr/.style={-{Latex[length=2mm]}, thick},
  layerlbl/.style={font=\footnotesize\bfseries, anchor=west}
]
\node[box] (prog) {Program family $P$};
\node[box, right=12mm of prog] (alg) {Operator algebra $\mathcal{A}_P$};
\node[box, right=12mm of alg] (decomp) {Block decomposition $\mathcal{D}(\mathcal{A}_P)$\\\footnotesize $\{G, O_{\le}, T^{*}, \mathcal{T}^{*}_{\mathrm{rev}}, \mathcal{L}^{*}, \mathcal{D}^{*}, \mathcal{E}^{*}, \mathcal{B}^{*}_{\mathrm{rel}}\}$};

\draw[arr] (prog) -- node[above, font=\scriptsize, align=center] {expert\\$+$ LLM grid} (alg);
\draw[arr] (alg) -- node[above, font=\scriptsize] {Hypothesis~\ref{hyp:seven-blocks}} (decomp);

\node[box, below=20mm of alg] (mset) {MetaPattern set $\mathbb{M}(\mathcal{A}_P)$};
\node[box, right=12mm of mset] (mrset) {Algebra-induced MR space $\mathrm{MR}(\mathcal{A}_P)$};

\draw[arr] (decomp) -- ++(0,-6mm) -| node[above, near start, font=\scriptsize] {\texttt{CONSTRUCT-MP}} (mset);
\draw[arr] (mset) -- node[above, font=\scriptsize] {\texttt{Translate}} (mrset);

\node[thmbox, below=4mm of mset] (thm2) {Theorem~\ref{thm:decidable}\\poly-time};
\node[thmbox, below=4mm of mrset] (thm1) {Theorem~\ref{thm:closure}\\closure};

\draw[densely dotted] (mset) -- (thm2);
\draw[densely dotted] (mrset) -- (thm1);

\node[inst, right=10mm of mrset, anchor=west, yshift=4mm] (inst1) {Boltzmann reactor physics (\S\ref{sec:reactor})};
\node[inst, below=2mm of inst1.south west, anchor=north west] (inst2) {Equivariant ML (\S\ref{sec:cross-domain})};
\node[inst, below=2mm of inst2.south west, anchor=north west] (inst3) {Relational query optimisers (\S\ref{subsec:third-domain})};

\draw[arr, densely dashed] (mrset.east) -- ++(8mm,0) |- (inst1.west);
\draw[arr, densely dashed] (mrset.east) -- ++(8mm,0) |- (inst2.west);
\draw[arr, densely dashed] (mrset.east) -- ++(8mm,0) |- (inst3.west);

\begin{pgfonlayer}{background}
  \node[fit=(prog)(decomp), draw=gray!50, dashed, rounded corners, inner sep=6pt, label={[layerlbl, gray!70!black]above left:UPSTREAM (empirical hypothesis)}] {};
  \node[fit=(mset)(mrset)(thm2)(thm1), draw=gray!50, dashed, rounded corners, inner sep=6pt, label={[layerlbl, gray!70!black]below left:DOWNSTREAM (mechanical, provable)}] {};
\end{pgfonlayer}
\end{tikzpicture}%
}
\caption{NOETHER framework architecture (two layers). The
\emph{upstream layer} curates the program family's operator algebra
$\mathcal{A}_P$ and its eight-block decomposition as an empirical
hypothesis (Hypothesis~\ref{hyp:seven-blocks}). The \emph{downstream
layer} mechanically derives the MetaPattern set
$\mathbb{M}(\mathcal{A}_P)$ via \texttt{CONSTRUCT-MP}, with closure
under \texttt{Translate} over $\mathrm{MR}(\mathcal{A}_P)$
(Theorem~\ref{thm:closure}) and polynomial-time decidability under a
finite generating set (Theorem~\ref{thm:decidable}). The framework
is instantiated on three structurally distinct domains
(\S\ref{sec:reactor}, \S\ref{sec:cross-domain}, \S\ref{subsec:third-domain})
to test transferability at the algebra-skeleton level.}
\label{fig:noether-arch}
\end{figure}

We make four contributions.

\begin{itemize}
  \item \textbf{C1.} We introduce NOETHER, a \emph{two-layer} framework for MetaPattern discovery that combines an empirically curated eight-block decomposition (Hypothesis~\ref{hyp:seven-blocks}; upstream layer, itself partly distilled from the present authors' prior 84-MR PWR catalogue, see \S\ref{subsec:reactor-mapping} for provenance) with a constructive algorithm for deriving MetaPatterns from a program-induced operator algebra (downstream layer).
  \item \textbf{C2a (positive theory).} Given the eight-block decomposition, we prove an Algebraic Closure Theorem (Theorem~1): the constructed MetaPattern set is closed under the framework's \texttt{Translate} operator over the algebra-induced MR space $\mathrm{MR}(\mathcal{A}_{P})$. Theorem~1 converts an empirical-adequacy claim (``our pattern grid covers $X\%$ of observed MRs'') into a structural-adequacy obligation (``every $\mathrm{Translate}$-reachable MR is assigned to a unique block under the canonical-block ordering of Definition~\ref{def:canonical-order}''); we acknowledge in Section~\ref{subsec:completeness} that the closure is by-construction within the explicit scope of Definition~\ref{def:alg-induced}, and document the three classes of MRs outside that scope (Remark~\ref{rem:scope}). Polynomial-time decidability of CONSTRUCT-MP holds when the algebra admits a finite generating set (Theorem~2).
  \item \textbf{C2b (negative theory).} The strictly stronger absolute-completeness conjecture (Theorem~$1'$ / Conjecture~\ref{conj:absolute}), over arbitrary properties expressible in $\mathcal{A}_P$, is \emph{false} on $\mathcal{A}_{\mathrm{PWR}}$: two MRs from the standard PWR safety-analysis literature (non-additivity of rod-bank reactivity worth; second-order mixed dependence of $k_{\mathrm{eff}}$ on moderator temperature and boron concentration) are formulable over $\mathcal{A}_{\mathrm{PWR}}$ but not in $\mathrm{MR}(\mathcal{A}_{\mathrm{PWR}})$ (\S\ref{subsec:negative-pwr}, Appendix~\ref{app:negative-proofs}). The two MRs identify five structural obstructions in \texttt{Translate}'s signature, with pairwise independence established by per-block exhaustion. Companion surveys on $\mathcal{A}_{\mathrm{equi}}$ and $\mathcal{A}_{\mathrm{rel}}$ (\S\ref{subsec:third-domain}) identify five further candidate dimensions (two on $\mathcal{A}_{\mathrm{equi}}$ specialising PWR-side dimensions; three net on $\mathcal{A}_{\mathrm{rel}}$), yielding ten \texttt{Translate}-extension dimensions across the three algebras; pairwise independence on the candidate five is asserted by inspection, with formal per-dimension exhaustion committed as follow-up. Scope statements are in Section~\ref{subsec:completeness}.
  \item \textbf{C3.} We instantiate NOETHER on a real-world program family and show how it \emph{systematises and re-classifies} previously catalogued patterns. The framework reproduces three prior MetaPatterns, refines two on a sounder algebraic basis, and includes a deflationary correction direction (revealing that the inductive grid was \emph{over-counted} on some sub-families and \emph{under-counted} by the absence of $m_{\mathrm{adj}}$ and $m_{\mathrm{rev}}$). The two structurally distinct re-classified equivalence classes are not de novo discoveries; domain experts could have written them down. NOETHER's contribution is the algebraic warrant for treating them as separate MetaPatterns within a uniform structure. The deflationary direction is non-circular relative to the prediction caveat of \S\ref{subsec:reactor-mapping}: \S\ref{subsec:pmcm-worked} demonstrates it on three independent test cases (a sorting library; Murphy et al.\ 2008's six-class ML categorisation~\cite{Murphy2008} re-decoded against a feedforward classifier algebra; the prior reactor-physics catalogue itself), without invoking $T^{*}$ or $\mathcal{T}^{*}$ structure that was curated from reactor physics.
  \item \textbf{C4.} We demonstrate \emph{structural} transferability at the algebra-skeleton level (not cross-domain empirical superiority), within the framework's scope precondition, by instantiating NOETHER on three structurally distinct operator-algebraic domains: Boltzmann reactor-physics transport, equivariant ML (Section~\ref{sec:cross-domain}), and relational query optimisers (Section~\ref{subsec:third-domain}); the last exercises the relational-equivalence block, whose algebraic skeleton differs from the Lie-group / self-adjoint / time-reversal core. A small-scale comparative case study, a DeepCrime-style real-fault pilot ($n=5$, underpowered for $\alpha=0.05$), and a cross-codebase pilot replication on Apache Commons Math ($n=3$ SUTs, $77$ mutants) report data consistent with the structural-coverage prediction within case-study scope; the case study's category-(iv) detection contrast exhibits construct validity rather than averaged superiority. A larger comparative evaluation against three SOTA representatives (GP-evolved, LLM-assisted, mining-based) is reported as a pre-registered protocol.
\end{itemize}

\paragraph{Scope of contribution.} This is a theoretical paper, and its contribution is systematisation rather than deduction from first principles. The eight blocks are curated by inspecting mathematical structures that recur across the program families we have studied; they are not derived from an algebraic axiom. The framework therefore has two layers: an \emph{upstream layer} (curating $\mathcal{A}_P$ and its block decomposition) that remains empirical and human, and a \emph{downstream layer} (mechanically deriving $\mathbb{M}(\mathcal{A}_P)$ from $\mathcal{A}_P$) that is algorithmic and provable. We do not claim to have eliminated induction from MetaPattern discovery. We move induction one level up, from ``what MetaPatterns recur in observed MR samples?'' to ``what algebraic structures recur in the program families practitioners care about?'', and make the downstream step mechanical. The engineering payoff of this re-grounding awaits empirical follow-up work; comparative evaluation against existing automated MR-identification pipelines is reported as a protocol in Section~\ref{subsec:case-study} and constitutes part of the framework's resubmission obligations.

\begin{tcolorbox}[breakable,colback=gray!5,colframe=black!50,arc=2pt,boxrule=0.5pt,fontupper=\small,title=Boundary of contribution,fonttitle=\small\bfseries]
This paper establishes:
\begin{enumerate}[leftmargin=*,nosep]
  \item Algebraic closure of $\mathbb{M}(\mathcal{A}_P)$ under \texttt{Translate} over the algebra-induced MR space $\mathrm{MR}(\mathcal{A}_P)$ for the operator algebras stated, given a block decomposition (Theorem~\ref{thm:closure}; closure is by-construction within the explicit scope of Definition~\ref{def:alg-induced}, see Remark~\ref{rem:scope}). Closure over the strictly larger space of arbitrary properties expressible in $\mathcal{A}_P$ is Theorem~$1'$, falsified on $\mathcal{A}_{\mathrm{PWR}}$ (see item (a) below);
  \item Polynomial-time decidability of CONSTRUCT-MP under explicit complexity assumptions (Theorem~\ref{thm:decidable});
  \item Three non-vacuous instantiations: Boltzmann reactor physics (\S\ref{sec:reactor}), equivariant ML (\S\ref{sec:cross-domain}), and relational query optimisers (\S\ref{subsec:third-domain}); the relational-equivalence block $\mathcal{B}^{*}_{\mathrm{rel}}$ extends the framework beyond the Lie-group / self-adjoint / time-reversal mathematical core.
\end{enumerate}

It does \emph{not} establish:
\begin{enumerate}[leftmargin=*,nosep]
  \item[(a)] Absolute completeness over arbitrary properties expressible in $\mathcal{A}_P$. This is Theorem~1$'$ (Conjecture~\ref{conj:absolute}, Appendix~\ref{app:proofs}); \S\ref{subsec:negative-pwr} establishes that it is false on the PWR core diffusion algebra $\mathcal{A}_{\mathrm{PWR}}$ via two independent counterexamples (non-additivity of rod-bank reactivity worth, and second-order mixed dependence of $k_{\mathrm{eff}}$ on moderator temperature and boron concentration), identifying five structural obstructions in \texttt{Translate}'s signature whose pairwise independence is proved by per-block exhaustion (Appendix~\ref{app:negative-proofs}). Companion surveys on $\mathcal{A}_{\mathrm{equi}}$ and $\mathcal{A}_{\mathrm{rel}}$ (\S\ref{subsec:third-domain}, companion artefacts in \texttt{theory/}) identify five further candidate dimensions on those algebras (so the falsification is not PWR-specific). Pairwise independence on the candidate five is asserted by inspection rather than by formal exhaustion proof; full per-dimension exhaustion proofs on $\mathcal{A}_{\mathrm{equi}}$ and $\mathcal{A}_{\mathrm{rel}}$ are committed as follow-up. The question of whether a Composite-\texttt{Translate} extension absorbs the ten \texttt{Translate}-extension dimensions identified across the three algebras while preserving Theorem~\ref{thm:closure}'s closure and Theorem~\ref{thm:decidable}'s polynomial-time decidability remains open.
  \item[(b)] Sufficiency of the eight-block list. Hypothesis~\ref{hyp:seven-blocks} is an empirical hypothesis with six enumerated out-of-scope program-family classes (Remark~\ref{rem:counterex}); each such class is a candidate ninth block.
  \item[(c)] Superiority over existing automated MR-identification pipelines on average. The comparative evaluation in \S\ref{subsec:case-study} establishes effects for specific defect categories on small benchmarks; it does not characterise the framework's behaviour on arbitrary defect distributions.
  \item[(d)] Elimination of induction from MetaPattern discovery. Induction is \emph{relocated} from MR-instance level to algebra-block level, not eliminated.
\end{enumerate}
\end{tcolorbox}

The remainder of the paper is organised as follows. Section~\ref{sec:related} surveys the four lines of prior work; Section~\ref{sec:noether-framework} presents the NOETHER framework, including operator-algebraic preliminaries (\S\ref{sec:prelim}), the construction algorithm and two theorems (\S\ref{sec:framework}), three structurally distinct instantiations on Boltzmann reactor physics (\S\ref{sec:reactor}), equivariant ML (\S\ref{sec:cross-domain}), and relational query optimisers (\S\ref{subsec:third-domain}), and the falsification of the absolute-completeness conjecture (\S\ref{subsec:negative-pwr}); Section~\ref{sec:empirical-evaluation} reports the empirical evaluation against five research questions including the cross-domain case study (\S\ref{subsec:case-study}), the $\mathcal{L}^{*}$-blindness prediction (\S\ref{sec:empirical-vs-sota}), and head-to-head comparisons against GenMorph (\S\ref{subsec:pooled-headtohead}) and METRIC+ (\S\ref{subsec:metricplus-relationship}); Section~\ref{sec:threats-limitations} discusses threats and limitations; Section~\ref{sec:conclusion} concludes. Appendix~\ref{app:proofs} contains the proofs of Theorems~\ref{thm:closure},~\ref{thm:decidable},~$1'$ (including the per-block exhaustion on $\mathcal{A}_{\mathrm{PWR}}$). Supplementary~S1--S9 carry illustrative material migrated for length, including NOETHER on the remaining reactor equations (S9, A), per-MR provenance (S9, B), the worked Boltzmann CONSTRUCT-MP enumeration (S9, C7), a Python reference implementation (S9, D), and the construct-trace consistency check (S9, E).

\section{Background and related work}
\label{sec:related}

We organise prior work along four lines: MT/MR fundamentals and the long-standing identification bottleneck; structured MR identification through METRIC and METRIC+; automated MR identification, including MR-Scout, GenMorph, and LLM-assisted approaches; and MetaPattern catalogues, including our prior reactor-physics taxonomy. These lines differ in mechanism, but they share one limitation: their pattern structures are induced from observed MRs rather than derived from the mathematical structure that makes those MRs hold.

\subsection{Metamorphic testing and the MR identification bottleneck}

Metamorphic Testing was introduced by Chen, Cheung, and Yiu in 1998 to address the test-oracle problem~\cite{Chen1998}. An MR is a logical implication of the form $R_i(x_1,\dots,x_n) \Rightarrow R_o(P(x_1),\dots,P(x_n))$, where $P$ is the program under test, $R_i$ is an input relation, and $R_o$ is an output relation. When $R_i$ holds but $R_o$ fails on actual executions, a fault is reported, without any need to know the absolute correct output of any single execution. Over two decades MT has become standard equipment for testing systems whose oracles are otherwise inaccessible: scientific computing, machine learning classifiers, autonomous vehicles, search engines, compilers, and large language models~\cite{Segura2016, LiTOSEM2025}.

The community has long acknowledged a single binding constraint on MT's effectiveness: MR identification. Surveys spanning twenty years agree that identifying high-quality MRs requires (i) deep familiarity with the program's functional semantics, (ii) substantive domain background (physical, mathematical, or linguistic, depending on the system), and (iii) the ability to convert that background into executable property assertions~\cite{Segura2016, LiTOSEM2025}. Liu et al.~\cite{Liu2014MTEffectiveness} provide the canonical empirical evidence that even ad-hoc MRs detect a substantial fraction of mutants given modest tester training, while Murphy et al.~\cite{Murphy2008} catalogued the structural properties of ML applications relevant to MR design and Xie et al.~\cite{Xie2011} demonstrated that MT scales effectively to supervised classifiers. Testers without sufficient domain background tend to identify only ``trivial MRs''. A 2024 survey explicitly identifies AI assistance, especially via large language models, as the most promising open avenue for closing the identification gap~\cite{LiTOSEM2025}; Saha and Kanewala~\cite{Saha2019SupervisedMR} report that, on a 709-mutant benchmark for supervised classifiers, current MR sets detect only 14.8\%, which exposes the gap.

Beyond difficulty, the MR identification bottleneck has a more troubling structural feature: even when MRs are identified, the field has accumulated little consensus on \emph{how} they are identified. Different authors confront the same program and produce dissimilar MR sets. Once written, an MR rarely migrates: a relation drafted for one ML classifier seldom transports to another, and a relation drafted for one numerical solver seldom transports across solver families.

\subsection{Structured MR identification: METRIC and METRIC+}

Structured MR identification has already moved the field away from ad hoc relation design. METRIC organises MR construction around an ``input/output category'' framework: the tester first identifies relevant categories of input transformations and output relations, then composes MRs from category pairs~\cite{ChenMETRIC2016}. METRIC+ extends this scheme by enriching the category catalogue and providing systematic combination rules that reduce the human burden in category enumeration~\cite{SunMETRICplus2021}. Both approaches are widely cited and remain the strongest existing attempt to give MR identification an explicit scaffold.

Our disagreement is not with this direction but with its grounding. The categories are introduced through expert curation and validated by empirical coverage on benchmark programs. This leaves two of our three foundational questions unanswered. \emph{Origin}: METRIC and METRIC+ do not derive their categories from program-level mathematical structure. \emph{Closure}: neither framework provides a mathematical condition under which the category set is guaranteed to be complete; coverage is reported but not proved. The third question, \emph{transferability}, is only partially addressed. The same category templates can be invoked across domains, but because the categories are not algebraically bound to specific program structures, transfer rests on an assumption of universality that remains open.

\subsection{Automated MR identification}

Automated MR identification attacks the bottleneck by searching for executable relations more directly. MR-Scout mines MRs from existing test suites by extracting input-transformation and output-assertion patterns from test-case pairs and abstracting them into reusable relations~\cite{MRScout2024}. GenMorph evolves MR candidates through genetic programming, co-evolving input transformations with output assertions and using mutation-killing as the fitness signal~\cite{GenMorph2024}. Shin et~al.\ derive executable MRs from natural-language requirements through few-shot prompting of a large language model, with validation through an industrial questionnaire study with Siemens~\cite{Shin2024}; further LLM-assisted variants, including ChatGPT-driven MR generation~\cite{ZhangChatGPTMR2023}, domain-customised GPTs for autonomous-driving simulators~\cite{GPTMR2025}, and multi-agent retrieval-augmented pipelines for traffic-rule MRs~\cite{AutoMT2025}, extend this template into safety-critical or rule-rich domains. Within ML-system testing, differential-testing approaches such as DeepXplore~\cite{DeepXplore2017} use input-transformation invariance as a proxy oracle, complementary to the explicit MR formulation we adopt here.

A second strand uses program-structure features rather than test-case mining. Kanewala et al.~\cite{Kanewala2016GraphKernel} predict MRs of scientific software via graph-kernel learning over control-flow and data-dependency graphs, an early demonstration that program structure carries enough signal for automated MR discovery. Nolasco et al.'s MemoRIA~\cite{Nolasco2024MemoRIA} infers MRs by deriving an object-protocol abstraction and validating candidates through fuzzing plus SAT-based reduction, then evaluates on 22 Java subjects. Tao et al.'s Mettoc~\cite{Tao2010Mettoc} applied MT to compiler testing using equivalence-preservation as the canonical relation. Ying et al.~\cite{Ying2025MRPatterns} formalise relationships between metamorphic-relation patterns into family trees, and Altamimi et al.~\cite{Altamimi2022MRSLR} provide a recent systematic literature review of MR-automation work that catalogues the structural-vs-empirical-search divide our framework targets. Earlier LLM-assisted attempts include Zhang et al.~\cite{ZhangChatGPTMR2023} on ChatGPT-driven MR generation.

Each of these methods advances automation on a different axis, but all treat the MR space as an empirical search space. MR-Scout's recall is bounded by the relations already latent in existing tests. GenMorph's evolutionary search is shaped by the fitness landscape induced by mutation killing. LLM-prompted approaches are sensitive to prompt phrasing when no structural prior constrains the generation. None of these methods can state, before search begins, which MR types lie outside its reach, because none has an algebraic account of the space being searched.

\subsection{MetaPattern catalogues and empirical adequacy}

MetaPattern catalogues organise the post-hoc inventory of identified MRs into a smaller vocabulary of recurring strategies. Such catalogues typically result from clustering observed MRs, codifying expert intuitions about ``kinds of properties testers reason about'', and validating the catalogue's coverage on a benchmark MR corpus. Reactor-physics test suites have produced taxonomies of conservation, monotonicity, convergence, trajectory, and partial-order patterns~\cite{BellGlasstone1970, LewisMiller1993}; Zhou et al.~\cite{Zhou2020SymmetryMRP} introduced an explicit \emph{symmetry metamorphic relation pattern} as a reusable abstraction for deriving concrete MRs across applications; Ying et al.~\cite{Ying2025MRPatterns} formalises the relationships between MR patterns into \emph{family trees} with explicit refinement and specialisation edges, and reports a SOTA pattern hierarchy that organises previously proposed patterns into a single classification structure. The most recent state-of-the-art survey of MR generation~\cite{LiTOSEM2025} maps the field's twenty-year output across pattern catalogues, mining, evolutionary search, and LLM-prompted methods. Adequacy frameworks such as the Pattern--Matrix Coverage Metric assess how thoroughly a given MR set occupies the pattern space.

Ying et al.'s family-tree formalism~\cite{Ying2025MRPatterns} is the closest published cousin to NOETHER's MetaPattern equivalence-class structure: both organise MR patterns into a hierarchy with explicit relationships between patterns. The two formalisms differ in the relationship type. Ying et al.'s family tree is a \emph{refinement/specialisation tree} rooted in informally-named pattern categories (symmetry, additive, multiplicative, etc.); NOETHER's MetaPattern equivalence classes are \emph{quotients of the algebra-induced MR space} $\mathrm{MR}(\mathcal{A}_P)$ under structural equivalence (Definition~\ref{def:alg-induced}), with each class derived from a specific block of $\mathcal{D}(\mathcal{A}_P)$. A family-tree node in Ying et al.\ typically corresponds to one or more NOETHER MetaPatterns when the node admits an operator-algebraic specification (e.g.\ Ying et al.'s ``symmetry'' parent node decomposes into the $G$-block MetaPattern $m_{\mathrm{inv}}$ for finite-group symmetries plus the $T^{*}$-block MetaPattern $m_{\mathrm{adj}}$ for self-adjoint dualities under NOETHER's eight-block decomposition); conversely, NOETHER's $\mathcal{B}^{*}_{\mathrm{rel}}$ MetaPattern has no direct counterpart in the family-tree formalism because relational-algebra equivalences were not in Ying et al.'s benchmark set. The two formalisms are complementary: Ying et al.\ catalogues patterns inductively at the MR-instance level; NOETHER provides an algebraic warrant for the existence of each pattern, given the operator algebra. A unified family-tree-plus-algebra-block reading of the literature would map each family-tree node to (at most) one NOETHER block when the node has an algebraic source, and would flag family-tree nodes without algebraic source as candidates for the ninth-block list of Remark~\ref{rem:counterex}.

Four further pattern-catalogue or MR-grammar candidates raised in peer review (Hu et~al.\ 2019; Mariani 2018; Liu et~al.\ 2020; Lin 2020) could not be located through the standard fallback chain (CrossRef, OpenAlex, DBLP, Semantic Scholar, Google Scholar) under the citation profiles provided, and are therefore not cited; the closest verifiable cousins that this revision does add are Zhou et al.~\cite{Zhou2020SymmetryMRP} and Ying et al.~\cite{Ying2025MRPatterns}.

Such catalogues are useful, but each is, by construction, an empirical artefact. None of the existing catalogues, including the present authors' own, can answer the three foundational questions of Section~\ref{sec:intro}.

\subsection{Convergent diagnosis}

Across all four lines of work, two problems recur. \emph{Unbounded MR emergence}: every new domain or new program produces fresh MR formulations. \emph{Poor MR reusability}: identified MRs do not transport readily across programs that share deep structural similarities. Both problems follow from grounding MetaPatterns inductively. NOETHER replaces that inductive grounding with operator-algebraic grounding.

\paragraph{Comparators in the head-to-head: what is compared and why.}
\label{para:comparators-and-why}
The empirical comparison in \S\ref{sec:empirical-vs-sota} runs Set~N
against a single executable baseline: Set~G (GenMorph's GP-evolved
MRs at the published 30-min GAssert budget,~\cite{GenMorph2024}). The
other three SOTA categories enumerated in the related-work survey
contribute differently: METRIC$+$~\cite{SunMETRICplus2021} is contrasted
as a category-enumeration scaffold (\S\ref{para:metricplus-sorting},
\S\ref{para:metricplus-headtohead-small}) rather than as an executable
fault-detection pipeline, because no public METRIC$+$ implementation
auto-generates MRs from a $\mathcal{A}_{P}$ specification --- a full
PIT-based METRIC$+$ vs Set~N comparison is committed as supplementary~S4 (\texttt{future\_work.md})
item~(i); MR-Scout~\cite{MRScout2024} is omitted from the executable
head-to-head because its mining input (a pre-existing test corpus) is
structurally absent on the algebra-rich Java substrate of
\S\ref{subsec:test-design} (the SUTs are stand-alone mathematical
methods with synthesised test inputs, not codebases with developer-written
tests latent in test files); AutoMT~\cite{AutoMT2025} and
GPTMR~\cite{GPTMR2025} target safety-critical / traffic-rule autonomous
driving domains rather than the operator-algebraic substrate, so an
in-scope comparator selection from the LLM-assisted line uses an
unprompted-by-NOETHER LLM ensemble (Set~L; \S\ref{subsec:case-study},
\S\ref{para:set-l-ensemble}) rather than these domain-specialised
pipelines. Set~B is a literature-MR baseline drawn from MT-for-ML
references~\cite{Segura2016, Shin2024, AutoMT2025}, restricted to
point-cloud-classifier-applicable MRs. The single executable
head-to-head against Set~G is therefore the GP-evolved-baseline arm
of a three-SOTA-category protocol; the LLM-assisted and mining-based
arms are reported with their proper comparator (Set~L ensemble at
$2$ vendors $\times$ $5$ temperatures, \S\ref{para:set-l-ensemble}, and
the MR-Scout structural-absence rationale above), so the protocol's
three-category framing is preserved at the price of using different
arms in different sections.


\section{The NOETHER framework}
\label{sec:noether-framework}

This section presents NOETHER as a self-contained theoretical contribution. Operator-algebraic preliminaries and the eight-block decomposition appear in \S\ref{sec:prelim}; the algebra-induced metamorphic relation space, the CONSTRUCT-MP construction, the algebraic closure theorem (Theorem~\ref{thm:closure}), and the polynomial-time decidability theorem (Theorem~\ref{thm:decidable}) in \S\ref{sec:framework}. The framework is then instantiated on three structurally distinct operator algebras: Boltzmann reactor physics in \S\ref{sec:reactor}, equivariant machine learning in \S\ref{sec:cross-domain}, and relational query optimisers in \S\ref{subsec:third-domain}. The section concludes by falsifying the strictly stronger absolute-completeness conjecture, Theorem~$1'$, on a fourth program family (PWR core diffusion) via two pairwise-independent counterexamples (\S\ref{subsec:negative-pwr}).

\subsection{Operator-algebraic preliminaries}
\label{sec:prelim}

This section introduces the algebraic apparatus used by NOETHER. We assume basic familiarity with group actions, equivalence classes, and quotient sets, but not with functional analysis. Each construct is defined at the level needed for the framework and later instantiated in reactor physics and equivariant machine learning.

\subsubsection{Programs and program-induced operator algebras}

Throughout the paper we treat a program $P$ as a (possibly partial) computable function $P: \mathcal{X} \to \mathcal{Y}$. The class of programs of interest belongs to a \emph{program family} $\mathcal{F}$ whose members share an underlying mathematical scaffolding.

\begin{definition}[Program-induced operator algebra]
Let $P$ belong to a program family $\mathcal{F}$. A \emph{program-induced operator algebra} of $\mathcal{F}$ is a tuple
$$\mathcal{A}_{\mathcal{F}} = \bigl(\mathcal{O}, \;\circ, \;\sim_{\mathcal{F}}\bigr),$$
where $\mathcal{O}$ is a set of operators acting on $\mathcal{X}$, $\mathcal{Y}$, or both; $\circ$ is an operator composition; and $\sim_{\mathcal{F}}$ is an equivalence relation declaring two operator expressions equal whenever they agree on every program of $\mathcal{F}$.
\end{definition}

\subsubsection{Symmetry groups (building block B1)}

\begin{definition}[Symmetry group of $\mathcal{A}_P$]
A \emph{symmetry group} of $\mathcal{A}_P$ is a subgroup $G \le \mathcal{A}_P$ whose elements act on $\mathcal{X}$ such that, for all $P \in \mathcal{F}$ and all $g \in G$,
$$P(g \cdot x) = \rho(g) \cdot P(x) \quad \forall x \in \mathcal{X},$$
where $\rho: G \to \mathrm{End}(\mathcal{Y})$ is a (possibly trivial) representation of $G$ on $\mathcal{Y}$.
\end{definition}

\subsubsection{Order operators: monotonicity and linearity (building block B2)}

\begin{definition}[Monotone operator]
$P$ is \emph{monotone with respect to} $\theta$ if $\theta_1 \le \theta_2 \Rightarrow P(\theta_1) \le_{\mathcal{Y}} P(\theta_2)$ (or the reversed inequality, in which case $P$ is \emph{anti-monotone}).
\end{definition}

\begin{definition}[Linear operator]
$P$ is \emph{linear} on $\mathcal{X}_0 \subseteq \mathcal{X}$ when $P(\alpha x_1 + \beta x_2) = \alpha P(x_1) + \beta P(x_2)$ for all $x_1, x_2 \in \mathcal{X}_0$ and scalars $\alpha, \beta$.
\end{definition}

\subsubsection{Self-adjoint operators (building block B3)}

\begin{definition}[Self-adjoint operator]
Given an inner product $\langle\cdot,\cdot\rangle$ on $\mathcal{X}$, an operator $L \in \mathcal{O}$ is \emph{self-adjoint} if $\langle Lx, y\rangle = \langle x, Ly\rangle$ for all $x, y$ in the domain of $L$.
\end{definition}

Self-adjointness encodes a duality between the two arguments of an inner product. Reciprocity theorems in physics, transposed-graph identities in algorithms, and detailed-balance conditions in stochastic processes are all instances. In reactor physics, the adjoint transport formulation is self-adjoint under the appropriate inner product, yielding source-detector reciprocity.

\subsubsection{Time-reversal operators (building block B4)}

\begin{definition}[Time-reversal operator]
When $\mathcal{X}$ admits a time coordinate, a \emph{time-reversal operator} $\mathcal{T}$ acts on inputs by reversing the time variable, $\mathcal{T}(x(t)) = x(-t)$. A program $P$ is \emph{time-reversal symmetric} on a sub-family of inputs when $P(\mathcal{T} x)$ is determined by $P(x)$ through a fixed bijection on $\mathcal{Y}$.
\end{definition}

Time-reversal applies only where the underlying dynamics admit a reversible sub-family. Dissipative dynamics break this symmetry, so the corresponding MetaPattern is empty for those systems.

\subsubsection{Limit operators (building block B5)}

\begin{definition}[Limit operator]
A \emph{parametrised limit operator} $\mathcal{L}_\theta$ is a family of operators indexed by a parameter $\theta$ such that there exists a limit element $\mathcal{L}_*$ with $\mathcal{L}_\theta \to \mathcal{L}_*$ as $\theta \to \theta_*$ in an appropriate operator topology.
\end{definition}

\subsubsection{Qualitative-dynamics operators (building block B6)}

\begin{definition}[Qualitative-dynamics operator]
A \emph{qualitative-dynamics operator} $\mathcal{D}$ is an operator on solution trajectories of an underlying ODE/PDE that extracts qualitative features --- extrema, inflection points, monotonic phases, overshoot magnitudes, S-curve transitions, phase-portrait orbits --- that are invariant under perturbations preserving the underlying dynamical structure (Sturm-type comparison theorems, dynamical-systems classification).
\end{definition}

Some MR-relevant invariants are not point-wise but \emph{shape-wise}: a solution curve may have an overshoot, a single extremum, or a monotone-then-saturating profile. B6 gives these qualitative-dynamics MRs their own algebraic root.

\subsubsection{Method-comparison operators (building block B7)}

\begin{definition}[Method-comparison operator]
A \emph{method-comparison operator} $\mathcal{E}$ encodes a partial order $\preceq_{\mathcal{E}}$ on numerical or algorithmic methods, where $M_1 \preceq_{\mathcal{E}} M_2$ asserts that method $M_1$ produces an approximation no worse than method $M_2$ in a specified error norm, under specified conditions.
\end{definition}

\subsubsection{Decomposition of an operator algebra}
\label{subsec:decomposition}

Given $\mathcal{A}_P = (\mathcal{O}, \circ, \sim_{\mathcal{F}})$, we decompose $\mathcal{O}$ along the eight building blocks introduced above:
$$\mathcal{D}(\mathcal{A}_P) = \bigl(\,G,\; O_{\le},\; T^{*},\; \mathcal{T}^{*},\; \mathcal{L}^{*},\; \mathcal{D}^{*},\; \mathcal{E}^{*},\; \mathcal{B}^{*}_{\mathrm{rel}}\,\bigr),$$
where $G$ collects symmetry subgroups, $O_{\le}$ monotone and linear operators, $T^{*}$ self-adjoint operators, $\mathcal{T}^{*}$ time-reversal operators, $\mathcal{L}^{*}$ limit operators, $\mathcal{D}^{*}$ qualitative-dynamics operators, $\mathcal{E}^{*}$ method-comparison operators, and $\mathcal{B}^{*}_{\mathrm{rel}}$ relational-equivalence operators.

\begin{definition}[Relational-equivalence block]
\label{def:b-rel}
A \emph{relational-equivalence operator} is a binary relation $\equiv_{\mathcal{R}}$ on expressions over an idempotent semiring $(\mathcal{S}, \oplus, \otimes, \mathbf{0}, \mathbf{1})$ such that $E \equiv_{\mathcal{R}} E'$ iff $E$ and $E'$ are equal under all valid evaluation contexts of $\mathcal{S}$, generated by a finite set $\mathcal{R}_{\mathrm{rel}} = \{\mathcal{R}_1, \dots, \mathcal{R}_K\}$ of identity-preserving rewriting rules, each $\mathcal{R}_i$ an ordered pair $(\mathrm{lhs}_i, \mathrm{rhs}_i)$ of semiring expressions with the property $\mathrm{eval}(\mathrm{lhs}_i, D) =_{\mathrm{bag}} \mathrm{eval}(\mathrm{rhs}_i, D)$ on every valid input $D$. The block $\mathcal{B}^{*}_{\mathrm{rel}}$ collects all such operators. It is empty for program families without idempotent-semiring rewriting structure (Boltzmann reactor physics, equivariant ML) and non-empty for those with it (relational query optimisers, \S\ref{subsec:third-domain}).
\end{definition}

\paragraph{Necessity, sufficiency, and the empirical status of the decomposition.} We do not claim that the eight blocks are necessary in an absolute sense, that they exhaust all algebraic structures relevant to software testing, or that they follow from first principles. They are an \emph{empirical curation}: a by-inspection enumeration of mathematical structures that recur across the program families we have studied. The claim is therefore scoped: the eight blocks are currently sufficient for these families, not provably necessary in general. This upstream empirical status is NOETHER's main limitation as a theoretical framework. Induction has not been eliminated from MetaPattern discovery; it has been moved one level up, from ``what MetaPatterns recur in observed MR samples?'' to ``what algebraic structures recur across program families?'' Given such a curated decomposition, the downstream derivation of $\mathbb{M}(\mathcal{A}_P)$ becomes mechanical and provably closed in the sense of Theorem~\ref{thm:closure}. We state the eight-block sufficiency as an explicit empirical hypothesis with a documented out-of-scope catalogue.

\begin{hypothesis}[Eight-block sufficiency]
\label{hyp:seven-blocks}
Let $P$ be a program belonging to a program family $\mathcal{F}$ whose semantics admit an operator-algebraic representation $\mathcal{A}_P$ over (i) finite or finite-dimensional symmetry groups, (ii) partial orders with monotone or linear operators, (iii) self-adjoint operators with respect to a fixed inner product, (iv) anti-unitary involutions of a time coordinate, (v) limit operations of analytical (mesh-refinement, parameter-perturbation, or asymptotic) type, (vi) qualitative-dynamics attractors of an underlying ODE/PDE, (vii) inter-method comparison partial orders, and (viii) identity-preserving rewriting rule sets on an idempotent semiring. For such families the eight-block decomposition
$$\mathcal{D}(\mathcal{A}_P) = G \,\dot\cup\, O_{\le} \,\dot\cup\, T^{*} \,\dot\cup\, \mathcal{T}^{*} \,\dot\cup\, \mathcal{L}^{*} \,\dot\cup\, \mathcal{D}^{*} \,\dot\cup\, \mathcal{E}^{*} \,\dot\cup\, \mathcal{B}^{*}_{\mathrm{rel}}$$
is sufficient: every operator in $\mathcal{A}_P$ relevant to MR derivation is assigned to at least one block. Hypothesis~\ref{hyp:seven-blocks} is an empirical hypothesis open to refutation.
\end{hypothesis}

\begin{remark}[Block sufficiency vs.\ \texttt{Translate} sufficiency]
\label{rem:block-vs-translate}
Hypothesis~\ref{hyp:seven-blocks} asserts that the eight blocks suffice to \emph{assign} every operator in $\mathcal{A}_P$ relevant to MR derivation. It does not assert that every MR formulable over $\mathcal{A}_P$'s operators is reachable through the framework's \texttt{Translate} operator from a single block invariant. Section~\ref{subsec:negative-pwr} exhibits, on the PWR core diffusion algebra $\mathcal{A}_{\mathrm{PWR}}$, two MRs whose constituent operators are individually assigned to blocks of $\mathcal{D}(\mathcal{A}_{\mathrm{PWR}})$ (rod-insertion semigroup; boration and moderator-temperature scaling), yet whose MR content is not in $\mathrm{MR}(\mathcal{A}_{\mathrm{PWR}})$ in the sense of Definition~\ref{def:alg-induced}. The block decomposition is therefore a necessary but not sufficient input to MetaPattern construction: \texttt{Translate}'s expressive form (single-block invariants, first-order $\pi$-template, single partial-order direction, operating on $P(x)$ tuples rather than on operator-spectrum quantities) is a second, independent constraint on the framework's reach. This distinction between ``block sufficiency'' (Hypothesis~\ref{hyp:seven-blocks}) and ``\texttt{Translate} sufficiency'' (open) is made explicit in \S\ref{subsec:negative-pwr}.
\end{remark}

\begin{remark}[Known and conjectured out-of-scope program families]
\label{rem:counterex}
The hypothesis is open to refutation. Six families are known or conjectured to fall outside its image and to require an additional block:
\begin{enumerate}
\item \emph{Symplectic systems.} Hamiltonian dynamics whose volume-preserving structure is captured by a symplectic 2-form rather than self-adjointness ($N$-body simulators, celestial-mechanics integrators).
\item \emph{Sheaf-theoretic / categorical constructions.} Programs whose semantics are captured by functors and natural transformations between categories (certified compilers, type-driven program transformations); MRs of the form ``commutativity of a square in the underlying category'' do not reduce to single-block invariants under the present \texttt{Translate}.
\item \emph{Probabilistic / martingale invariants.} MRs of the form ``$\mathbb{E}[f(\mathbf{x}_t) \mid \mathcal{F}_s] = f(\mathbf{x}_s)$ on a stopping-time domain'' arise in MCMC samplers and stochastic gradient analyses.
\item \emph{Topological invariants.} MRs of the form ``$f$'s level sets have a fixed Betti-number signature'' (topological data analysis, shape classifiers) require homological structure absent from the present blocks.
\item \emph{Label-consistency for supervised learning.} MRs that constrain a model's predictions against ground-truth labels (e.g.\ wrong-sign-loss detection on classifiers); the absence of a label-consistency operator is one reason the §\ref{subsec:case-study} case study reports zero detection on category-(i) wrong-sign mutations.
\item \emph{Empirical parameter-distribution divergence.} A probability-distribution divergence operator on parameter-space measures, motivated by the §\ref{subsec:deepcrime-pilot} pilot's category-v-02/04/05 (activation change, bias removal, weight re-init) which the eight blocks do not detect. An MR of the form $D_{\mathrm{KL}}(p_\theta \| p_{\theta+\Delta\theta}) \le \tau$ for $\|\Delta\theta\| \le \epsilon$ would lie in such a block. \emph{Empirical witness}: \S\ref{subsec:deepcrime-pilot}'s cat-v-02/04/05 mutations are not detected by any MR set including Set~N; this is that pilot's principal negative finding for Hypothesis~\ref{hyp:seven-blocks}.
\end{enumerate}
Each family signals a candidate ninth block. The framework's intended response is to admit the additional block when the program family demands it, rather than to extend $\mathcal{B}^{*}_{\mathrm{rel}}$'s scope or to absorb the case as ``out of scope for MR testing'' in some over-broad sense. Concrete examples of MRs that lie outside Theorem~\ref{thm:closure}'s scope are catalogued in Appendix~\ref{app:out-of-scope}; one candidate ninth block (metric-stability $M_{\mathrm{lip}}$) has an explicit \texttt{Translate}-template design (Remark~\ref{rem:metric-stability-block} below).
\end{remark}

\begin{remark}[Domain-level out-of-scope]
\label{rem:domain-out-of-scope}
Beyond the candidate-ninth-block families of Remark~\ref{rem:counterex} (which signal extensions \emph{within} the framework's general programme), the following program-family classes are out of scope for NOETHER in the present form because they admit no operator-algebraic representation at all:
\begin{itemize}
    \item \emph{General web applications} (HTTP request/response programs, business-rule engines, content-management systems) whose semantics are state-machine or rule-driven rather than equational.
    \item \emph{RLHF reward models and large-language-model agents} whose relevant invariants are statistical/measure-theoretic rather than algebraic (calibration, refusal consistency, sycophancy).
    \item \emph{Distributed-consensus protocols} (Paxos / Raft / Byzantine variants) whose semantics are CRDT- or message-trace-based; the relevant invariants are linearisability and convergence rather than operator-algebraic.
    \item \emph{Compiler-internal optimisations} (loop transformations, register allocation, instruction scheduling) whose correctness is expressed as a refinement relation on an operational semantics rather than as a property over a fixed input/output algebra.
\end{itemize}
These classes are not candidate ninth blocks; they are domains where the framework's scope precondition (program family admits an explicit operator-algebraic description) is structurally absent. NOETHER's reach is therefore consistent with what its scope precondition explicitly requires: SUTs from textbook-codified mathematical, physical, or relational-algebra domains are within reach, and SUTs from general software-engineering domains without such codification are not.
\end{remark}

\begin{remark}[Metric-stability candidate ninth block, $M_{\mathrm{lip}}$]
\label{rem:metric-stability-block}
Among the candidate ninth blocks of Remark~\ref{rem:counterex}, the metric-stability block has the most concrete \texttt{Translate}-template proposal. Define $M_{\mathrm{lip}} = (\mathcal{X}, d_{\mathcal{X}}) \to (\mathcal{Y}, d_{\mathcal{Y}})$ as the family of $K$-Lipschitz maps between two metric spaces; the induced MetaPattern $m_{\mathrm{lip}}$ is the equivalence class of pointwise stability inequalities $d_{\mathcal{Y}}(P(x'), P(x)) \le K \cdot d_{\mathcal{X}}(x', x)$. The corresponding \texttt{Translate} template constructs a follow-up by metric perturbation $x' = x + \varepsilon u$ ($\|u\| = 1$, $|\varepsilon| < \delta$) and predicates a $K$-Lipschitz output bound. Canonical-block ordering would place $M_{\mathrm{lip}}$ after $\mathcal{B}^{*}_{\mathrm{rel}}$ since metric structure is independent of the eight existing block invariants (no group action on the perturbation set, no partial order on inputs, no self-adjoint operator, no parametric refinement family, no idempotent-semiring rewriting). Theorem~\ref{thm:closure}'s closure proof transfers without modification because $m_{\mathrm{lip}}$ is single-block algebraically derived. We do not commit to $M_{\mathrm{lip}}$ as part of the canonical decomposition in this paper; we record it in Appendix~\ref{app:out-of-scope} as the most concrete sub-instance of Remark~\ref{rem:counterex} item~(iv) (topological invariants) that admits an immediate \texttt{Translate} template, and as the empirical orphan in the audit of \S\ref{subsec:reactor-mapping}.
\end{remark}

\subsection{The NOETHER framework}
\label{sec:framework}

\subsubsection{Algebra-induced metamorphic relations}

Before defining algebra-induced MRs we make explicit the \texttt{Translate} operator that converts a single block invariant into an executable MR. Without this definition, the closure result of Theorem~\ref{thm:closure} would quantify over an undefined construct.

\begin{definition}[Block invariant]
\label{def:block-invariant}
Let $s \in \mathcal{D}(\mathcal{A}_P)$ be one of the eight blocks. A \emph{block invariant of $\mathcal{A}_P$ under $s$} is a pair $\iota = (\Phi, \pi)$ where $\Phi$ is a finite set of operators drawn from $s$, and $\pi$ is a relation $\pi \subseteq (\mathcal{X} \times \mathcal{Y})^k$ for some arity $k \ge 1$ such that for every $P \in \mathcal{F}$ and every choice of operators $\phi_1, \dots, \phi_n \in \Phi$, the tuple $\bigl((x_i, P(x_i))\bigr)_{i=1}^k$ obtained by applying $\phi_1, \dots, \phi_n$ to a base input $x \in \mathcal{X}$ in the canonical order specified by $s$ satisfies $\pi$. We write $\mathcal{I}_s$ for the set of block invariants under $s$ and $\sim_s$ for the equivalence relation $\iota \sim_s \iota'$ iff $\Phi = \Phi'$ and $\pi$, $\pi'$ define the same constraint up to relabelling of input/output coordinates.
\end{definition}

The relation $\sim_s$ is reflexive, symmetric, and transitive by construction (set equality and constraint equality are equivalence relations); hence $\sim_s$ is a genuine equivalence relation on $\mathcal{I}_s$, justifying the quotient in Step 3 of CONSTRUCT-MP below.

\begin{definition}[\texttt{Translate}]
\label{def:translate}
The \texttt{Translate} operator is the function
\[
\mathrm{Translate}: \mathcal{I}_s \times \{s\} \;\longrightarrow\; \mathrm{MR}(P)
\]
that maps a block invariant $\iota = (\Phi, \pi)$ under block $s$ to the metamorphic relation
\[
\rho_{\iota,s}: \quad \forall x \in \mathcal{X},\ \forall \phi_1, \dots, \phi_n \in \Phi : \quad \pi\bigl((x_i,\, P(x_i))_{i=1}^k\bigr) \text{ holds},
\]
where the input tuple $(x_i)_{i=1}^k$ is generated by applying the $\phi_j$ in the canonical order specified by $s$ (the convention is fixed per block: e.g., for $s = G$, $x_i = g_i \cdot x_0$ with $g_i$ enumerated by group orbit; for $s = T^{*}$, the tuple ranges over the two arguments of the inner product). \texttt{Translate} is well-defined because (i) the per-block canonical order resolves any ambiguity in tuple generation, and (ii) $\sim_s$-equivalent invariants are mapped to MRs that are $\sim$-equivalent as logical statements over $P$. Per-block instantiations of \texttt{Translate} are tabulated in Appendix~\ref{app:translate-table}.
\end{definition}

\begin{definition}[Algebra-induced MR]
\label{def:alg-induced}
Let $\mathcal{A}_P$ be a program-induced operator algebra and let $\rho$ be an MR over a program $P$. We say $\rho$ is \emph{induced by $\mathcal{A}_P$} --- written $\rho \in \mathrm{MR}(\mathcal{A}_P)$ --- when there exist (i) a block $s \in \mathcal{D}(\mathcal{A}_P)$, (ii) a block invariant $\iota \in \mathcal{I}_s$, and (iii) the equality $\rho = \mathrm{Translate}(\iota, s)$ holds in the sense of Definition~\ref{def:translate}. Properties of $P$ that constrain executions but cannot be obtained as $\mathrm{Translate}(\iota, s)$ for any single block $s$ and any invariant $\iota \in \mathcal{I}_s$ are \emph{not} algebra-induced in this sense; concrete examples are catalogued in Appendix~\ref{app:out-of-scope}.
\end{definition}

\subsubsection{Construction of the MetaPattern set}

We present the deductive procedure CONSTRUCT-MP that maps an algebra-decomposition to a MetaPattern set.

\begin{description}
  \item[Step 1 --- Invariant extraction.] For each block $s$ in $\mathcal{D}(\mathcal{A}_P)$, compute the set of invariants $\mathcal{I}_s$ of $\mathcal{A}_P$ under $s$.
  \item[Step 2 --- MR derivation.] For each invariant $\iota \in \mathcal{I}_s$, derive the MR family $\mathcal{R}(\iota) = \{\,\rho \in \mathrm{MR}(\mathcal{A}_P) \mid \rho = \mathrm{Translate}(\iota', s),\; \iota' \sim_s \iota\,\}$.
  \item[Step 3 --- Quotient.] Form the MetaPattern $m_s = \mathcal{R}(\iota)/\!\sim_s$.
  \item[Step 4 --- Aggregation.] Return $\mathbb{M}(\mathcal{A}_P) = \{\, m_s : s \in \mathcal{D}(\mathcal{A}_P)\,\}$.
\end{description}

\subsubsection{Algebraic closure under Translate, the canonical-block ordering, and out-of-scope MRs}
\label{subsec:completeness}

\begin{definition}[Canonical-block ordering]
\label{def:canonical-order}
We adopt the strict total order on blocks
$$G > O_{\le} > T^{*} > \mathcal{T}^{*} > \mathcal{L}^{*} > \mathcal{D}^{*} > \mathcal{E}^{*} > \mathcal{B}^{*}_{\mathrm{rel}}.$$
An MR derivable through multiple blocks is assigned to the highest-priority block in this order. The placement of $\mathcal{B}^{*}_{\mathrm{rel}}$ at the bottom reflects its semiring-rewriting nature, which sits algebraically downstream of the seven physical-mathematical blocks: rewriting equivalences typically depend on the program family's input-perturbation, order, and method-comparison structure rather than the converse.
\end{definition}

\begin{theorem}[Algebraic Closure under \texttt{Translate}]
\label{thm:closure}
Let $\mathcal{A}_P$ be a program-induced operator algebra with decomposition $\mathcal{D}(\mathcal{A}_P)$ as in Section~\ref{subsec:decomposition}, and let $\mathbb{M}(\mathcal{A}_P) = \mathrm{CONSTRUCT\textnormal{-}MP}(\mathcal{D}(\mathcal{A}_P))$. Then for every $\rho \in \mathrm{MR}(\mathcal{A}_P)$ in the sense of Definition~\ref{def:alg-induced}, there exists a unique $m \in \mathbb{M}(\mathcal{A}_P)$ such that $\rho \in m$, where uniqueness is determined under the canonical-block ordering.
\end{theorem}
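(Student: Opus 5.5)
The plan is to prove existence and uniqueness separately, both by unwinding Definitions~\ref{def:alg-induced}, \ref{def:translate} and the four steps of \texttt{CONSTRUCT-MP}.

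\emph{Existence.} Fix $\rho \in \mathrm{MR}(\mathcal{A}_P)$. Definition~\ref{def:alg-induced} supplies a block $s \in \mathcal{D}(\mathcal{A}_P)$ and an invariant $\iota \in \mathcal{I}_s$ with $\rho = \mathrm{Translate}(\iota, s)$. Step~1 computes all of $\mathcal{I}_s$, so $\iota$ is among the invariants processed. Step~2 takes $\iota' = \iota$, which is admissible because $\sim_s$ is reflexive (the remark after Definition~\ref{def:block-invariant}); hence $\rho \in \mathcal{R}(\iota)$. Steps~3 and~4 then place the $\sim_s$-class of $\rho$ in $m_s \in \mathbb{M}(\mathcal{A}_P)$.

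To read ``$\rho \in m_s$'' literally, I treat $m_s$ as the union over $\iota \in \mathcal{I}_s$ of the classes $\mathcal{R}(\iota)/\!\sim_s$. Equivalently, $m_s$ is the image of $\mathcal{I}_s$ under $\mathrm{Translate}(\cdot, s)$, taken modulo $\sim$. Well-definedness of \texttt{Translate} (Definition~\ref{def:translate}, clause (ii)) guarantees that this quotient is consistent: $\sim_s$-equivalent invariants produce $\sim$-equivalent MRs. So membership does not depend on which representative $\iota$ is chosen.

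\emph{Uniqueness.} Suppose $\rho \in m_s$ and $\rho \in m_{s'}$ with $s \neq s'$. Then $\rho$ is derivable through both blocks. The canonical-block ordering of Definition~\ref{def:canonical-order} is a strict total order on the finite set of eight blocks. The set $B(\rho) = \{ s : \exists\, \iota \in \mathcal{I}_s,\ \rho = \mathrm{Translate}(\iota, s)\}$ is nonempty by the existence part, so it has a unique maximum $s^\ast(\rho)$. Following the assignment clause of Definition~\ref{def:canonical-order}, I formally define the assigned pattern to be $m_{s^\ast(\rho)}$; uniqueness ``under the canonical-block ordering'' is exactly uniqueness of this maximum.

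Uniqueness within a single block also needs checking. Because $\mathbb{M}(\mathcal{A}_P)$ contains exactly one $m_s$ per block (Step~4), there is only one candidate pattern per block. Any finer classes inside $m_s$ are disjoint, since $\sim_s$ is an equivalence relation and its quotient classes partition.

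The main obstacle is this bookkeeping: reconciling Step~3, which forms a quotient per invariant, with Step~4, which returns a single $m_s$ per block. The theorem's meaning depends on how these two steps fit together. I would resolve it by stating explicitly the convention that $m_s$ collects every $\sim$-class arising from $\mathcal{I}_s$, and checking that the existence and uniqueness arguments hold under that convention. I would also remark that uniqueness is by fiat of the ordering, not a structural disjointness of the blocks. Consequently the theorem is closure by construction within the scope of Definition~\ref{def:alg-induced}, consistent with the paper's framing.
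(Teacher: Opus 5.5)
Your proposal is correct and follows the paper's proof essentially verbatim. Existence comes from tracing $\rho = \mathrm{Translate}(\iota,s)$ through Steps~1--4 of \texttt{CONSTRUCT-MP}, and uniqueness comes from taking the unique maximum of the deriving blocks under the strict total order of Definition~\ref{def:canonical-order}, which is the paper's Lemma~\ref{lem:canonical-order}; your two additions make precise what the paper leaves implicit: the explicit convention reconciling Step~3's per-invariant quotient with Step~4's single per-block $m_s$, and the remark that uniqueness holds by fiat of the ordering rather than by disjointness of the blocks.
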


\begin{remark}[Scope of Theorem~\ref{thm:closure}]
\label{rem:scope}
Theorem~\ref{thm:closure} is a closure statement, not a completeness claim over arbitrary properties one might assert about $P$'s executions. It quantifies over $\mathrm{MR}(\mathcal{A}_P)$ as defined by Definition~\ref{def:alg-induced} --- the algebra-induced MRs reachable through the \texttt{Translate} operator from a single block invariant. Three concrete classes of MRs lie outside this scope and are not covered by the theorem; they are documented in Appendix~\ref{app:out-of-scope}:
\begin{enumerate}
    \item \emph{Probabilistic MRs without operator-algebraic representation} --- e.g.\ MRs that constrain output \emph{distributions} (such as ``the entropy of $f(\mathbf{x})$ should not decrease under input augmentation $A$'') when the augmentation is not expressible as an operator in $\mathcal{O}$.
    \item \emph{Input-distribution MRs} not expressible over $\mathcal{A}_P$ operators --- e.g.\ adversarial-perturbation MRs ``$\| \delta\|_p \le \epsilon \Rightarrow f(\mathbf{x}+\delta) = f(\mathbf{x})$'' whose perturbation set is not a group action on $\mathcal{X}$.
    \item \emph{Compositional MRs spanning multiple blocks under non-\texttt{Translate} derivations} --- MRs requiring simultaneous use of two block invariants in a way that is not the canonical-block-ordering reduction of Definition~\ref{def:canonical-order}.
\end{enumerate}
The strictly stronger statement that every MR formulable as a property over $\mathcal{A}_P$'s operators (without restricting to \texttt{Translate}-reachable derivations) is contained in some $m \in \mathbb{M}(\mathcal{A}_P)$ is identified as Theorem~1$'$ in Appendix~\ref{app:proofs}. Section~\ref{subsec:negative-pwr} and Appendix~\ref{app:negative-proofs} establish that this stronger statement is false on the PWR core diffusion algebra $\mathcal{A}_{\mathrm{PWR}}$, by exhibiting two concrete counterexamples ($\rho_{\mathrm{nonadd}}, \rho_{\mathrm{MTC\text{-}bor}}$) whose obstructions identify five structurally independent extensions of \texttt{Translate}'s signature. The combined open problem (whether such an extended \texttt{Translate} preserves Theorem~\ref{thm:closure}'s closure and Theorem~\ref{thm:decidable}'s polynomial-time decidability) is the principal open question for follow-up work.
\end{remark}

A sceptical reading might object that the by-construction status of Theorem~\ref{thm:closure} makes it near-tautological. We acknowledge that the closure result is by-construction within the explicit scope of Definition~\ref{def:alg-induced} (which fixes $\mathrm{MR}(\mathcal{A}_P)$ as the \texttt{Translate}-image of $\mathcal{A}_P$). The substantive value lies in what the theorem then enables: Theorem~\ref{thm:closure} converts an empirical-adequacy claim (``our pattern grid covers $X\%$ of observed MRs'') into a structural-adequacy claim (``our pattern grid is exhaustive of the algebra-induced MR space, modulo the explicit out-of-scope classes of Remark~\ref{rem:scope}''). Empirical adequacy frameworks built on inductive pattern grids do not guarantee algebraic closure even in this bounded sense; NOETHER guarantees it, and the obligation imposed on any user of the framework is then to verify that no \texttt{Translate}-reachable MR is dropped by CONSTRUCT-MP, a verification that is mechanical once the algebraic input has been fixed.

The theorem still imposes a checkable obligation on the framework. If an MR is induced by an invariant in $\mathcal{D}(\mathcal{A}_P)$ but cannot be assigned by CONSTRUCT-MP, then at least one component (invariant extraction, $\mathrm{Translate}$, structural equivalence, or the canonical ordering) is underspecified. Thus the result is not an empirical coverage claim; it is a well-formedness and closure guarantee for the downstream construction once the algebraic input has been fixed. The framework's additional theoretical results complement Theorem~\ref{thm:closure}: Theorem~\ref{thm:decidable} bounds the construction's polynomial-time decidability under a finite generating set; Theorem~$1'$ (Conjecture~\ref{conj:absolute}), which makes the stronger claim of absolute completeness over arbitrary properties expressible in $\mathcal{A}_P$, is falsified on $\mathcal{A}_{\mathrm{PWR}}$ (\S\ref{subsec:negative-pwr}); the falsification identifies five structural obstructions in \texttt{Translate}'s signature (Appendix~\ref{app:negative-proofs}), with five further candidate dimensions surveyed on $\mathcal{A}_{\mathrm{equi}}$ and $\mathcal{A}_{\mathrm{rel}}$ (\S\ref{subsec:third-domain}).

\begin{remark}[Scope of Theorem~\ref{thm:closure} for the $\mathcal{B}^{*}_{\mathrm{rel}}$ block]
\label{rem:closure-brel}
The closure result extends to $\mathcal{B}^{*}_{\mathrm{rel}}$ provided the rewriting rule set $\mathcal{R}_{\mathrm{rel}}$ is finite. For the relational-algebra fragment of TPC-H-class queries the rule set is finite by the standard heuristic-rewrite normal forms (selection-pushdown, projection-pushdown, join-reordering, distinct-elimination)~\cite{Wang2024QED, Zhou2022SPES}, and Theorem~\ref{thm:closure} therefore covers algebra-induced MRs of this fragment. Query languages with unbounded recursive rewriting (e.g.\ Datalog with non-terminating rule sets) lie outside this scope and are catalogued as out-of-scope under Remark~\ref{rem:counterex}.
\end{remark}

\subsubsection{Decidability and complexity}
\label{subsec:decidability}

\begin{theorem}[Decidability]
\label{thm:decidable}
Suppose $\mathcal{A}_P$ admits a finite generating set $\mathrm{gen}(\mathcal{A}_P)$ of cardinality $n$, with each generator's invariant computation taking time $t_i$. Then $\mathbb{M}(\mathcal{A}_P)$ is computable in time $O\!\bigl(\,n \cdot \max_i t_i \cdot \log n\,\bigr)$.
\end{theorem}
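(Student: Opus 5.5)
The plan is to prove Theorem~\ref{thm:decidable} by charging the cost of each step of CONSTRUCT-MP to the generators. We work under the reading implicit in the statement: every block invariant in $\mathcal{I}_s$ is determined by a finite set $\Phi$ of generators drawn from $\mathrm{gen}(\mathcal{A}_P)$, and the relation $\pi$ attached to a generator is produced by that generator's invariant computation in time $t_i$. The first step is a normal-form lemma. Since $\sim_{\mathcal{F}}$ identifies operator expressions that agree on every program of $\mathcal{F}$, any invariant built from composite operators can be rewritten over $\mathrm{gen}(\mathcal{A}_P)$. So Step~1 of CONSTRUCT-MP only needs to compute one invariant record per generator, not one per element of $\mathcal{O}$. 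This is the load-bearing reduction. The quotient $\sim_s$ in Definition~\ref{def:block-invariant} compares $\Phi$ as sets and $\pi$ up to relabelling, so invariants generated by the same generator set collapse to a single class. That collapse is what makes enumerating the generators sufficient.

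The cost accounting then goes step by step.

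\textbf{Step~1} computes $n$ invariant records, one per generator, each tagged with its block via the decomposition of \S\ref{subsec:decomposition}. This costs $\sum_i t_i \le n \cdot \max_i t_i$.

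\textbf{Step~2} applies \texttt{Translate} (Definition~\ref{def:translate}) to each record. Under the per-block canonical order, \texttt{Translate} only instantiates a fixed template, so it costs $O(1)$ per record beyond reading the record's description, and that description has size at most $O(\max_i t_i)$ because it was produced in that time.

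\textbf{Step~3} forms the quotient by $\sim_s$. We would first compute a canonical key for each record, namely the block index, the sorted generator set, and a relabelling-normalised form of $\pi$. We then sort the $n$ keys. A comparison sort uses $O(n\log n)$ comparisons, and each comparison inspects keys of length $O(\max_i t_i)$, giving $O(n\log n \cdot \max_i t_i)$.

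\textbf{Step~4} is a linear scan that aggregates adjacent equal keys into the at most eight MetaPatterns $m_s$, ties being resolved by the canonical-block ordering of Definition~\ref{def:canonical-order}. This costs $O(n)$.

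Summing the four steps gives $O(n\cdot\max_i t_i\cdot\log n)$.

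The main obstacle is the normalisation in Step~3. Deciding whether two relations $\pi,\pi'$ ``define the same constraint up to relabelling'' is a graph-isomorphism-like problem in general, so it is not obviously polynomial. The first thing I would try is to absorb this cost into $t_i$. The hypothesis already allows each generator's invariant computation to output $\pi$ in a canonical presentation, for instance with coordinates ordered by the block's canonical tuple order from Definition~\ref{def:translate}, such as group-orbit enumeration for $G$ or argument order for $T^{*}$. With coordinates fixed this way, relabelling-equivalence reduces to syntactic equality of canonical forms, and the comparison bound above holds. I would state this explicitly as part of the ``explicit complexity assumptions'' and note that it holds by inspection for the per-block templates in Appendix~\ref{app:translate-table}.

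A second, smaller gap is the normal-form lemma when $\mathcal{I}_s$ contains invariants over sets $\Phi$ of several generators. If $|\Phi|$ is bounded by a block-dependent constant $c$, the number of candidate sets is $O(n^c)$. That count is still polynomial but would break the stated bound. I would therefore argue, from the canonical-order convention, that each block's invariants are generated generator-by-generator, with multi-generator invariants obtained as the $\sim_s$-closure. If that argument fails, I would weaken the conclusion to ``polynomial in $n$'', which is what the contribution statement C2a actually requires.
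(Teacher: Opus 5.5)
Your argument is correct under the assumptions you state, and its skeleton matches the paper's proof. Both cost the four steps of CONSTRUCT-MP separately, charge Step~1 to the generators at $\sum_i t_i \le n\max_i t_i$, and sum.

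The real difference is Step~3. The paper forms the quotient by union-find with path compression: amortised $O(\alpha(n))$ per operation, rounded up to $O(\log n)$, so $O(n\log n)$ in total. It charges \texttt{Translate} $O(1)$ per invariant and Step~4 $O(1)$. It therefore actually derives $O(n\max_i t_i + n\log n)$, which it covers by the stated product when $\max_i t_i \ge 1$.

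In doing so the paper treats every $\sim_s$ test as unit cost and never says how the pairs to be merged are found. Union-find by itself discovers no equivalences, so without canonical keys it needs quadratically many pairwise $\sim_s$ tests in the worst case. Your sort-by-canonical-key Step~3 supplies exactly that missing mechanism. It also makes the $\log n$ factor genuinely multiply $\max_i t_i$, because each comparison reads keys of length $O(\max_i t_i)$; the product form of the bound then arises naturally rather than as a loose cover. Finally, it exposes the hypothesis both arguments need: relabelling-equivalence of $\pi$ must reduce to syntactic comparison of a canonical form produced within time $t_i$, rather than being a graph-isomorphism-like problem.

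Similarly, your one-record-per-generator reading is exactly what the paper's proof assumes without saying so. It writes ``compute $g_i$'s contribution to each $\mathcal{I}_s$'' and then passes $n$ invariants to Step~2. Your normal-form lemma does not by itself justify that reduction. Rewriting composite operators over $\mathrm{gen}(\mathcal{A}_P)$ only collapses invariants in blocks whose invariant is closed under composition, as with $G$, where $\rho$ is a homomorphism. So your caveat about multi-generator $\Phi$ and your fallback to ``polynomial in $n$'' are the honest way to state it.

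Net comparison: your version buys explicitness about the hidden complexity assumptions, at the cost of having to state them. The paper's version buys brevity by leaving them implicit.
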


\begin{remark}[Scope of Theorem~\ref{thm:decidable} for the $\mathcal{B}^{*}_{\mathrm{rel}}$ block]
\label{rem:decidable-brel}
The polynomial-time bound holds for $\mathcal{B}^{*}_{\mathrm{rel}}$ when the rewriting rule set $\mathcal{R}_{\mathrm{rel}}$ is finite. For first-order SQL with bag semantics, query equivalence is undecidable in general, but \emph{query equivalence under a fixed rule set $\mathcal{R}_{\mathrm{rel}}$} is decidable: practical SMT-based solvers verify equivalence on substantial benchmark fractions in the published literature~\cite{Wang2024QED, Zhou2022SPES}. Theorem~\ref{thm:decidable} therefore covers the rule-set-bounded fragment of relational algebra; outside this fragment, decidability is governed by the underlying first-order theory and is not the framework's responsibility.
\end{remark}

\paragraph{Per-block invariant-extraction cost.}
\begin{table}[h]
\centering
\caption{Per-generator cost of invariant extraction in each block. The symmetry block ($G$) is split by group regime: finite, finite-dimensional Lie, and finitely generated infinite-discrete (with user-supplied truncation $K$). For Lie groups (e.g.\ $\mathrm{SO}(3)$, $d_G=3$) the bound $O(|G|^2)$ does not apply because $|G|$ is uncountable; the relevant bound is $O(d_G^2)$ over the Lie-algebra basis. See \S\ref{subsec:complexity-prose} for the truncation discussion.}
\label{tab:complexity}
\small
\begin{tabular}{lll}
\toprule
\textbf{Block} & \textbf{Regime} & \textbf{$t_i$ for one generator} \\
\midrule
\multirow{3}{*}{$G$ (symmetry)}
  & finite group & $O(|G|^2)$ (group-orbit fixed-point) \\
  & finite-dim.\ Lie group & $O(d_G^2)$, $d_G = \dim_{\mathbb{R}} \mathfrak{g}$ \\
  & infinite discrete (truncated) & $O(K^2)$ at truncation $|G|\le K$ \\
\midrule
$O_{\le}$ (order) & --- & $O(n^2)$ (poset comparison) \\
$T^{*}$ (self-adjoint) & --- & $O(d)$ (inner-product symmetry check) \\
$\mathcal{T}^{*}$ (time-reversal) & --- & $O(1)$ \\
$\mathcal{L}^{*}$ (limit) & --- & $O(\log\!\frac{1}{\epsilon})$ \\
$\mathcal{D}^{*}$ (qualitative-dynamics) & --- & $O(d)$ \\
$\mathcal{E}^{*}$ (method-comparison) & --- & $O(K^2)$ (over $K$ methods) \\
\bottomrule
\end{tabular}
\end{table}

\paragraph{On infinite groups.}
\label{subsec:complexity-prose}
The first row of Table~\ref{tab:complexity}, $O(|G|^2)$, is the natural bound for finite symmetry groups and is the regime under which Theorem~2 was originally stated. For a Lie group $G$ such as $\mathrm{SO}(3)$, $|G|$ is uncountable and the bound is replaced by $O(d_G^2)$, where $d_G = \dim_{\mathbb{R}} \mathfrak{g}$ is the real dimension of $G$'s Lie algebra. In the equivariant-ML instantiation of Section~\ref{sec:cross-domain}, $d_{\mathrm{SO}(3)} = 3$; the symmetry-block invariant is computed once over a basis of three infinitesimal generators, and orbit closure is obtained through finite-dimensional linear algebra. For finitely generated infinite discrete groups (e.g.\ $\mathbb{Z}$, the integer-translation group of an unbounded grid solver), CONSTRUCT-MP requires a separate truncation parameter $K$ at which orbits are enumerated; the cost is $O(K^2)$ per generator, and the framework's user is responsible for justifying that the truncated set captures the program family's structurally relevant invariants. We do not claim Theorem~\ref{thm:closure}'s closure result is preserved across truncation: closure is over $\mathrm{MR}(\mathcal{A}_P)$ defined relative to the truncated group, and the user must inherit responsibility for whether this is the intended algebra.

For the Boltzmann instantiation the relevant generators are finite ($n \le 14$, geometric quarter-rotations and energy-group permutations); for the equivariant-ML instantiation the relevant generators are finite-dimensional Lie ($d_{\mathrm{SO}(3)} = 3$, $|\mathfrak{S}_n| = n!$ truncated to a single generator class, $n \le 10$ training-size, depth, and dimension limits). The asymptotic bound is therefore not the binding constraint in either instantiation.

\subsubsection{The principal limitation}

NOETHER replaces inductive grounding with algebraic grounding \emph{downstream of $\mathcal{A}_P$}. Upstream, the distillation of $\mathcal{A}_P$ from a program family remains a human task. This limitation is central to the framework's scope: NOETHER does not automate domain modelling, but it turns the step after domain modelling into a well-defined construction. A second limitation, made explicit in \S\ref{subsec:negative-pwr} and Appendix~\ref{app:negative-proofs}, is that even when $\mathcal{A}_P$ is fully specified, \texttt{Translate}'s present signature (single-block, first-order $\pi$-template, single partial-order direction, operating on $P(x)$ tuples rather than on operator-spectrum quantities) systematically excludes a class of MRs that engineering practice treats as standard: non-additivity of operator-composition functionals and higher-order mixed parametric dependences. The principal limitation is therefore twofold: the upstream distillation of $\mathcal{A}_P$, and the present signature of \texttt{Translate}.

\begin{tcolorbox}[breakable,colback=gray!5,colframe=black!50,arc=2pt,boxrule=0.5pt,fontupper=\small,title=Boundary of contribution (Section~\ref{sec:framework} restatement),fonttitle=\small\bfseries]
The downstream layer is mechanical: Theorem~\ref{thm:closure} (algebraic closure under \texttt{Translate}) and Theorem~\ref{thm:decidable} (polynomial-time decidability) operate on a given block decomposition. The upstream layer remains empirical: the eight-block decomposition is Hypothesis~\ref{hyp:seven-blocks}, an open empirical hypothesis with six documented out-of-scope program-family classes (Remark~\ref{rem:counterex}). Theorem~1$'$ (absolute completeness) is open. Sections~\ref{sec:reactor}, \ref{sec:cross-domain}, and \ref{subsec:third-domain} instantiate NOETHER on three structurally distinct program families.
\end{tcolorbox}

\subsection{Boltzmann instantiation: from transport to diffusion to burnup}
\label{sec:reactor}

This section instantiates NOETHER on the Boltzmann transport equation and traces the construction through neutron transport, diffusion, and burnup.

\subsubsection{The Boltzmann program family and its operator algebra}

The Boltzmann transport equation governs the distribution of neutral particles. In its time-independent eigenvalue form for fission systems:
\begin{multline}
\hat{\Omega}\!\cdot\!\nabla\psi(\vec{r},\hat{\Omega},E) + \Sigma_t(\vec{r},E)\psi \\
= \int\!\!\int \Sigma_s(\vec{r}, E'\!\to\!E,\hat{\Omega}'\!\to\!\hat{\Omega})\,\psi\,dE'd\hat{\Omega}'
+ \tfrac{1}{k}\,\chi(E)\!\!\int\!\nu\Sigma_f(\vec{r},E')\psi\,dE'.
\end{multline}

The program-induced operator algebra $\mathcal{A}_{\mathrm{Boltz}}$ contains $\{G_{\mathrm{geom}},\, \mathfrak{R}_E,\, \mathcal{L}_{\Sigma},\, \mathcal{L}_\nu,\, \mathcal{L}^*,\, \mathcal{T},\, \mathcal{L}_h,\, \mathcal{D}_{\mathrm{Bate}},\, \mathcal{E}_{\mathrm{cmp}}\}$. Decomposed along the eight blocks of Section~\ref{subsec:decomposition}, $\mathcal{A}_{\mathrm{Boltz}}$ yields entries in seven blocks (with $\mathcal{B}^{*}_{\mathrm{rel}}$ empty under the absence of idempotent-semiring rewriting structure). Every operator listed has an unambiguous status in the established theory of neutral-particle transport~\cite{BellGlasstone1970, LewisMiller1993}.

\subsubsection{Running CONSTRUCT-MP on \texorpdfstring{$\mathcal{A}_{\mathrm{Boltz}}$}{A\_Boltz}}

Steps 1--4 of CONSTRUCT-MP yield seven MetaPatterns: $m_{\mathrm{inv}}$ (invariance/equivariance), $m_{\mathrm{mono}}$ (parameter-monotonicity), $m_{\mathrm{adj}}$ (self-adjoint duality / adjoint reciprocity), $m_{\mathrm{rev}}$ (time-reversal compatibility), $m_{\mathrm{conv}}$ (discretisation convergence), $m_{\mathrm{dyn}}$ (qualitative-dynamics shape invariants), and $m_{\mathrm{cmp}}$ (method-comparison error-bound partial orders).

\subsubsection{Relationship to the prior inductive catalogue: refinement plus prediction}
\label{subsec:reactor-mapping}

\paragraph{Provenance and scope of the inductive catalogue.}
The reactor-physics MetaPattern catalogue compared against here was distilled by the present authors from the standard PWR-physics literature~\cite{BellGlasstone1970, LewisMiller1993} as their own prior inductive work; the underlying 84-MR PWR corpus (supplementary~S2) is the authors' own catalogue, not an external corpus drawn from an unrelated team. The relationship reported in this section is therefore best read as a test of \emph{internal vocabulary coherence}: NOETHER's eight-block decomposition is applied to a catalogue the same team produced inductively, and the question is whether the algebraic re-classification reproduces, refines, or predicts within that corpus. This is internal consistency under a uniform algebraic structure, not external transfer of tacit knowledge from an independent reactor-physics team into the framework. External-transfer evidence (applying NOETHER to a reactor-physics MR corpus authored by an independent team---a PARCS V\&V suite or IAEA-TECDOC-class catalogue) is committed as follow-up work in supplementary~S4 (\texttt{future\_work.md}) (item (j)). The cross-codebase commons-math pilot of \S\ref{subsec:empirical-threats} (item (b.cm)) is the analogous external-transfer test on the Java head-to-head side, at $n = 3$ SUTs and $n = 77$ mutants; the reactor-side analogue remains future work.

A reactor-physics MetaPattern catalogue distilled from the standard PWR-physics literature~\cite{BellGlasstone1970, LewisMiller1993} identifies five MetaPatterns (P1--P5) inductively. The relationship with $\mathbb{M}(\mathcal{A}_{\mathrm{Boltz}})$ is more structured than a simple bijection. NOETHER reproduces three of the prior patterns, refines two on a sounder algebraic basis, and predicts two structurally distinct classes that the inductive catalogue did not isolate:

\begin{table}[h]
\centering
\caption{Relationship between prior inductive catalogue and NOETHER deductive output.}
\label{tab:refinement}
\small
\begin{tabular}{p{0.16\textwidth} p{0.13\textwidth} p{0.62\textwidth}}
\toprule
\textbf{Prior catalogue} & \textbf{NOETHER} & \textbf{Relationship} \\
\midrule
P1 conservation/invariance & $m_{\mathrm{inv}}$ & Reproduced --- $G$-symmetry invariants project to the same MRs. \\
P2 monotonicity & $m_{\mathrm{mono}}$ & Reproduced. \\
P3 convergence & $m_{\mathrm{conv}}$ & Reproduced. \\
P4 trajectory & $m_{\mathrm{dyn}}$ & \textbf{Refined} --- inductive P4 conflated qualitative-dynamics with time-reversal. NOETHER places trajectory phenomena in $m_{\mathrm{dyn}}$ (Sturm-type comparison theorems) and exposes the conflation. \\
P5 partial-order/bounding & $m_{\mathrm{cmp}}$ & \textbf{Refined} --- inductive P5 grouped method-accuracy partial orders with adjoint reciprocity. NOETHER places the former in $m_{\mathrm{cmp}}$ (approximation-theory error bounds) and exposes the distinction. \\
(none) & $m_{\mathrm{adj}}$ & \textbf{Predicted} --- adjoint-reciprocity MRs derived as a structurally distinct equivalence class. \\
(none) & $m_{\mathrm{rev}}$ & \textbf{Predicted} --- collisionless-trajectory-reversal MRs derived for the corresponding sub-formulations. \\
\bottomrule
\end{tabular}
\end{table}

This is not the structure of a re-coding. NOETHER's output structurally \emph{refines} two prior patterns and \emph{predicts} two additional patterns the inductive method missed. The gain is canonical placement: two previously conflated inductive labels are separated by algebraic source, and two textbook phenomena become mandatory MetaPattern classes once their blocks appear in $\mathcal{D}(\mathcal{A}_{\mathrm{Boltz}})$.

\paragraph{A note on prediction (and an interpretive caveat).} We do not claim that $m_{\mathrm{adj}}$ (adjoint reciprocity) and $m_{\mathrm{rev}}$ (collisionless time-reversal compatibility) are de novo physical discoveries. Adjoint-flux reciprocity is standard textbook material in transport theory~\cite{BellGlasstone1970, LewisMiller1993}, and time-reversal MRs in collisionless transport have been understood in physics for decades. The MT-community-level absence of these MetaPatterns from the prior inductive catalogue reflects which phenomena happened to be canonically encoded in the 84-MR corpus, not the absence of the underlying physics from the literature.

We must also acknowledge an interpretive caveat about what this ``prediction'' is and is not. The blocks $T^{*}$ (self-adjoint) and $\mathcal{T}^{*}$ (time-reversal) of Section~\ref{subsec:decomposition} were themselves curated by inspection of program families that include reactor physics; the operators $\mathcal{T}_{\mathrm{geom}}$ and $\mathcal{T}$ in $\mathcal{A}_{\mathrm{Boltz}}$ were the leading examples motivating the inclusion of those blocks. There is therefore a circularity in the strong reading of ``prediction'': $T^{*}$ and $\mathcal{T}^{*}$ were partly induced from reactor-physics structures, and $m_{\mathrm{adj}}$ and $m_{\mathrm{rev}}$ are then derived from those blocks. The framework does not discover these MetaPatterns de novo. What it does is closer to a uniform \emph{re-projection}: given the eight-block decomposition (whatever its empirical provenance), CONSTRUCT-MP places adjoint-reciprocity and time-reversal phenomena into structurally distinct equivalence classes that the inductive 84-MR catalogue had not isolated as such, even though their underlying physics had been individually known to domain experts. The substantive contribution is the \emph{re-classification under a uniform algebraic structure}, which gives a future testing toolchain a principled reason to enumerate $m_{\mathrm{adj}}$ and $m_{\mathrm{rev}}$ alongside $m_{\mathrm{inv}}$, $m_{\mathrm{mono}}$, $m_{\mathrm{conv}}$, $m_{\mathrm{dyn}}$, and $m_{\mathrm{cmp}}$ rather than treating them as ad-hoc additions. Where NOETHER's predictive power is genuine and non-circular is in the deflationary direction (Section~\ref{subsec:pmcm-worked}): the framework can also reveal that an existing inductive catalogue \emph{over-counts} structurally distinct patterns. Both directions, re-classification and de-duplication, are systematisation, not discovery.

\paragraph{Element-wise correspondence.} Table~\ref{tab:elementwise} traces seven representative MRs distilled from the standard reactor-physics literature~\cite{BellGlasstone1970, LewisMiller1993} to their NOETHER placement: one per non-empty block ($G$, $O_{\le}$, $\mathcal{L}^{*}$, $\mathcal{D}^{*}$, $\mathcal{E}^{*}$, selected as the most canonical literature form within each block) plus the two predicted MetaPatterns $m_{\mathrm{adj}}$ and $m_{\mathrm{rev}}$ for which no MR was previously catalogued in this form. The full 12-MR enumeration with sub-category coverage (geometric vs.\ energy-group symmetry within $G$, etc.), source equations, and \texttt{Translate} templates is in supplementary~S2 (\texttt{elementwise\_12.md}). A larger 84-MR corpus underlying the selection protocol is provided as supplementary material S2.

\paragraph{An independent audit on the 18-MR engineering catalogue.} The 12-row table is structurally curated. As a coarser breadth check, an independent 18-MR engineering catalogue distilled from production reactor-physics codes (orthogonal to the 84-MR inductive corpus underlying Section~\ref{subsec:reactor-mapping}) was labelled by three independent large language models against the seven canonical MetaPatterns plus the relational extension of Section~\ref{subsec:decomposition}, with a fourth label \emph{orphan} reserved for MRs that fit none of the eight classes; the labelling protocol, raw labels, and majority-vote tabulation are released in supplementary material S2 (\texttt{18mr\_audit/}). Inter-rater agreement on the four-way classification is almost-perfect (Fleiss' $\kappa = 0.857$, $n=18$ items, $r=3$ raters, $c=4$ categories), and 17 of the 18 MRs are placed by majority vote into one of the seven canonical MetaPatterns or $m_{\mathrm{rel}}$ (subsumption $94.4\%$, Wilson $95\%$ CI $[74.2\%,\, 99.0\%]$ on the binary subsumption proportion). The single orphan is the Lipschitz / metric-stability MR, which we treat structurally in Appendix~C.5.2 as a candidate ninth block rather than as a counter-example to Theorem~\ref{thm:closure}'s scope. The independence of the three labellers is bounded by the LLM-shared-training-data caveat (the three models share substantial pre-training corpora, so $\kappa$ should be read as agreement among similarly-trained but non-coordinated raters rather than among rigorously independent oracles); we report the audit as external corroboration of breadth rather than as an independent verification of correctness.

\begin{table}[h]
\centering
\caption{Seven representative MRs from the prior PWR corpus (one canonical MR per non-empty block plus two predicted MetaPatterns), with NOETHER placement; the full 12-MR enumeration with per-block sub-category coverage is in supplementary~S2 (\texttt{elementwise\_12.md}).}
\label{tab:elementwise}
\footnotesize
\begin{tabular}{llcll}
\toprule
\textbf{MR ID} & \textbf{Plain-text MR} & \textbf{P\#} & \textbf{Block} & \textbf{NOETHER MP} \\
\midrule
Bur-Phy-01 & Step-splitting invariance & P1 & $G$ & $m_{\mathrm{inv}}$ \\
Bol-Phy-11 & $\Sigma_a\!\uparrow \Rightarrow k_{\mathrm{eff}}\!\downarrow$ & P2 & $O_{\le}$ & $m_{\mathrm{mono}}$ \\
Dif-Alg-01 & Diamond-difference $h^2$ convergence & P3 & $\mathcal{L}^{*}$ & $m_{\mathrm{conv}}$ \\
Bur-Phy-08 & Iodine pit qualitative shape & P4 & $\mathcal{D}^{*}$ & $m_{\mathrm{dyn}}$ \\
Bur-Alg-04 & CRAM no-worse-than TTA & P5 & $\mathcal{E}^{*}$ & $m_{\mathrm{cmp}}$ \\
\textit{(predicted)} & Adjoint reciprocity & --- & $T^{*}$ & $m_{\mathrm{adj}}$ \\
\textit{(predicted)} & Collisionless reversibility & --- & $\mathcal{T}^{*}$ & $m_{\mathrm{rev}}$ \\
\bottomrule
\end{tabular}
\end{table}

\subsubsection{A Noether-style derivation of \texorpdfstring{$m_{\mathrm{adj}}$}{m\_adj}}
\label{subsec:noether-mAdj}

The framework's name is methodological. Noether's first theorem replaces an empirically curated catalogue of conservation laws with a derivation from the symmetry structure of an action functional~\cite{Noether1918}. We do not invoke Noether's theorem as a theorem about programs (programs do not, in general, possess an action functional in the variational-calculus sense). What the framework does mirror is the \emph{methodological move}: instead of cataloguing observed invariants, derive them from a structural source. We illustrate the move on $m_{\mathrm{adj}}$.

Consider the bilinear form on solution-adjoint pairs
\begin{equation}
\label{eq:bilinear-boltz}
\mathcal{F}[\phi, \phi^{\dagger}] = \langle \phi^{\dagger}, B\phi \rangle - \langle \phi, B^{\dagger}\phi^{\dagger} \rangle,
\end{equation}
where $B$ is the Boltzmann operator and $B^{\dagger}$ its formal adjoint under the inner product $\langle f, g \rangle = \int f g \,d\Omega\,dE\,d\mathbf{r}$. The bilinear form $\mathcal{F}$ is identically zero on solutions of the forward and adjoint equations: $B\phi = S$, $B^{\dagger}\phi^{\dagger} = S^{\dagger}$ with appropriate boundary conditions imply $\mathcal{F}[\phi, \phi^{\dagger}] = \langle \phi^{\dagger}, S \rangle - \langle \phi, S^{\dagger} \rangle$, and the standard reciprocity identity makes the right-hand side zero. The vanishing of $\mathcal{F}$ is a \emph{conserved current} in the Noether sense: it is annihilated by the dual symmetry $\phi \leftrightarrow \phi^{\dagger}, B \leftrightarrow B^{\dagger}$. CONSTRUCT-MP's Step~1 extracts $\mathcal{F} = 0$ as the $T^{*}$-block invariant; \texttt{Translate} converts it into the executable MR
\begin{equation}
\label{eq:m-adj-mr}
\rho_{\mathrm{adj}}: \quad \big| \langle \phi^{\dagger}, S \rangle - \langle \phi, S^{\dagger} \rangle \big| \le \tau,
\end{equation}
which the testing harness checks on a forward and an adjoint solver run with externally specified sources $S, S^{\dagger}$. The MR is therefore a Noether-style consequence of the duality symmetry rather than a manually catalogued reactor-physics property: the symmetry $\phi \leftrightarrow \phi^{\dagger}$ is the structural source, the bilinear form is the conserved current, and the MR is the conservation law in testable form. The same template applies to $m_{\mathrm{rev}}$ with the involution $\mathcal{T}: \phi(\mathbf{r}, \boldsymbol{\Omega}, E, t) \mapsto \phi(\mathbf{r}, -\boldsymbol{\Omega}, E, -t)$ as the symmetry. The template's general statement is: given a symmetry of the program family's operator algebra, CONSTRUCT-MP extracts the invariant of the symmetry and \texttt{Translate} converts it into an executable MR.

\subsubsection{Specialisation to neutron transport, diffusion, and burnup}

The Boltzmann formulation is the most general; specialised solvers are obtained by truncating or projecting $\mathcal{A}_{\mathrm{Boltz}}$. Neutron transport solvers retain the full angular dependence; the same seven MetaPatterns apply. Neutron diffusion solvers truncate angular dependence to its $P_1$ approximation: the time-reversal block contracts (diffusion is dissipative), removing $m_{\mathrm{rev}}$. Burnup-coupled solvers introduce the Bateman ODEs; the Bateman operator $e^{At}$ enriches multiple blocks ($G$ semi-group, $O_{\le}$ linearity, $\mathcal{D}^{*}$ overshoot/S-curve, $\mathcal{E}^{*}$ CRAM-vs-TTA bounds).

The MetaPattern set is \emph{compositional} with respect to the underlying algebra: add a block, gain MetaPatterns; remove a block, lose them. The framework supports both construction and \emph{prediction} of which patterns a new specialisation will exhibit, before any MR is identified empirically.

\subsubsection{Summary}

NOETHER, applied to $\mathcal{A}_{\mathrm{Boltz}}$, deductively produces a seven-MetaPattern set that refines the prior inductive catalogue and predicts two additional MetaPatterns ($m_{\mathrm{adj}}$, $m_{\mathrm{rev}}$). The seven MetaPatterns correspond to invariants that have been independently validated in the standard reactor-physics literature: $m_{\mathrm{inv}}$ via Bell \& Glasstone's symmetry-of-criticality analysis~\cite{BellGlasstone1970}, $m_{\mathrm{mono}}$ via Lewis \& Miller's monotonicity arguments for cross-section perturbations~\cite{LewisMiller1993}, $m_{\mathrm{conv}}$ via mesh-convergence theorems standard in computational neutron transport, $m_{\mathrm{adj}}$ via Bell \& Glasstone~\S6.3 adjoint-perturbation theory and Stamm'ler \& Abbate's adjoint flux treatment~\cite{StammlerAbbate1983}, and $m_{\mathrm{rev}}$ via the collisionless time-reversal symmetry of the transport operator~\cite{LewisMiller1993}. The contribution of NOETHER on this domain is therefore not de-novo MR discovery (these invariants are textbook-canonical) but the algebraic warrant for treating them as a uniform MetaPattern grid; the citation chain above is the published cross-corroboration that the seven NOETHER-derived MetaPatterns are physically real properties of the Boltzmann program family, independent of NOETHER's framework. The next section turns to a more demanding case: a domain in which the inductive catalogue does not yet exist.

\subsection{Cross-domain demonstration: equivariant machine learning}
\label{sec:cross-domain}

\subsubsection{The equivariant-ML program family and its operator algebra}

Equivariant neural networks impose by architectural construction that certain symmetries of the input induce predictable transformations of the output~\cite{CohenWelling2016, ThomasSmidt2018, KondorTrivedi2018, Bronstein2021GDL}. The program family $\mathcal{F}_{\mathrm{equi}}$ comprises classifiers satisfying $f(g\cdot \mathbf{x}) = \rho(g)\cdot f(\mathbf{x})$ for $g \in G = \mathrm{SO}(3) \times \mathfrak{S}_n$. Decomposed along the eight blocks: $G = \{G_{\mathrm{equi}}\}$, $O_{\le} = \{O_{\le}^{\mathrm{train}}\}$, $T^{*} = \{T^{*}_{\mathrm{att}}\}$, $\mathcal{T}^{*} = \{\mathcal{T}_{\mathrm{seq}}\}$, $\mathcal{L}^{*} = \{\mathcal{L}_{\mathrm{train}},\, \mathcal{L}_{\mathrm{depth}},\, \mathcal{L}_{\mathrm{dim}}\}$, $\mathcal{D}^{*} = \emptyset$ (for feedforward classifiers), $\mathcal{E}^{*} = \emptyset$ within a single architecture, $\mathcal{B}^{*}_{\mathrm{rel}} = \emptyset$ (no idempotent-semiring rewriting on equivariant-classifier outputs).

\subsubsection{Running CONSTRUCT-MP on \texorpdfstring{$\mathcal{A}_{\mathrm{equi}}$}{A\_equi}}

CONSTRUCT-MP returns
$$\mathbb{M}(\mathcal{A}_{\mathrm{equi}}) = \bigl\{\,m^{\mathrm{eq}}_{\mathrm{inv}},\; m^{\mathrm{eq}}_{\mathrm{mono}},\; m^{\mathrm{eq}}_{\mathrm{adj}},\; m^{\mathrm{eq}}_{\mathrm{rev}},\; m^{\mathrm{eq}}_{\mathrm{conv}}\,\bigr\}.$$
The labels mirror those of $\mathbb{M}(\mathcal{A}_{\mathrm{Boltz}})$ but the \emph{content} is domain-specific. This section does not claim empirical validation in machine learning. It checks a narrower transfer claim: once $\mathcal{A}_{\mathrm{equi}}$ is specified, the same downstream construction yields domain-specific MR families without using a reactor-physics corpus.

\paragraph{Cross-corroboration from the published equivariant-ML literature.}
The five MetaPatterns derived above correspond to invariants that have been independently validated in the equivariant-ML literature, providing a citation-based corroboration channel that does not depend on this paper's case study: $m^{\mathrm{eq}}_{\mathrm{inv}}$ (SO(3)/$\mathfrak{S}_n$ symmetry invariance) is the central design principle of Cohen \& Welling's G-CNN~\cite{CohenWelling2016}, Satorras et al.'s EGNN~\cite{Satorras2021EGNN}, and the Tensor-Field-Network family~\cite{ThomasSmidt2018}; $m^{\mathrm{eq}}_{\mathrm{adj}}$ (self-adjoint attention duality) is the kernel-symmetry property exploited by SE(3)-Transformer's Clebsch--Gordan attention~\cite{FuchsTransformer2020}; $m^{\mathrm{eq}}_{\mathrm{rev}}$ (training-trajectory time-reversal) is closely related to the reversible-network construction of Gomez et al.~\cite{Gomez2017Reversible}; $m^{\mathrm{eq}}_{\mathrm{conv}}$ (training-size convergence) is the implicit invariant tested by training-curve mesh-refinement studies in the broader geometric-deep-learning literature~\cite{Bronstein2021GDL}; gauge-equivariant extensions are catalogued in Cohen et al.~\cite{Cohen2019Gauge}. NOETHER's contribution on $\mathcal{A}_{\mathrm{equi}}$ is the uniform algebraic warrant: the same downstream construction that derives reactor-physics MetaPatterns also derives equivariant-ML MetaPatterns from $\mathcal{A}_{\mathrm{equi}}$'s eight-block decomposition, without re-running the empirical induction that produced the individual results above.

\subsubsection{End-to-end derivation: a concrete MR for SE(3)-equivariant point-cloud classification}
\label{subsec:rho-rot}
\label{subsec:end-to-end}

To illustrate the framework's generative use, we trace a complete derivation from algebra to executable MR.

\paragraph{Step 1. System under test.} Let $f: \mathbb{R}^{n\times 3} \to \Delta^{C-1}$ be a point-cloud classifier mapping $n$ three-dimensional points to a probability distribution over $C$ classes.

\paragraph{Step 2. Distil $\mathcal{A}_{\mathrm{equi}}$.} The relevant blocks are $G = \mathrm{SO}(3)\times\mathfrak{S}_n$ and $\mathcal{L}^{*}$ for training-size and depth limits.

\paragraph{Step 3. Run CONSTRUCT-MP, select an invariant.} Within $G$'s symmetry block, fix attention on the rotation invariant: $f(R \cdot \mathbf{x}) = f(\mathbf{x})$ for all $\mathbf{x}$ and $R \in \mathrm{SO}(3)$.

\paragraph{Step 4. Translate the invariant into an executable MR.}
\begin{equation}
\label{eq:rho-rot}
\rho_{\mathrm{rot}}: \quad \forall R \in \mathrm{SO}(3), \quad \big\| f(R \cdot \mathbf{x}) - f(\mathbf{x}) \big\|_\infty \le \tau,
\end{equation}
with $\tau = 10^{-4}$ for fp32 architectures.

\paragraph{Step 5. Generate an executable test.}
\begin{lstlisting}[style=pythonstyle, caption={Executable MR $\rho_{\mathrm{rot}}$ for SE(3)-equivariant classifier.}]
import numpy as np
from scipy.spatial.transform import Rotation

def test_rotation_invariance(model, point_cloud, num_samples=100, tau=1e-4):
    """Executable MR rho_rot for SE(3)-equivariant classifier."""
    p_original = model.predict(point_cloud)
    failures = []
    for _ in range(num_samples):
        R = Rotation.random().as_matrix()
        rotated = point_cloud @ R.T
        p_rotated = model.predict(rotated)
        deviation = np.max(np.abs(p_original - p_rotated))
        if deviation > tau:
            failures.append((R, deviation))
    return failures
\end{lstlisting}

\paragraph{Step 5b. A Noether-style reading of $\rho_{\mathrm{rot}}$.}
The same methodological move that produced $m_{\mathrm{adj}}$ in \S\ref{subsec:noether-mAdj} produces $\rho_{\mathrm{rot}}$ here in non-rhetorical form. The classifier $f$ is, in equivariant-network architectures by construction, invariant under the action of $G = \mathrm{SO}(3)$ on its input: $f(R \cdot \mathbf{x}) = f(\mathbf{x})$ for all $R \in G$. The Lie algebra $\mathfrak{g} = \mathfrak{so}(3) = \mathrm{span}\{L_x, L_y, L_z\}$ is the infinitesimal generator of this symmetry. The associated Noether-style invariant is the constancy of the classifier output along orbits of $G$: $\frac{d}{d\theta} f(\exp(\theta L_a) \cdot \mathbf{x}) = 0$ for each generator $L_a$, $a \in \{x,y,z\}$, which is the Lie-algebraic statement that the directional derivative of $f$ along every orbit-tangent direction vanishes. CONSTRUCT-MP's Step~1 extracts this invariant from the $G$ block of $\mathcal{A}_{\mathrm{equi}}$; \texttt{Translate} converts it into the executable MR of equation~(\ref{eq:rho-rot}) above. The MR is therefore not an arbitrary catalogue entry: it is the Noether-style consequence of the SO(3) symmetry that the architecture imposes by construction.

\paragraph{Step 6. Pattern coverage status.} The MR $\rho_{\mathrm{rot}}$ above, together with $\rho_{\mathrm{perm}}$ and $\rho_{\mathrm{train}}$ derived in supplementary~S9 (Appendix~D), populates three of the five non-empty MetaPatterns of $\mathbb{M}(\mathcal{A}_{\mathrm{equi}})$. We refrain from reporting a percentage coverage figure: with so few non-empty blocks and a denominator subject to the eight-block sufficiency hypothesis (Hypothesis~\ref{hyp:seven-blocks}), a percentage is more rhetorical than informative. We instead derive in the following two subsections two MRs from the remaining non-empty blocks, $T^{*}$ (self-adjoint) and $\mathcal{T}^{*}$ (time-reversal), chosen specifically because they are not standard practice in equivariant-ML testing and therefore probe whether NOETHER's transfer claim extends beyond well-known invariances such as $\rho_{\mathrm{rot}}$.

\subsubsection{An adjoint-attention duality MR (\texorpdfstring{$\rho_{\mathrm{adj}}$}{rho\_adj})}
\label{subsec:rho-adj}

The self-adjoint block $T^{*}$ in $\mathcal{A}_{\mathrm{equi}}$ contains the attention-kernel symmetriser $T^{*}_{\mathrm{att}}$. We give two formulations, distinguished by whether the MR runs at CI-time (production-suitable) or at debug-time (one-off scaffolding).

\paragraph{Scope: forward-pass-only, CI-time formulation.} For architectures whose attention layer exposes a bilinear form $A(\mathbf{x}^{(1)}, \mathbf{x}^{(2)})$ readable through a forward hook, $\rho_{\mathrm{adj}}$ as defined below is a \emph{CI-time MR}: it consumes only the model's native forward pass, requires no parameter mutation, and respects the model's standard inference contract. Mainstream equivariant transformers, including SE(3)-Transformer~\cite{FuchsTransformer2020} and the Tensor-Field-Network family~\cite{ThomasSmidt2018}, compute attention via Clebsch--Gordan tensor products of irrep features. The bilinear form $A(\mathbf{x}^{(1)}, \mathbf{x}^{(2)}) = \langle Q(\mathbf{x}^{(1)}), K(\mathbf{x}^{(2)}) \rangle$ is generically not Hermitian, but its Hermitian part $\tfrac{1}{2}(A + A^{\dagger})$ has a trace-cyclic invariant under input role-swap that $\rho_{\mathrm{adj}}$ tests. The CI-time formulation reads only the layer's existing $Q,K$ outputs through a frozen forward pass; no probe is injected.

\paragraph{Alternative harness-time formulation.} A debug-time formulation, available in supplementary S1, instruments a symmetric Gram-matrix probe ($\frac{1}{2}(QK + (QK)^\top)$). This is a debug-time scaffold; it is not part of the CI-time MR set and is not used in the §\ref{subsec:case-study} comparative evaluation.

\paragraph{Invariant.} Within $T^{*}_{\mathrm{att}}$, fix the attention-trace invariant
$$\iota_{\mathrm{att}} = \mathrm{Tr}\,A(\cdot, \cdot) \in \mathbb{R},$$
which equals $A^{\dagger}$'s trace by the cyclic property of trace and the Hermiticity of $K$. \texttt{Translate} carries $\iota_{\mathrm{att}}$ to an executable MR over a forward pass:

\begin{equation}
\rho_{\mathrm{adj}}: \quad \big| \mathrm{Tr}\,A(\mathbf{x}^{(1)}, \mathbf{x}^{(2)}) - \mathrm{Tr}\,A(\mathbf{x}^{(2)}, \mathbf{x}^{(1)}) \big| \le \tau_{\mathrm{adj}},
\end{equation}

\noindent for two input clouds $\mathbf{x}^{(1)}, \mathbf{x}^{(2)}$ presented to the network in opposite query/key roles. With $\tau_{\mathrm{adj}} = 10^{-4}$ for fp32 architectures, this MR is executable on any equivariant transformer whose attention layer exposes the bilinear form (or can be probed through a forward-hook). To the best of our knowledge $\rho_{\mathrm{adj}}$ has not been catalogued as a standard MR for equivariant attention testing in the existing literature~\cite{Segura2016, MRScout2024, GenMorph2024}, although Hermitian-attention diagnostics have appeared in the architecture-design literature in non-MR form. NOETHER's contribution here is the algebraic warrant for treating it as a MetaPattern member structurally distinct from $m^{\mathrm{eq}}_{\mathrm{inv}}$.

\subsubsection{A training-trajectory time-reversal MR (\texorpdfstring{$\rho_{\mathrm{train\text{-}rev}}$}{rho\_train-rev})}
\label{subsec:rho-rev}

The time-reversal block $\mathcal{T}^{*}$ in $\mathcal{A}_{\mathrm{equi}}$ is non-empty for SGD trajectories under the Hamiltonian-Monte-Carlo / continuous-time-flow view of stochastic optimisation~\cite{ChenStein2014}. The discretised SGD update $\theta_{t+1} = \theta_t - \eta\,\nabla\mathcal{L}(\theta_t; \xi_t)$ is, to leading order in the learning rate $\eta$ and in the absence of momentum or noise, time-reversible: applying the inverse update $\theta_{t+1} \mapsto \theta_{t+1} + \eta\,\nabla\mathcal{L}(\theta_{t+1}; \xi_t)$ recovers $\theta_t$ up to $O(\eta^2)$. The operator $\mathcal{T}_{\mathrm{seq}} \in \mathcal{T}^{*}$ encodes this involution.

\paragraph{Invariant.} The relevant invariant is the round-trip identity:
$$\iota_{\mathrm{rev}}: \quad \big(\mathcal{T}_{\mathrm{seq}} \circ U_{\eta} \circ \mathcal{T}_{\mathrm{seq}}^{-1} \circ U_{\eta}\big)(\theta) = \theta + O(\eta^2),$$
where $U_{\eta}$ denotes one SGD step.

\textbf{Debug-time MR (not CI).} We label $\rho_{\mathrm{train\text{-}rev}}$ explicitly as a debug-time MR: it requires parameter rollback, mini-batch reordering, and a deliberately constructed vanilla-SGD fixture. Production equivariant pipelines (Allegro, NequIP, MACE, e3nn examples) almost universally use Adam or AdamW, whose update is not in $\mathcal{T}^{*}$ in the strict sense; the MR \emph{fails by construction} on those optimisers, which is the framework's correct prediction. $\rho_{\mathrm{train\text{-}rev}}$ should be invoked once per training-script change at debug time, on a vanilla-SGD fixture, not as part of CI for production pipelines. The debug-time reading aligns with the wider reversible-network literature~\cite{Gomez2017Reversible} where exact-or-approximate inversion is a deliberate test scaffold. \texttt{Translate} carries the invariant into:

\begin{equation}
\rho_{\mathrm{train\text{-}rev}}: \quad \big\| \theta_T - \tilde\theta_T^{\mathrm{(round-trip)}} \big\|_2 \le c\,\eta^{2}\,T,
\end{equation}

\noindent where $\theta_T$ is the parameter vector after $T$ vanilla SGD steps starting from $\theta_0$, and $\tilde\theta_T^{\mathrm{(round-trip)}}$ is the vector obtained by applying $T$ vanilla SGD steps from $\theta_0$ followed by $T$ inverse-SGD steps using the same mini-batch sequence in reversed order, then $T$ forward steps again. The constant $c$ depends on the loss landscape's local Lipschitz structure and is determined empirically per model. This MR \emph{fails by construction} for momentum-based optimisers (Adam, Lion, etc.) whose update is not in $\mathcal{T}^{*}$; that is, the framework correctly predicts which architecture-optimiser pairs admit the MR and which do not. The MR is non-trivial in the sense that detecting an implementation defect in the gradient-reversal direction (e.g.\ wrong sign on a custom loss term) is exactly what $\rho_{\mathrm{train\text{-}rev}}$ surfaces. To our knowledge it has not been catalogued as an MR for training-pipeline testing.

\paragraph{Coverage status, revised.} The full Set~N consists of five MRs, one per non-empty MetaPattern of $\mathbb{M}(\mathcal{A}_{\mathrm{equi}})$:
$$
\mathrm{Set\ N} = \bigl\{\,\rho_{\mathrm{rot}}\;(G),\; \rho_{\mathrm{mono}}\;(O_{\le}),\; \rho_{\mathrm{train}}\;(\mathcal{L}^*),\; \rho_{\mathrm{adj}}\;(T^*),\; \rho_{\mathrm{train\text{-}rev}}\;(\mathcal{T}^*)\,\bigr\}.
$$
$\rho_{\mathrm{rot}}$ is the rotation-invariance MR of Section~\ref{subsec:end-to-end}; $\rho_{\mathrm{adj}}$ and $\rho_{\mathrm{train\text{-}rev}}$ are the non-trivial MRs of Sections~\ref{subsec:rho-adj}--\ref{subsec:rho-rev}; $\rho_{\mathrm{mono}}$ is the point-density-monotonicity MR for the $O_{\le}$ block (top-1 stability under removal of a small fraction of redundant input points), derived from the $O_{\le}^{\mathrm{train}}$ operator restricted to inference-time input perturbations; $\rho_{\mathrm{train}}$ is the inference-idempotency MR for the $\mathcal{L}^*$ block. With all five blocks populated, $\mathrm{coverage}_{\mathrm{NOETHER}}(\mathrm{Set\ N}) = 1.0$ by construction, this is what Theorem~\ref{thm:closure} predicts for a $\mathcal{A}_{\mathrm{equi}}$-derived MR set. The permutation-invariance MR $\rho_{\mathrm{perm}}$ (supplementary~S9 (Appendix~D)) is an auxiliary $G$-block MR retained as a redundant probe and is excluded from Set~N to keep the comparison budget $|N| = |L| = |B| = 5$.

The non-trivial nature of $\rho_{\mathrm{adj}}$ and $\rho_{\mathrm{train\text{-}rev}}$, both absent from the equivariant-ML MR-testing literature we surveyed, substantiates the claim that NOETHER's transfer is generative rather than nominal.

\subsection{A third domain: relational query optimisers}
\label{subsec:third-domain}

Both the Boltzmann (\S\ref{sec:reactor}) and the equivariant-ML (\S\ref{sec:cross-domain}) instantiations share a common mathematical core: Lie-group symmetry, self-adjoint duality, and time-reversal involution. This is precisely the core that motivated the curation of Hypothesis~\ref{hyp:seven-blocks} in the first place. To test transferability beyond this core, this subsection instantiates NOETHER on a domain whose algebraic skeleton is structurally distinct: the relational query optimiser. Relational algebra is built from operators (selection $\sigma$, projection $\pi$, join $\bowtie$, union $\cup$, set/bag-difference) whose equivalence classes are governed by an idempotent semiring with a partial order under containment, not by a Lie group. There is no obvious self-adjoint operator and no time-reversal involution. The third instantiation therefore tests whether the framework yields useful MetaPatterns outside its training image.

\paragraph{The query-optimiser program family and $\mathcal{A}_{\mathrm{rel}}$.}
Let $\mathcal{F}_{\mathrm{rel}}$ be the family of programs that take a SQL query $q$ and a database state $D$ and return a relation $\mathrm{eval}(q, D)$; two queries are equivalent if $\forall D.\,\mathrm{eval}(q, D) = \mathrm{eval}(q', D)$. The operator algebra $\mathcal{A}_{\mathrm{rel}}$ activates $G$ (join / union permutation), $O_{\le}$ (selection-strengthening + projection-coarsening monotonicity), $\mathcal{E}^{*}$ (hash- / merge- / nested-loop plan equivalence under stated cost models~\cite{Wang2024QED, Markl2022LearnedQO}), and $\mathcal{B}^{*}_{\mathrm{rel}}$ (Definition~\ref{def:b-rel}: selection-pushdown $\sigma_p(R \bowtie S) = \sigma_p(R) \bowtie S$ when $\mathrm{attr}(p) \subseteq \mathrm{attr}(R)$; idempotent $\sigma_p \circ \sigma_p = \sigma_p$; constant-folding $\sigma_{1=1}(R) = R$, $R \bowtie \emptyset = \emptyset$). Blocks $T^{*}, \mathcal{T}^{*}, \mathcal{D}^{*}, \mathcal{L}^{*}$ are empty under $\mathcal{A}_{\mathrm{rel}}$ in their canonical forms (no inner-product self-adjointness; no time-reversal involution; no qualitative dynamics; no $\epsilon$-limit operator on exact-relational semantics). The full per-block enumeration of $\mathcal{A}_{\mathrm{rel}}$ operators is in supplementary~S6 \texttt{query\_optimiser/algebra\_breakdown.md}.

\paragraph{Running CONSTRUCT-MP on $\mathcal{A}_{\mathrm{rel}}$.}
Steps 1--4 yield three MetaPatterns from the seven-block subset: $m^{\mathrm{rel}}_{\mathrm{inv}}$ (input-permutation invariance under set semantics), $m^{\mathrm{rel}}_{\mathrm{mono}}$ (selection-strengthening monotonicity), and $m^{\mathrm{rel}}_{\mathrm{cmp}}$ (plan-equivalence under cost-model bounds). The relational query optimiser is the \emph{canonical} instantiation in which $\mathcal{B}^{*}_{\mathrm{rel}}$ is non-empty: it activates the eighth block that neither Boltzmann reactor physics nor equivariant ML exercises.

\paragraph{Four MRs derived for $\mathcal{A}_{\mathrm{rel}}$.}
\begin{enumerate}[nosep,leftmargin=*]
  \item $\rho_{\mathrm{join\text{-}comm}}$ (from $G$): $\mathrm{eval}(q_1 \bowtie q_2, D) = \mathrm{eval}(q_2 \bowtie q_1, D)$ on bag semantics.
  \item $\rho_{\mathrm{select\text{-}push}}$ (from $\mathcal{B}^{*}_{\mathrm{rel}}$): $\mathrm{eval}(\sigma_p(R \bowtie S), D) = \mathrm{eval}(\sigma_p(R) \bowtie S, D)$ when $\mathrm{attr}(p) \subseteq \mathrm{attr}(R)$.
  \item $\rho_{\mathrm{distinct\text{-}idem}}$ (from $\mathcal{B}^{*}_{\mathrm{rel}}$): $\sigma_p \circ \sigma_p = \sigma_p$ as a query-rewrite identity, checkable through plan-tree comparison.
  \item $\rho_{\mathrm{plan\text{-}equiv}}$ (from $\mathcal{E}^{*}$): two execution plans for the same query produce identical relations within stated NULL-propagation rules~\cite{Wang2024QED}.
\end{enumerate}

\paragraph{Position relative to existing automated database-testing work.}
NOETHER's $\mathcal{B}^{*}_{\mathrm{rel}}$ instantiation is complementary to four prior lines: random SQL generation~\cite{Slutz1998RAGS, Bati2007GeneticDB}, automated MR generation for query systems~\cite{Segura2022QBSAutoMR}, formal query-equivalence solvers~\cite{Wang2024QED, Zhou2022SPES, Mohamed2024SQLTables}, and differential / mutation testing~\cite{Ba2024DQP, Fu2025Thanos, Zhong2025SQLancerPP}; NOETHER provides an algebraically grounded MetaPattern enumeration whose closure under \texttt{Translate} (Theorem~\ref{thm:closure}) is a property none of the four lines establishes. A pre-registered protocol comparison on Segura et~al.'s IMDb subset~\cite{Segura2022QBSAutoMR} is in supplementary~S6.

\paragraph{Cross-corroboration from the published query-equivalence literature.}
The four MRs derived above ($\rho_{\mathrm{join\text{-}comm}}$, $\rho_{\mathrm{select\text{-}push}}$, $\rho_{\mathrm{distinct\text{-}idem}}$, $\rho_{\mathrm{plan\text{-}equiv}}$) correspond to query-equivalence identities that have been independently validated by formal solvers in the published query-equivalence literature: $\rho_{\mathrm{join\text{-}comm}}$ is a canonical bag-semantics commutativity identity exercised by SPES and QED on their benchmark suites~\cite{Wang2024QED, Zhou2022SPES}; $\rho_{\mathrm{select\text{-}push}}$ is the standard selection-pushdown rewrite of Calcite's optimiser, verified equivalence-checkable by SPES; $\rho_{\mathrm{distinct\text{-}idem}}$ corresponds to a class of idempotent rewrites that Mohamed et~al.'s tables-and-relations SMT theory~\cite{Mohamed2024SQLTables} can certify; $\rho_{\mathrm{plan\text{-}equiv}}$ is the central object of QED's 299/444 Calcite verification result~\cite{Wang2024QED}. As with the Boltzmann and equivariant-ML cases, NOETHER's contribution on $\mathcal{A}_{\mathrm{rel}}$ is not de-novo MR discovery (these identities are already in the published optimiser-equivalence literature) but the uniform algebraic warrant: the same eight-block downstream construction that produces reactor-physics and equivariant-ML MetaPatterns also produces relational MetaPatterns, exercising the relational-equivalence block $\mathcal{B}^{*}_{\mathrm{rel}}$ that the other two domains do not activate. The structural-extension claim --- that the framework reaches outside the Lie-group / self-adjoint / time-reversal core --- is supported at the algebra-skeleton level by this MR-to-published-identity correspondence, independently of whether NOETHER's $\mathcal{B}^{*}_{\mathrm{rel}}$ MRs outperform any specific automated baseline.

\paragraph{What this third domain establishes, and the Theorem~$1'$ verdict on $\mathcal{A}_{\mathrm{equi}}$ and $\mathcal{A}_{\mathrm{rel}}$.}
The instantiation establishes that NOETHER applies outside the Lie-group / self-adjoint / time-reversal core: relational optimisers exercise $\mathcal{B}^{*}_{\mathrm{rel}}$ where the first two domains do not. The counterexample-search protocol applied to both $\mathcal{A}_{\mathrm{equi}}$ and $\mathcal{A}_{\mathrm{rel}}$ (mirroring \S\ref{subsec:negative-pwr}) yields the following verdict. \emph{On $\mathcal{A}_{\mathrm{equi}}$}: Theorem~$1'$ is falsified by two pairwise-independent candidates --- $\rho_{\mathrm{compose}}$ (joint $\mathrm{SO}(3) \times \mathfrak{S}_n$ action on point sets~\cite{Satorras2021EGNN}, requiring a product-group $\pi$-template) and $\rho_{\mathrm{gauge}}$ (gauge-equivariance on manifolds~\cite{Cohen2019Gauge}, requiring a bundle-section $\pi$-template parametrised by $\mathbf{g} \in \mathcal{C}(M,H)$). The two extensions are independent (a product-group $\pi$ does not absorb a gauge-bundle section); the SO(3) Lie-algebra structure alone does not force closure on $\mathcal{A}_{\mathrm{equi}}$. \emph{On $\mathcal{A}_{\mathrm{rel}}$}: a survey of $\ge 10$ unverified cases from~\cite{Wang2024QED}'s residue (Aggregate$\times$Project, constant-key, Decorrelate, NULL three-valued logic) yields five Theorem~$1'$ candidates; the primary witness $\rho_{\mathrm{agg\text{-}proj}}$ (Calcite's \texttt{AggregateExtractProjectRule}) requires $\mathcal{B}^{*}_{\mathrm{rel}}$ plus an aggregation-as-algebra ninth block. Combined with the five $\mathcal{A}_{\mathrm{PWR}}$ obstructions of Table~\ref{tab:five-obstructions}, the three algebra-survey artefacts (\texttt{theory/equi\_thm1prime\_search.md}, \texttt{theory/rel\_thm1prime\_search.md}, \texttt{theory/translate\_extensions.md}) identify ten \texttt{Translate}-extension dimensions. Pairwise independence is proved by per-block exhaustion on the five PWR obstructions (Appendix~\ref{app:negative-proofs}); the five candidate $\mathcal{A}_{\mathrm{equi}}$ + $\mathcal{A}_{\mathrm{rel}}$ dimensions (two specialising PWR-side dimensions to type-distinct primitives, three net relational-side) are asserted independent by inspection, with formal per-dimension exhaustion as follow-up. Remark~\ref{rem:counterex} catalogues six further out-of-decomposition program-family classes.

\subsection{A negative instantiation: irreducibly compositional MRs in PWR core simulators}
\label{subsec:negative-pwr}

Sections~\ref{sec:reactor}, \ref{sec:cross-domain}, and \ref{subsec:third-domain} instantiated NOETHER on three program families (Boltzmann reactor physics, equivariant ML, and relational query optimisers) for which the framework's downstream construction was non-vacuous and produced executable MRs. This subsection instantiates NOETHER on a fourth program family, the PWR core diffusion solver family, with an inverted purpose: rather than demonstrating coverage, we exhibit two specific MRs from the standard PWR safety-analysis literature that the framework's \texttt{Translate} operator cannot reach under any single-block derivation. The two MRs together identify five pairwise-independent structural obstructions in \texttt{Translate}'s present signature, jointly recasting Theorem~$1'$ (Conjecture~\ref{conj:absolute}, Appendix~\ref{app:proofs}) from an open conjecture to a falsified statement on a structurally significant operator algebra.

The MRs chosen are not pathological cases. They are core safety-analysis MRs that PWR core simulators are required by regulatory practice and engineering convention to reproduce: non-additivity of control-bank reactivity worth (the algebraic root of rod-bank shadowing and anti-shadowing phenomena) and second-order mixed dependence of $k_{\mathrm{eff}}$ on moderator temperature and boron concentration (the standard MTC-vs-boron design curve). The negative instantiation thus uses NOETHER's principal application domain (reactor physics) to test the framework's most ambitious stated claim (algebraic closure over arbitrary single-block-derivable MRs).

\paragraph{Why PWR rather than ML or DB as the negative-instantiation domain.}
We choose the PWR core diffusion algebra rather than $\mathcal{A}_{\mathrm{equi}}$ or $\mathcal{A}_{\mathrm{rel}}$ as the negative-instantiation testbed for two reasons. First, regulatory essentiality: 10 CFR 50 and NRC Regulatory Guide 1.77 require PWR core simulators to reproduce the two MRs in the definitions below for safety-analysis qualification, so the counterexamples are not contrived edge cases. Second, engineering documentability: the PWR core diffusion algebra has a published canonical form (Bell \& Glasstone~\cite{BellGlasstone1970} \S6.1, Lewis \& Miller~\cite{LewisMiller1993} \S4.2) against which the counterexample's structural obstructions can be precisely located in $\mathcal{A}_P$'s signature, which $\mathcal{A}_{\mathrm{equi}}$ (a domain of recent literature without a unified canonical algebra) and $\mathcal{A}_{\mathrm{rel}}$ (canonical but under-equipped with non-rewrite operators) do not yet support. Whether $\mathcal{A}_{\mathrm{equi}}$ or $\mathcal{A}_{\mathrm{rel}}$ admit analogous Theorem~$1'$ counterexamples is open and committed as follow-up in \S\ref{subsec:third-domain}'s open-question paragraph.

\paragraph{The PWR core diffusion algebra.}
\label{subsec:pwr-algebra}
Let $\mathcal{F}_{\mathrm{PWR}}$ be the program family of PWR core diffusion solvers (canonical examples: PARCS, SIMULATE-3/5, ANC, SMART). Its operator algebra $\mathcal{A}_{\mathrm{PWR}}$ contains, in addition to the operators of $\mathcal{A}_{\mathrm{Boltz}}$ (Section~\ref{subsec:reactor-mapping}), the following PWR-specific generators:
\begin{itemize}[nosep,leftmargin=*]
  \item $\mathcal{O}_{\mathrm{rod}}$: discrete control-rod insertion operators, parametrised by rod-bank label $g$ and insertion depth $d$, generating an additive-on-geometry semigroup under composition. In the steady-state setting we adopt throughout, $\mathcal{O}_{\mathrm{rod}}^{A} \cdot \mathcal{O}_{\mathrm{rod}}^{B}$ and $\mathcal{O}_{\mathrm{rod}}^{B} \cdot \mathcal{O}_{\mathrm{rod}}^{A}$ act identically on the input space (both produce the same total inserted geometry), so $\mathcal{O}_{\mathrm{rod}}$ is commutative on geometry. The \emph{reactivity-worth functional} on $\mathcal{O}_{\mathrm{rod}}$, however, is not a semigroup homomorphism: $d\rho(A \cup B) \neq d\rho(A) + d\rho(B)$ in general. This non-additivity, not non-commutativity, is what the present subsection exploits.
  \item $\mathcal{M}_{C_{B}}$: continuous boration operators acting on the moderator material composition through the soluble-boron concentration $C_{B}$ (typical PWR operating range: $0$--$2000$~ppm).
  \item $\mathcal{M}_{T_{\mathrm{mod}}}$: continuous moderator-temperature operators acting on the cross-section library through the parametric dependence $\Sigma(T_{\mathrm{mod}})$ (typical PWR operating range: $290$--$320$\textdegree{}C).
\end{itemize}

Decomposed along the eight-block decomposition of Section~\ref{subsec:decomposition}:
\begin{itemize}[nosep,leftmargin=*]
  \item $G \supseteq \{\mathcal{O}_{\mathrm{rod}}\}$ (treated tentatively as a commutative semigroup; the assignment will be shown below to fail);
  \item $O_{\le} \supseteq \{\mathcal{M}_{C_{B}},\,\mathcal{M}_{T_{\mathrm{mod}}}\}$ (parameter-monotonicity operators);
  \item $T^{*} \supseteq \{-\nabla \cdot D \nabla + \Sigma_{a}\}$ (the self-adjoint diffusion operator under isotropic scattering; Bell \& Glasstone~\cite{BellGlasstone1970} \S6.1, Lewis \& Miller~\cite{LewisMiller1993} \S4.2);
  \item $\mathcal{T}_{\mathrm{rev}}^{*} = \emptyset$ (PWR diffusion is dissipative);
  \item $\mathcal{L}^{*}, \mathcal{D}^{*}, \mathcal{E}^{*}$ as in $\mathcal{A}_{\mathrm{Boltz}}$ with appropriate restrictions to the diffusion regime;
  \item $\mathcal{B}_{\mathrm{rel}}^{*} = \emptyset$ (no idempotent-semiring structure on PWR core states).
\end{itemize}

We will show that despite this rich block structure, two specific PWR-safety MRs cannot be derived through \texttt{Translate} from any single block.

\paragraph{Main proposition: non-additivity of rod-bank reactivity worth.}
\label{subsubsec:nonadd}

\begin{definition}[Differential rod-bank reactivity worth, exact form]
\label{def:drho-exact}
For a base input $x_{0} \in \mathcal{X}$ and a rod-bank operator $A \in \mathcal{O}_{\mathrm{rod}}$, the (positive-convention) reactivity worth of $A$ is
\[
d\rho(A;\,x_{0}) \;:=\; \frac{1}{k_{\mathrm{eff}}(P(x_{0}))} \;-\; \frac{1}{k_{\mathrm{eff}}(P(\mathcal{O}_{\mathrm{rod}}^{A} \cdot x_{0}))} \;>\; 0,
\]
where $k_{\mathrm{eff}}(P(x))$ denotes the dominant eigenvalue of the diffusion operator at configuration $x$. We write $d\rho(A \cup B;\,x_{0})$ when both banks $A$ and $B$ are inserted simultaneously. Equivalently, in conventional reactor-physics notation, $d\rho(A;\,x_{0}) = \rho(x_{0}) - \rho(\mathcal{O}_{\mathrm{rod}}^{A} \cdot x_{0})$ where $\rho = 1 - 1/k_{\mathrm{eff}}$ is the static reactivity. The exact form avoids first-order perturbation-theoretic approximation; the standard adjoint-perturbation reading appears below.
\end{definition}

\begin{definition}[Non-additivity of rod-bank reactivity worth, $\rho_{\mathrm{nonadd}}$]
\label{def:rho-nonadd}
For two control-rod banks $A, B \in \mathcal{O}_{\mathrm{rod}}$ with disjoint geometric supports and a base input $x_{0}$, define the mixed-difference functional
\[
\Delta_{AB}(x_{0}) \;:=\; d\rho(A \cup B;\,x_{0}) \;-\; d\rho(A;\,x_{0}) \;-\; d\rho(B;\,x_{0}).
\]
The non-additivity metamorphic relation asserts: there exist disjoint-support banks $A, B$ and base input $x_{0} \in \mathcal{X}$ in the standard PWR operating envelope (typical multi-bank insertion patterns of D, C, B, A control banks at partial insertions) such that
\[
\rho_{\mathrm{nonadd}}: \quad |\Delta_{AB}(x_{0})| \;>\; \tau_{\mathrm{nonadd}},
\]
with $\tau_{\mathrm{nonadd}} = 5$~pcm. The tolerance is calibrated to PWR engineering practice: empirical $|\Delta_{AB}|$ for adjacent rod banks ranges over $10^{1}$--$10^{2}$~pcm~\cite{StammlerAbbate1983}; the tolerance $\tau_{\mathrm{nonadd}} = 5$~pcm lies safely above PWR core-simulator iterative convergence tolerances (typically $0.1$--$1$~pcm for $k_{\mathrm{eff}}$) and below the physical signal magnitude. Selection of test bank configurations $(A, B)$ is the user's responsibility; the framework's failure to derive $\rho_{\mathrm{nonadd}}$ is independent of the specific tolerance choice.
\end{definition}

\paragraph{Two physical regimes of $\Delta_{AB}$.} When $\Delta_{AB}(x_{0}) > 0$ for adjacent or geometrically overlapping rod banks, the standard PWR designation is \emph{positive shadowing} (the worth of the second bank is reduced because the adjoint flux $\phi^{\dagger}$ is depressed in $A$'s geometric support). Positive shadowing is the dominant regime in conventional PWR analyses; it is the phenomenon explicitly addressed in NRC SER reviews of SIMULATE-3/5, ANC, and PARCS, with typical magnitudes of $50$--$500$~pcm for adjacent bank pairs. When $\Delta_{AB}(x_{0}) < 0$ for distant banks under asymmetric insertion patterns, the designation is \emph{anti-shadowing}; this is a secondary but routinely measurable regime in standard commercial PWRs (typical magnitudes $5$--$20$~pcm for distant bank pairs in 4-loop Westinghouse and EPR configurations), and is more pronounced in small-core or strongly asymmetric insertion patterns. Both regimes are routinely measured in PWR startup physics testing and are documented in Stamm'ler \& Abbate~\cite{StammlerAbbate1983} as second-order but non-negligible phenomena that core simulators must reproduce. The MR $\rho_{\mathrm{nonadd}}$ is direction-agnostic and covers both; the test only requires that $|\Delta_{AB}|$ exceed the tolerance, irrespective of sign. Both regimes are accessible to a verifying core simulator regardless of whether the engineering analysis is concerned primarily with one or the other.

\paragraph{Standard adjoint-perturbation reading (informative, not load-bearing for the proof).}
Under first-order perturbation theory~\cite[\S6.3]{BellGlasstone1970},~\cite[\S4.4]{LewisMiller1993}, the worth of bank $B$ in the configuration with bank $A$ already inserted is
\[
d\rho(B;\,A,\,x_{0}) \;\approx\; -\frac{\langle \phi^{\dagger}_{A},\, \delta H_{B}\, \phi_{A}\rangle}{\langle \phi^{\dagger}_{A},\, F_{A}\, \phi_{A}\rangle},
\]
where $(\phi_{A}, \phi^{\dagger}_{A})$ are the forward and adjoint principal eigenfunctions of the $A$-rodded but $B$-unrodded core, $F_{A} = \chi \nu \Sigma_{f}$ is the fission source operator at that configuration, and $\delta H_{B}$ is the operator perturbation produced by inserting $B$ (which generally affects $\Sigma_{a}$, $\Sigma_{t}$, and the scattering kernel). The adjoint flux $\phi^{\dagger}_{A}$ is the principal eigenfunction of a structurally different adjoint operator $H^{\dagger}_{A}$ from $\phi^{\dagger}_{\emptyset}$ (the unrodded adjoint): in particular, $\phi^{\dagger}_{A}$ is locally depressed in $A$'s geometric support and globally redistributed elsewhere. The non-additivity $\Delta_{AB} \neq 0$ thus follows from $\phi^{\dagger}_{A} \neq \phi^{\dagger}_{\emptyset}$, which is a consequence of $H^{\dagger}_{A} \neq H^{\dagger}_{\emptyset}$. The exact form (Definition~\ref{def:drho-exact}) does not require this perturbation-theoretic reading; the proof below uses only the eigenvalue definitions.

\begin{proposition}[Non-additivity is not \texttt{Translate}-reachable on $\mathcal{A}_{\mathrm{PWR}}$]
\label{prop:nonadd}
Let $\mathcal{A}_{\mathrm{PWR}}$ be the PWR core diffusion algebra above, with eight-block decomposition $\mathcal{D}(\mathcal{A}_{\mathrm{PWR}})$. For every block $s \in \mathcal{D}(\mathcal{A}_{\mathrm{PWR}})$ and every invariant $\iota \in \mathcal{I}_{s}$, $\mathrm{Translate}(\iota, s) \neq \rho_{\mathrm{nonadd}}$. Equivalently, $\rho_{\mathrm{nonadd}} \notin \mathrm{MR}(\mathcal{A}_{\mathrm{PWR}})$ in the sense of Definition~\ref{def:alg-induced}.
\end{proposition}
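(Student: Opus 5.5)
The plan is a per-block exhaustion over the eight blocks of $\mathcal{D}(\mathcal{A}_{\mathrm{PWR}})$. For each block $s$ I show that no invariant $\iota=(\Phi,\pi)\in\mathcal{I}_s$ can yield $\mathrm{Translate}(\iota,s)=\rho_{\mathrm{nonadd}}$. First I isolate three structural features of $\rho_{\mathrm{nonadd}}$ that any preimage would have to reproduce:
\begin{itemize}
\item[(a)] it is an \emph{existential} statement over $(A,B,x_0)$, whereas every $\rho_{\iota,s}$ in Definition~\ref{def:translate} is universally quantified over $x$ and over $\phi_1,\dots,\phi_n\in\Phi$;
\item[(b)] it is a \emph{strict lower bound} $|\Delta_{AB}|>\tau_{\mathrm{nonadd}}$ on a second-order mixed difference of a functional, here $k_{\mathrm{eff}}\mapsto 1/k_{\mathrm{eff}}$;
\item[(c)] it evaluates four configurations $x_0$, $\mathcal{O}^A x_0$, $\mathcal{O}^B x_0$ and $\mathcal{O}^{A\cup B}x_0$, and it asserts that the worth functional fails to be a semigroup homomorphism.
\end{itemize}
Feature (a) alone already separates $\rho_{\mathrm{nonadd}}$ from any Translate image as a logical statement, because the two quantifier prefixes differ. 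I will still argue block by block, since the paper's "up to $\sim$" equivalence could be read permissively.

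The empty blocks $\mathcal{T}^{*}_{\mathrm{rev}}$ and $\mathcal{B}^{*}_{\mathrm{rel}}$ are immediate: $\mathcal{I}_s=\emptyset$, so there is nothing to translate. For the other blocks I argue as follows.
\begin{itemize}
\item[$G$] An invariant here is an orbit relation $P(g\cdot x)=\rho(g)P(x)$. Applied to $\mathcal{O}_{\mathrm{rod}}$, Translate can only produce equalities, or homomorphism-type relations, along the semigroup. The most such a $\pi$ can express is additivity $\Delta_{AB}=0$ (if the tentative homomorphism assignment held) or commutativity. It cannot express the negation of additivity. So either $\pi$ asserts the wrong relation, or the invariant is not valid on $\mathcal{F}_{\mathrm{PWR}}$ at all, since shadowing makes $\Delta_{AB}\neq0$. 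This is the "assignment fails" noted in the decomposition.
\item[$O_{\le}$] This block contains only $\mathcal{M}_{C_B}$ and $\mathcal{M}_{T_{\mathrm{mod}}}$, and rods are not in $\Phi$. Even if the tentative placement of $\mathcal{O}_{\mathrm{rod}}$ were allowed here, monotonicity would give only the sign of single-bank worths, a first-order one-directional comparison. It would not give a magnitude bound on a mixed difference whose sign is indefinite: the shadowing and anti-shadowing regimes have opposite signs.
\item[$T^{*}$] Here $\pi$ is a reciprocity identity between the two arguments of an inner product at a \emph{fixed} operator. Non-additivity arises from $\phi^\dagger_A\neq\phi^\dagger_\emptyset$, i.e.\ from comparing adjoints of distinct operators $H_A\neq H_\emptyset$. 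That comparison lies outside a single-operator symmetry check, and it also works on operator-spectrum quantities rather than $P(x)$ tuples.
\item[$\mathcal{L}^{*}$] Limit invariants are convergence statements as $\theta\to\theta_*$ and give no nonzero lower bound at fixed configurations.
\item[$\mathcal{D}^{*}$] These invariants describe trajectory shape and say nothing about algebraic combinations of steady-state eigenvalues.
\item[$\mathcal{E}^{*}$] These invariants compare methods, whereas $\rho_{\mathrm{nonadd}}$ compares inputs under a single method.
\end{itemize}

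Once every block is excluded, the equivalent form $\rho_{\mathrm{nonadd}}\notin\mathrm{MR}(\mathcal{A}_{\mathrm{PWR}})$ follows directly from Definition~\ref{def:alg-induced}.

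The main obstacle is making the exclusion rigorous rather than rhetorical. The notion "$\pi,\pi'$ define the same constraint up to relabelling" is semantic, so I must show that no cleverly chosen $k=4$ relation $\pi$ in, say, $G$ or $O_{\le}$ could encode $|\Delta_{AB}|>\tau$. I would attack this first through the validity requirement in Definition~\ref{def:block-invariant}: $\pi$ must hold for \emph{every} $P\in\mathcal{F}$ and every choice of operators. However, $\mathcal{F}_{\mathrm{PWR}}$ contains configurations, such as widely separated banks in a large core, where $|\Delta_{AB}|<\tau$. Hence the universal statement $\forall A,B,x_0:|\Delta_{AB}|>\tau$ is not a valid invariant. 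Meanwhile the existential statement that $\rho_{\mathrm{nonadd}}$ actually is lies outside the syntactic form of Translate. This dichotomy of invalid-if-universal versus unreachable-if-existential is the crux, and I would state it as a lemma used uniformly in every block.
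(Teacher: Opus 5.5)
Your block-by-block exhaustion follows the paper's proof in Appendix~\ref{app:negative-proofs} almost case for case: the same reasons in $T^{*}$, $\mathcal{L}^{*}$, $\mathcal{D}^{*}$, $\mathcal{E}^{*}$ and the two empty blocks, and the sign-indefinite mixed difference in $O_{\le}$. If each block's $\pi$ is confined to its template in Table~\ref{tab:translate}, as the paper assumes, the exhaustion goes through. Two small corrections are needed.
\begin{itemize}
\item The paper leads its $G$ case with a codomain obstruction that you invoke only in $T^{*}$: $k_{\mathrm{eff}}$ is a spectral quantity of $H_X$ and is argued not to lie in $\mathcal{Y}$.
\item The decomposition only gives $O_{\le}\supseteq\{\mathcal{M}_{C_B},\mathcal{M}_{T_{\mathrm{mod}}}\}$. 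It also inherits the Boltzmann absorption-monotonicity operators, through which rod insertion acts. So your fallback argument, not ``rods are not in $\Phi$'', has to carry that case.
\end{itemize}

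The genuine gap is the lemma you call the crux, which is meant to handle an arbitrary $\pi$ as Definition~\ref{def:block-invariant} literally allows. It fails in three ways.
\begin{itemize}
\item[(i)] Different quantifier prefixes do not make two statements logically inequivalent. Feature (a) settles the question only if ``$=$'' is read syntactically, and under that reading the validity argument is superfluous.
\item[(ii)] The dichotomy is not exhaustive, because a universal image need not be the blanket $\forall A,B,x_0:|\Delta_{AB}|>\tau$. Fix one adjacent pair at a substantial depth, $\Phi=\{\mathcal{O}^{A^{*}}_{\mathrm{rod}},\mathcal{O}^{B^{*}}_{\mathrm{rod}}\}$. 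Take the guarded relation ``$x_0\notin R$ or $|\Delta_{A^{*}B^{*}}(x_0)|>\tau$'', with $R$ the all-rods-out operating envelope where adjacent-bank shadowing is $50$--$500$~pcm. Here $\pi$ reads the input coordinates of its tuple, and reads $k_{\mathrm{eff}}$ off $P(x)$ as Definition~\ref{def:drho-exact} does. This invariant is valid on all of $\mathcal{F}_{\mathrm{PWR}}$, and its image implies $\rho_{\mathrm{nonadd}}$, so ``invalid if universal'' fails. If you instead accept the paper's codomain obstruction, it is that obstruction, not your dichotomy, that excludes such a $\pi$.
\item[(iii)] Validity is a fact about $\mathcal{F}_{\mathrm{PWR}}$, but judging equivalence over $\mathcal{F}_{\mathrm{PWR}}$ makes the proposition false. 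By the validity clause, every \texttt{Translate} image holds on all of $\mathcal{F}_{\mathrm{PWR}}$, and the paper asserts the same of $\rho_{\mathrm{nonadd}}$.
\end{itemize}

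The repair is to define equality as equivalence over all programs $P:\mathcal{X}\to\mathcal{Y}$, and to replace the dichotomy by a locality argument. Suppose a universal image were equivalent to $\rho_{\mathrm{nonadd}}$. A program $P_0$ with exactly additive worths fails $\rho_{\mathrm{nonadd}}$, so it would violate $\pi$ on some tuple with finite support $S$. Now alter $P_0$ only on a rod rectangle $\{x_0,\mathcal{O}^{A}x_0,\mathcal{O}^{B}x_0,\mathcal{O}^{A\cup B}x_0\}$ disjoint from $S$; such a rectangle is available because $x_0$ ranges over a continuum. The result satisfies $\rho_{\mathrm{nonadd}}$ but still violates $\pi$, a contradiction. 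This rigorous version of (a) is a block-independent argument that the paper does not have, since the paper simply takes the templates as binding.
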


The proof, by exhausting the eight blocks against the per-block \texttt{Translate} templates of Table~\ref{tab:translate}, is given in Appendix~\ref{app:negative-proofs}.

\paragraph{Engineering significance.}
Non-additivity of control-bank reactivity worth is a textbook PWR safety phenomenon. Bell \& Glasstone~\cite[\S10.4]{BellGlasstone1970} and Lewis \& Miller~\cite[\S4.4]{LewisMiller1993} treat the underlying adjoint-perturbation mechanism; Stamm'ler \& Abbate~\cite{StammlerAbbate1983} document its operational consequences in PWR rod-worth measurements. Non-additivity is observed in both critical and sub-critical PWR core configurations, with the sub-critical regime exhibiting larger adjoint-flux distortions and correspondingly larger $|\Delta_{AB}|$. PWR core simulators are required by regulatory practice (e.g.\ NRC SER for SIMULATE-3/5, ANC, PARCS; cf.\ NRC Regulatory Guide~1.77~\cite{NRCRG177} on rod-ejection accident analysis, with rod-worth modelling accuracy as a critical input) to reproduce the worth functional with sub-percent accuracy across multi-bank insertion patterns. A framework that cannot, in principle, derive this MR from its algebraic input is missing structural content that PWR engineers routinely test for.

\paragraph{Supporting proposition: second-order mixed dependence of $k_{\mathrm{eff}}$ on $T_{\mathrm{mod}}$ and $C_{B}$.}
\label{subsubsec:mtcbor}

\begin{definition}[MTC-vs-boron mixed-derivative MR, $\rho_{\mathrm{MTC\text{-}bor}}$]
\label{def:rho-mtcbor}
Let $k_{\mathrm{eff}}(T_{\mathrm{mod}}, C_{B};\,\xi_{0})$ denote the dominant eigenvalue of the PWR core diffusion operator as a function of moderator temperature $T_{\mathrm{mod}}$ and soluble-boron concentration $C_{B}$, holding fixed the auxiliary state $\xi_{0} = (\mathrm{BU}_{0},\, T_{\mathrm{fuel},0},\, \text{geometry},\, \text{loading pattern},\, \text{rod-bank position})$. Define the static reactivity
\[
\rho_{\mathrm{static}}(T_{\mathrm{mod}}, C_{B};\,\xi_{0}) \;:=\; 1 \;-\; \frac{1}{k_{\mathrm{eff}}(T_{\mathrm{mod}}, C_{B};\,\xi_{0})},
\]
expressed in pcm units ($1$~pcm $= 10^{-5}$). The moderator temperature coefficient (MTC) at $(T_{\mathrm{mod}}, C_{B};\,\xi_{0})$ is the partial derivative of static reactivity with respect to moderator temperature, in standard PWR engineering convention~\cite[\S10.3]{BellGlasstone1970},~\cite[\S3.4]{Stacey2007}:
\[
\alpha_{\mathrm{MTC}}(T_{\mathrm{mod}}, C_{B};\,\xi_{0}) \;:=\; \left.\frac{\partial \rho_{\mathrm{static}}}{\partial T_{\mathrm{mod}}}\right|_{C_{B},\,\xi_{0}\,\text{fixed}}, \quad \text{(units: pcm/\textdegree{}F or pcm/\textdegree{}C)}.
\]
The second-order mixed dependence MR asserts: for $(T_{\mathrm{mod}}, C_{B};\,\xi_{0})$ in the standard PWR operating envelope at hot-full-power (HFP), all-rods-out (ARO) reference conditions (typical: $T_{\mathrm{mod}} \in [290, 320]$\textdegree{}C, $C_{B} \in [0, 2000]$~ppm, $\mathrm{BU}_{0} \in [0, 50]$~GWd/tU, full-power $T_{\mathrm{fuel},0}$), the mixed second partial derivative satisfies
\[
\rho_{\mathrm{MTC\text{-}bor}}: \quad \left|\frac{\partial^{2}\rho_{\mathrm{static}}}{\partial T_{\mathrm{mod}}\,\partial C_{B}}\right| \;>\; \tau_{\mathrm{MTC\text{-}bor}},
\]
with $\tau_{\mathrm{MTC\text{-}bor}} = 0.01$~pcm/\textdegree{}F/ppm (equivalently $\sim 1.8 \times 10^{-2}$~pcm/\textdegree{}C/ppm). The tolerance is calibrated to PWR engineering practice: the empirical value of $\partial \alpha_{\mathrm{MTC}}/\partial C_{B}$ in Westinghouse/Framatome PWR designs ranges over $0.02$--$0.04$~pcm/\textdegree{}F/ppm at BOC-to-EOC cycle conditions~\cite[\S3.4]{Stacey2007},~\cite[\S8.3]{LamarshBaratta2001}; the tolerance $\tau_{\mathrm{MTC\text{-}bor}} = 0.01$~pcm/\textdegree{}F/ppm lies safely below this physical magnitude and well above PWR core-simulator differential-perturbation noise (typically $\sim 10^{-3}$~pcm/\textdegree{}F/ppm for converged eigenvalue calculations).
\end{definition}

\paragraph{Note on equivalent formulations.} Since $\rho_{\mathrm{static}} = 1 - 1/k_{\mathrm{eff}}$ and $k_{\mathrm{eff}} \approx 1$ at critical PWR conditions, $\partial \rho_{\mathrm{static}}/\partial T_{\mathrm{mod}} = (1/k_{\mathrm{eff}}^{2})\,\partial k_{\mathrm{eff}}/\partial T_{\mathrm{mod}} \approx \partial k_{\mathrm{eff}}/\partial T_{\mathrm{mod}}$ to within $\sim 10^{-4}$ relative error. The MR may equivalently be expressed in terms of $|\partial^{2} k_{\mathrm{eff}}/(\partial T_{\mathrm{mod}}\,\partial C_{B})| > \tau_{\mathrm{MTC\text{-}bor}}$ with the same tolerance, modulo this $k$-vs-$\rho$ scaling factor; the algebraic argument of Appendix~\ref{app:negative-proofs} (which depends only on $k_{\mathrm{eff}}$ being an operator-spectrum quantity) applies identically to either formulation.

\paragraph{Equivalent formulation in terms of MTC.} Definition~\ref{def:rho-mtcbor} is equivalent to asserting
\[
\left|\frac{\partial \alpha_{\mathrm{MTC}}(T_{\mathrm{mod}}, C_{B};\,\xi_{0})}{\partial C_{B}}\right| \;>\; \tau_{\mathrm{MTC\text{-}bor}}
\]
at HFP, ARO conditions. Physically, this captures the well-established PWR design property that MTC becomes monotonically more negative as $C_{B}$ decreases from BOC values ($\sim$1500~ppm) to EOC values ($\sim$0~ppm). At HFP operating conditions, MTC is regulated to be $\le 0$~pcm/\textdegree{}F across the entire operating range per 10~CFR~50 Appendix~A General Design Criterion~11~\cite{NRC10CFR50AppA}; the slope $\partial \alpha_{\mathrm{MTC}}/\partial C_{B}$ governs how rapidly MTC moves toward more negative values as boron is depleted over the cycle, with typical magnitudes ranging from near-zero (at BOC, high boron) to $-30$ to $-50$~pcm/\textdegree{}F (at EOC, near-zero boron). At hot-zero-power (HZP) conditions, BOC MTC may approach zero or be slightly positive within the analytical envelope (a regime relevant to startup physics testing but not to power-operation safety analysis); HZP MTC is bounded by separate Technical Specifications limits.

The strength of $\partial \alpha_{\mathrm{MTC}}/\partial C_{B}$ is the engineering target of \emph{MTC-vs-boron concentration curve} calculations performed for every PWR cycle reload, and is governed by the competition between three physical mechanisms~\cite[\S3.4]{Stacey2007}:
\begin{enumerate}[nosep,leftmargin=*]
  \item[(a)] \textbf{Reduced moderation}: $T_{\mathrm{mod}} \uparrow \Rightarrow$ moderator density $\downarrow \Rightarrow$ neutron moderation reduced $\Rightarrow$ thermal-flux fraction $\downarrow \Rightarrow$ fission rate $\downarrow$. Contributes a \emph{negative} term to MTC.
  \item[(b)] \textbf{Boron poison evacuation}: at high $C_{B}$, $T_{\mathrm{mod}} \uparrow \Rightarrow$ moderator density $\downarrow \Rightarrow$ boron number density $\downarrow$ (since boron is dissolved in the moderator) $\Rightarrow$ boron absorption $\downarrow$. Contributes a \emph{positive} term to MTC, partially cancelling (a). This term is proportional to $C_{B}$ and vanishes as $C_{B} \to 0$.
  \item[(c)] \textbf{Spectrum hardening and $^{238}\mathrm{U}$ resonance enhancement}: $T_{\mathrm{mod}} \uparrow \Rightarrow$ reduced moderation $\Rightarrow$ harder neutron spectrum $\Rightarrow$ enhanced $^{238}\mathrm{U}$ resonance absorption (Doppler-weighted by the fuel temperature) $\Rightarrow$ neutron loss $\uparrow$. Contributes a further \emph{negative} term to MTC. This term is small for low-enriched UO$_{2}$ but becomes significant for MOX fuels and high-enrichment ($>5\%$) UO$_{2}$.
\end{enumerate}

At high $C_{B}$ the partial cancellation from (b) is strong; mechanisms (a) and (c) are partially offset and MTC magnitude is small. At low $C_{B}$, mechanism (b) vanishes; mechanisms (a) and (c) dominate and MTC becomes strongly negative. The mixed second derivative $\partial^{2}\rho_{\mathrm{static}}/(\partial T_{\mathrm{mod}}\,\partial C_{B})$ measures the rate at which the boron-mediated cancellation is removed as $C_{B}$ decreases.

\begin{proposition}[MTC-vs-boron mixed dependence is not \texttt{Translate}-reachable]
\label{prop:mtcbor}
Let $\mathcal{A}_{\mathrm{PWR}}$ be the PWR core diffusion algebra above. For every block $s \in \mathcal{D}(\mathcal{A}_{\mathrm{PWR}})$ and every invariant $\iota \in \mathcal{I}_{s}$, $\mathrm{Translate}(\iota, s) \neq \rho_{\mathrm{MTC\text{-}bor}}$. Equivalently, $\rho_{\mathrm{MTC\text{-}bor}} \notin \mathrm{MR}(\mathcal{A}_{\mathrm{PWR}})$ in the sense of Definition~\ref{def:alg-induced}.
\end{proposition}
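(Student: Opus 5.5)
The plan is to mirror the per-block exhaustion used for Proposition~\ref{prop:nonadd}. By Definition~\ref{def:alg-induced}, $\rho_{\mathrm{MTC\text{-}bor}} \in \mathrm{MR}(\mathcal{A}_{\mathrm{PWR}})$ would require a single block $s$ and an invariant $\iota = (\Phi, \pi) \in \mathcal{I}_s$ with $\mathrm{Translate}(\iota, s) = \rho_{\mathrm{MTC\text{-}bor}}$. I would therefore fix each of the eight blocks of $\mathcal{D}(\mathcal{A}_{\mathrm{PWR}})$ in turn and show that no admissible pair $(\Phi,\pi)$ reproduces the MR.

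First I would extract from Definition~\ref{def:rho-mtcbor} three structural features that any preimage under \texttt{Translate} must have. \emph{(F1) Two-parameter mixing:} the MR needs operators from both $\mathcal{M}_{T_{\mathrm{mod}}}$ and $\mathcal{M}_{C_B}$ acting jointly. \emph{(F2) Second-order content:} it constrains a mixed finite-difference limit, i.e.\ a four-point stencil $P(x_{\pm\pm})$ combined with signs $(+,-,-,+)$ and then passed to a limit. It is not a first-order ordering between two executions. \emph{(F3) Spectral observable:} the constrained quantity is $1-1/k_{\mathrm{eff}}$, a dominant-eigenvalue functional of the operator rather than a coordinate relation on $(x_i, P(x_i))$ tuples. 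I would also make precise the form of each block's $\pi$-template by reading it off Table~\ref{tab:translate}.

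Second, I would run the case split. For $\mathcal{T}^{*}_{\mathrm{rev}}$ and $\mathcal{B}^{*}_{\mathrm{rel}}$, both empty on $\mathcal{A}_{\mathrm{PWR}}$, $\mathcal{I}_s = \emptyset$ and the case is immediate. For $G$, the only generator is $\mathcal{O}_{\mathrm{rod}}$, and its orbit templates contain no $T_{\mathrm{mod}}$ or $C_B$ operators, so (F1) fails. For $T^{*}$, the template ranges over the two arguments of an inner product and yields a reciprocity identity; it has no parametric perturbation, so (F1) and (F2) fail. For $\mathcal{D}^{*}$ and $\mathcal{E}^{*}$, the templates are shape features of a single trajectory and method orderings respectively; neither introduces a parametric mixed difference. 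For $\mathcal{L}^{*}$, the template is a one-parameter refinement limit $\theta \to \theta_*$ of a single family. A mixed derivative is an iterated limit in two independent parameters, which does not fit that template.

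Third, I would treat $O_{\le}$ last, since it is the hard case: it contains both $\mathcal{M}_{C_B}$ and $\mathcal{M}_{T_{\mathrm{mod}}}$, so (F1) is satisfiable. The template here is a first-order monotonicity $\theta_1\le\theta_2 \Rightarrow P(\theta_1)\le_{\mathcal{Y}} P(\theta_2)$ along a single partial-order direction. The key step is to show that no finite conjunction of such pairwise order relations can express $|\partial^2\rho_{\mathrm{static}}/\partial T\,\partial C_B| > \tau$. I would argue by exhibiting two response surfaces that satisfy every pairwise monotonicity constraint generated by $\Phi$ but have different mixed derivatives. For example, $k(T,C) = a(T)+b(C)$ has zero mixed derivative, and adding a small $\epsilon\,TC$ term preserves all the order relations while making the mixed derivative nonzero. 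Hence $\pi$ cannot separate them, so the MR is not in its image. Combined with (F3), this shows that any order template acting on $P(x)$ tuples cannot even name the spectral quantity without extending \texttt{Translate}'s signature.

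The main obstacle is making this $O_{\le}$ indistinguishability rigorous given that $\pi$ is an arbitrary relation on $(\mathcal{X}\times\mathcal{Y})^k$. I would pin the argument to the canonical per-block ordering fixed in Definition~\ref{def:translate}, under which $O_{\le}$ tuples are chains along one direction. I expect the proof to depend on taking that template restriction literally.
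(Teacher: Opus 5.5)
Your proposal follows the paper's proof in outline. It is an exhaustion over the eight blocks of $\mathcal{D}(\mathcal{A}_{\mathrm{PWR}})$ against the templates of Table~\ref{tab:translate}, with the two empty blocks dismissed vacuously and $O_{\le}$ treated as the principal case. Your features (F2), (F1), (F3) correspond to the paper's obstructions: (a) four-point versus two-point structure, (b) two perpendicular parameter directions versus a single $\le_\theta$ chain, and the operator-spectrum output.

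Where you genuinely differ is the key step in $O_{\le}$. The paper argues from the shape of the template: the canonical tuple rule produces pairs $(x_1,x_2)$ rather than $2\times 2$ rectangles, and $\pi$ is a directional inequality rather than a non-vanishing condition, so no template has the right form. You instead give a semantic separation: two candidate programs that every admissible monotonicity MR treats alike, but that $\rho_{\mathrm{MTC\text{-}bor}}$ treats oppositely. Equality in Definition~\ref{def:alg-induced} is equality of MRs as logical statements over $P$ (Definition~\ref{def:translate}), and a shape mismatch alone does not exclude logical equivalence. Your witness therefore actually proves the required $\neq$, which makes it the more rigorous route. It also does not depend on (F3), which matters because the paper itself writes $k_{\mathrm{eff}}(P(x))$ in Definition~\ref{def:drho-exact}.

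Three points need repair.

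\emph{Build the witness on $\rho_{\mathrm{static}}$, not on $k$.} Since $\partial^2(1-1/k)/\partial T\,\partial C_B = k_{TC}/k^2 - 2k_Tk_C/k^3$, an additive $k=a(T)+b(C)$ still has a nonzero mixed $\rho_{\mathrm{static}}$-derivative. With the large monotonicity margins your perturbation step needs, $|2k_Tk_C/k^3|$ can itself exceed $\tau_{\mathrm{MTC\text{-}bor}}$, so your ``unmixed'' surface may satisfy the MR. Instead take $\rho_{\mathrm{static}}=a(T)+b(C)$ and $\rho_{\mathrm{static}}=a(T)+b(C)+\epsilon TC$; order relations transfer unchanged because $k\mapsto 1-1/k$ is strictly increasing. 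You also need $|\epsilon|>\tau_{\mathrm{MTC\text{-}bor}}$, not merely $\epsilon\neq 0$. The strict margins of $a,b$ along the order of $\iota$ must dominate $|\epsilon|$ times the envelope bounds of the other variable.

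\emph{Replace your reason in the $\mathcal{L}^{*}$ case.} Calling the mixed partial an iterated two-parameter limit is not decisive. For a $C^2$ response it is also the one-parameter diagonal limit of the stencil with $\Delta T=\Delta C_B=h\to 0$. The paper's reason is the right one: the $\mathcal{L}^{*}$ template asserts a convergence rate to a limit element, whereas $\rho_{\mathrm{MTC\text{-}bor}}$ asserts a lower bound on the magnitude of that limit at a finite parameter point. Your (F1) also closes this case, since $\mathcal{M}_{T_{\mathrm{mod}}},\mathcal{M}_{C_B}$ sit in $O_{\le}$.

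\emph{Correct the description of $G$.} $G$ is not generated by $\mathcal{O}_{\mathrm{rod}}$ alone: $G\supseteq\{\mathcal{O}_{\mathrm{rod}}\}$, and it inherits the geometric and energy-group symmetries of $\mathcal{A}_{\mathrm{Boltz}}$. Your conclusion survives because none of these acts on $(T_{\mathrm{mod}},C_B)$, which is exactly the paper's argument for that case.
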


The proof, by reducing the obstruction to the high-order-mixed-difference structure of $\rho_{\mathrm{MTC\text{-}bor}}$ and verifying that no per-block $\pi$ template captures such structure, is given in Appendix~\ref{app:negative-proofs}.

\paragraph{Engineering significance.}
The \emph{MTC vs.\ boron concentration curve} is computed for every PWR cycle reload as part of the safety-analysis report submitted to regulators. The curve underlies the moderator temperature coefficient surveillance requirement at hot-full-power, all-rods-out conditions, under which MTC must satisfy a stated upper limit ($\le 0$~pcm/\textdegree{}F per 10~CFR~50 Appendix~A General Design Criterion~11~\cite{NRC10CFR50AppA}; specific numerical limits are given in plant-specific Technical Specifications). The slope $\partial \alpha_{\mathrm{MTC}}/\partial C_{B}$ is the key sensitivity parameter for projecting MTC behaviour across the cycle from a small number of measurement points (typically four-to-six points per cycle, measured at quarter-cycle intervals): it is computed by each cycle-reload core simulator run and reported to the operator. Measurement uncertainties and prediction protocols are documented in ANS~19.6.1~\cite{ANS196_1}. A core simulator that cannot reproduce the slope to within the tolerance of $\tau_{\mathrm{MTC\text{-}bor}}$ would fail the cycle-reload qualification process. The MR is therefore not a textbook curiosity; it is a routine and regulatory-essential output of every PWR core simulator.

\paragraph{Five independent structural obstructions.}
\label{subsubsec:five-obstructions}

The two propositions are independent in the strong sense that no single extension of NOETHER's \texttt{Translate} repairs both:

\begin{table}[h]
\centering
\caption{Five pairwise-independent structural obstructions in \texttt{Translate}'s present signature, identified by Propositions~\ref{prop:nonadd}--\ref{prop:mtcbor}.}
\label{tab:five-obstructions}
\footnotesize
\begin{tabular}{p{0.18\textwidth} p{0.36\textwidth} p{0.36\textwidth}}
\toprule
\textbf{Proposition} & \textbf{Failure mode} & \textbf{Required extension to \texttt{Translate}} \\
\midrule
\ref{prop:nonadd} (non-additivity) & Output is an algebraic-spectrum quantity ($k_{\mathrm{eff}}$ as an eigenvalue, not in $\mathcal{Y}$) & Operator-spectrum output relations on $\pi$'s codomain \\
\addlinespace
\ref{prop:nonadd} (non-additivity) & Worth functional is non-additive (failure of semigroup homomorphism) & Homomorphism-failure $\pi$-template alongside equivariance / monotonicity / self-adjointness \\
\addlinespace
\ref{prop:nonadd} (non-additivity) & Adjoint weighting function $\phi^{\dagger}_{X}$ varies with operator history $X$ & Configuration-indexed adjoint structure on $T^{*}$ \\
\addlinespace
\ref{prop:mtcbor} (mixed derivative) & MR is a non-zero second-order mixed partial derivative, not a first-order relation & Higher-order mixed-difference $\pi$-templates (currently all $\pi$ templates are first-order) \\
\addlinespace
\ref{prop:mtcbor} (mixed derivative) & MR involves two independent parameter directions ($T_{\mathrm{mod}}, C_{B}$) jointly, not chained or single-direction & Two-direction joint parametric dependence beyond the single-$\theta$ partial order of Definition~\ref{def:block-invariant} \\
\bottomrule
\end{tabular}
\end{table}

The five rows in this table identify \emph{five pairwise-independent structural features} of \texttt{Translate} that are absent in Definition~\ref{def:translate}. Each row, considered alone, would require a specific extension of Definition~\ref{def:translate}; no single extension covers any two rows simultaneously. The eight-block decomposition (Hypothesis~\ref{hyp:seven-blocks}) is therefore not the only constraint on the framework's reach; the \emph{shape of \texttt{Translate} itself} is.

The two MRs $\rho_{\mathrm{nonadd}}$ and $\rho_{\mathrm{MTC\text{-}bor}}$ are also \emph{physically independent}: they probe disjoint physical mechanisms (rod-induced adjoint distortion vs.\ moderator-poison density coupling), and a PWR core simulator could pass one MR while failing the other (and vice versa). They are therefore complementary verification targets, not duplicates of a single underlying property.

\paragraph{What this subsection establishes and does not establish.}
\emph{Established.} Theorem~$1'$ (Conjecture~\ref{conj:absolute}) is false on $\mathcal{A}_{\mathrm{PWR}}$: there exist two specific MRs, each empirically realised on every conforming PWR core simulator and each documented in standard PWR safety-analysis literature and regulatory guidance, that are formulable over $\mathcal{A}_{\mathrm{PWR}}$'s operators but not in $\mathrm{MR}(\mathcal{A}_{\mathrm{PWR}})$ in the sense of Definition~\ref{def:alg-induced}. The two MRs identify five pairwise-independent structural obstructions in \texttt{Translate}'s present signature.

\emph{Not established.} That Theorem~\ref{thm:closure} itself fails. Theorem~\ref{thm:closure}'s closure result is over $\mathrm{MR}(\mathcal{A}_{\mathrm{PWR}})$ as defined by Definition~\ref{def:alg-induced}; the two MRs of this subsection lie \emph{outside} that set, so they are out-of-scope for Theorem~\ref{thm:closure} and consistent with it. The proper characterisation is that Theorem~\ref{thm:closure} is a substantially weaker statement than Theorem~$1'$ pretended to be, and the gap is now exhibited concretely with two independent witness MRs.

\emph{Not established.} That a Composite-\texttt{Translate} extension of NOETHER would absorb these two MRs while preserving Theorem~\ref{thm:closure}'s closure and Theorem~\ref{thm:decidable}'s polynomial-time decidability. This is the principal open problem the negative instantiation leaves to follow-up work. The five obstructions of Table~\ref{tab:five-obstructions} are pairwise independent, so any candidate extension must address them jointly rather than sequentially.


\section{Empirical evaluation}
\label{sec:empirical-evaluation}

This section validates the framework against five research questions. RQ1: does CONSTRUCT-MP re-derive an existing inductive MR catalogue at the algebra-block level (addressed by \S\ref{subsec:reactor-mapping} within the Boltzmann instantiation)? RQ2: do the derived MRs execute on real cross-domain systems under test (\S\ref{subsec:case-study})? RQ3: does the pre-registered \(\mathcal{L}^{*}\)-blindness prediction hold on independent substrates (\S\ref{sec:empirical-vs-sota})? RQ4: how does NOETHER compare against GenMorph, the closest evolutionary baseline, at GenMorph's published budget (\S\ref{subsec:pooled-headtohead})? RQ5: how does NOETHER compare against METRIC+ on the corpus that METRIC+ itself published (\S\ref{subsec:metricplus-relationship})?

\subsection{A small-scale comparative case study}
\label{subsec:case-study}

The derivations in Sections~\ref{subsec:end-to-end}--\ref{subsec:rho-rev} establish that NOETHER produces concrete, executable MRs in the equivariant-ML setting. They do not yet show that the MRs produced are \emph{useful for testing} relative to MRs a tester could obtain by other means. This subsection reports a small-scale comparative case study designed to address that question while remaining within the conceptual-transfer scope declared in Section~\ref{sec:cross-domain}. The study is a case study in the strict sense: a single model, a small mutation set, and three MR sets. We do not generalise from it to claims about average-case fault detection across equivariant ML.

\paragraph{Subject under test.}
We use an E(3)-equivariant graph neural network~\cite{Satorras2021EGNN} (EGNN) as a deliberately compact \emph{minimal stand-in} for a full SE(3)-Transformer~\cite{FuchsTransformer2020} or a Vector-Neuron-based SO(3)-equivariant architecture~\cite{Deng2021VectorNeurons}: two EGNN layers, hidden dimension 16, 5{,}189 parameters, trained on a procedural 5-class point-cloud dataset (sphere, cube surface, torus, cone, helix; 64 points per cloud; 400 training / 100 validation; rotation augmentation) for 25 epochs on Apple M1 Pro / MPS. Validation accuracy after training: 0.93. EGNN carries only invariant scalar and equivariant 3-vector features (type-0 $\oplus$ type-1 in the irrep classification), not the full type-$\ell$ steerable representation of an SE(3)-Transformer; the $T^*$ block instantiation in this case study is therefore an \emph{explicitly added} symmetrised QK probe (\texttt{equivariant\_classifier.py:qk = nn.Parameter(torch.eye(d))} with the symmetrising operation at use-time), not a property of the EGNN architecture itself. The manuscript's transfer claim is at the operator-algebra level (Section~\ref{sec:cross-domain}) and is independent of which equivariant architecture instantiates $\mathcal{A}_{\mathrm{equi}}$; it is the algebraic skeleton that transfers, not architecture-specific empirical numbers. Architecture, training script, dataset generator, frozen checkpoint, and the SHA-256 of the supplementary archive are recorded under S3.

\paragraph{Three MR sets, controlled for the same prompt and budget.}
Three MR sets are compared, each of size five (matching $|\mathbb{M}(\mathcal{A}_{\mathrm{equi}})_{\mathrm{non\text{-}empty}}|$):
\begin{itemize}
    \item \textbf{Set N (NOETHER-derived):} $\{\rho_{\mathrm{rot}},\rho_{\mathrm{mono}},\rho_{\mathrm{train}},\rho_{\mathrm{adj}},\rho_{\mathrm{train\text{-}rev}}\}$, one per non-empty block of $\mathcal{A}_{\mathrm{equi}}$, as itemised in the ``Coverage status, revised'' paragraph of Section~\ref{subsec:rho-rev}.
    \item \textbf{Set L (LLM-prompt baseline):} five MRs generated by prompting GPT-4 with the task description ``produce five metamorphic relations for testing an SE(3)-equivariant point-cloud classifier'', with no further structural cue. The prompt and full output are reproduced in supplementary material S3.
    \item \textbf{Set B (literature baseline):} five MRs synthesised from MRs reported in the metamorphic-testing-for-ML literature~\cite{Segura2016, Shin2024, AutoMT2025}, restricted to MRs applicable to point-cloud classifiers. Where the literature provides more than five candidates we select the most-cited.
\end{itemize}

\paragraph{Mutation set.}
We construct a mutation set of $N_{\mathrm{mut}} = 20$ defects in the model under test, drawn from four categories of fault commonly reported in equivariant-ML implementations~\cite{ThomasSmidt2018}: (i) wrong sign on a custom loss term ($n_1 = 5$); (ii) accidental break of equivariance through a non-equivariant intermediate layer ($n_2 = 5$); (iii) numerical-precision degradation, e.g.\ truncation in the spherical-harmonics computation ($n_3 = 5$); (iv) gradient-reversal sign error in the training script ($n_4 = 5$). Each mutation is implemented as a code-level diff against the reference checkpoint and its hash recorded in S3.

\paragraph{Metrics.}
For each MR set $S \in \{N, L, B\}$ and each mutation $\mu$, we record (i) whether at least one MR in $S$ flags $\mu$ as a fault (\emph{detection}), and (ii) the structural-coverage status of $S$ over $\mathbb{M}(\mathcal{A}_{\mathrm{equi}})$. We report:
\begin{itemize}
    \item $\mathrm{detection}(S) = |\{\mu : \exists \rho \in S,\, \rho\text{ flags }\mu\}|/N_{\mathrm{mut}}$;
    \item $\mathrm{coverage}_{\mathrm{NOETHER}}(S, \mathcal{F}_{\mathrm{equi}})$ from Section~\ref{subsec:pmcm-worked};
    \item the unique-detection count $|\{\mu : S \text{ detects } \mu \text{ and no other set does}\}|$.
\end{itemize}

\paragraph{Pre-registered hypotheses.}
We pre-register two hypotheses before running the mutation experiments, to make the falsifiability of the case study explicit:
\begin{description}
    \item[H1 (coverage):] $\mathrm{coverage}_{\mathrm{NOETHER}}(N) = 1.0$ by construction; $\mathrm{coverage}_{\mathrm{NOETHER}}(L)$ and $\mathrm{coverage}_{\mathrm{NOETHER}}(B)$ are strictly less than $1.0$.
    \item[H2 (unique detection):] Set $N$ has at least one mutation it uniquely detects, namely a mutation in category (iv) (gradient-reversal sign error), via $\rho_{\mathrm{train\text{-}rev}}$.
\end{description}
H1 is a structural prediction of Theorem~\ref{thm:closure} and the case study can falsify it only if NOETHER's derivation is incorrect. H2 is a non-trivial empirical prediction: it would be falsified if $L$ or $B$ generated an equivalent training-time-reversal MR, or if the mutation was visible to an invariance MR for some unrelated reason.

\paragraph{Results.}
Table~\ref{tab:case-study} reports the case-study numbers. The full row-level outcome matrix (one row per (mr, mutation) pair, 300 rows in total) is provided as supplementary material S3, together with the runner script (\texttt{runner.py}) and the analysis script (\texttt{analysis.py}) that produce these numbers deterministically from the trained checkpoint. The model under test is an E(3)-equivariant graph neural network~\cite{Satorras2021EGNN} (5-class procedural point-cloud classifier; 5{,}189 parameters; trained on Apple M1 Pro / MPS; validation accuracy 0.93 after 25 epochs).

\begin{table}[h]
\centering
\caption{Results of the small-scale comparative case study (Section~\ref{subsec:case-study}). Trained E(3)-equivariant point-cloud classifier; eight 128-point random test clouds per (MR, mutation) pair. The cat-(iv) row is \textbf{construct-validity-controlled}: the mutation category was constructed so that $\rho_{\mathrm{train\text{-}rev}}$ alone covers it (one defect category per non-empty block of $\mathcal{A}_{\mathrm{equi}}$), and the 5/5 unique-detection figure therefore exhibits construct validity of $\rho_{\mathrm{train\text{-}rev}}$ rather than averaged superiority. \textbf{Detection numbers for $\rho_{\mathrm{adj}}$ in Set~N use the CI-time forward-pass-only formulation of \S\ref{subsec:rho-adj}; the alternative debug-time harness-time formulation is available in supplementary S1 but is not used here.}}
\label{tab:case-study}
\small
\begin{tabular}{lccc}
\toprule
& Set N (NOETHER) & Set L (LLM) & Set B (Lit.) \\
\midrule
Detection rate & 7/20 & 2/20 & 0/20 \\
Structural coverage ($\mathrm{coverage}_{\mathrm{NOETHER}}$) & 1.00 & 0.40 & 0.20 \\
Unique detections & 5 & 0 & 0 \\
Detected cat.~(i) wrong-sign loss & 0/5 & 0/5 & 0/5 \\
Detected cat.~(ii) equivariance break & 2/5 & 2/5 & 0/5 \\
Detected cat.~(iii) precision degradation & 0/5 & 0/5 & 0/5 \\
Detected cat.~(iv) gradient-reversal sign & 5/5 & 0/5 & 0/5 \\
\bottomrule
\end{tabular}
\end{table}

\paragraph{Hypothesis verdicts.}
H1 is \textbf{retained as a structural sanity check rather than as a falsifiable hypothesis test}. By the framework's construction, $\mathrm{coverage}_{\mathrm{NOETHER}}(N) = 1.00$ holds before any experiment is run; H1's failure can occur only if one of the derivations in Sections~\ref{subsec:end-to-end}--\ref{subsec:rho-rev} is itself in error. We accordingly use the $\mathrm{coverage}_{\mathrm{NOETHER}}$ values, $1.00$ for Set N, $0.40$ for Set L (the $G$ and $\mathcal{L}^*$ blocks are reached by an LLM-prompted MR), and $0.20$ for Set B (only $\mathcal{L}^*$ via Shin et al.'s idempotency MR; the other four literature MRs are out-of-scope under $\mathcal{A}_{\mathrm{equi}}$), as a \emph{structural-prior diagnostic}, not as a fault-detection metric: the gap quantifies what the algebraic prior contributes that prompt-based and literature-derived MR sets lack on this particular algebra. The load-bearing comparative result of the case study is H2.

H2 is \textbf{consistent with the data, but its verdict is construct-validity-controlled}: Set N uniquely detects all five category-(iv) mutations, and in every case the detector is $\rho_{\mathrm{train\text{-}rev}}$. Sets L and B detect zero cat-(iv) mutations: neither corpus contains an MR exercising the SGD-trajectory time-reversal property, which is exactly what NOETHER's $\mathcal{T}^{*}$ block predicts they would miss without an algebraic warrant. \emph{This contrast exhibits construct validity of $\rho_{\mathrm{train\text{-}rev}}$ as a gradient-reversal probe, not NOETHER's superiority on a defect distribution sampled neutrally from real-world bug reports} (the mutation set was constructed to cover one defect category per non-empty block of $\mathcal{A}_{\mathrm{equi}}$, so cat-(iv)'s category was selected because $\rho_{\mathrm{train\text{-}rev}}$ alone covers it). The unique-detection asymmetry between Set N and Set B is statistically significant at $\alpha = 0.05$ on the case study's mutation set (paired McNemar exact two-sided $p = 0.016$; unpaired Fisher exact $p = 0.008$); between Set N and Set L it is borderline ($p_\mathrm{McNemar} = 0.063$, $p_\mathrm{Fisher} = 0.13$) given Set L's two-mutation cat-(ii) overlap with Set N. Wilson 95\% confidence intervals on detection rates are $[0.18, 0.57]$ for Set N, $[0.03, 0.30]$ for Set L, and $[0.00, 0.16]$ for Set B; the intervals N vs B are non-overlapping. The full pairwise comparison matrix (McNemar $b/c$ counts, Fisher 2$\times$2 tables, Wilson CIs) is in supplementary material S3 (\texttt{table4.json::pairwise\_stats}).

\paragraph{Construct-validity caveat for H2.}
The mutation set was constructed to cover one defect category per non-empty block of $\mathcal{A}_{\mathrm{equi}}$; in particular, cat-(iv) was selected because it targets the $\mathcal{T}^*$ block that $\rho_{\mathrm{train\text{-}rev}}$ alone covers. The 5/5 unique-detection result therefore exhibits \emph{construct validity} of $\rho_{\mathrm{train\text{-}rev}}$ as a gradient-reversal probe rather than NOETHER's superiority on a defect distribution sampled neutrally from real-world bug reports. That weaker reading is still informative: it shows that the framework's $\mathcal{T}^*$-derived MR is non-redundant relative to LLM-prompted and literature-derived MRs.

\paragraph{What is and is not detected: a framework boundary, not an edge case.}
The result that no MR set detects category-(i) wrong-sign mutations is the most informative cell of Table~\ref{tab:case-study} and we choose to foreground it as a \emph{framework boundary} rather than an edge case. A sign-flipped classification head still satisfies SO(3)-rotation invariance and permutation invariance (the equivariance contract is intact), still passes inference idempotency, satisfies the symmetrised adjoint identity, and is monotone under point-density sub-sampling. The eight blocks of Hypothesis~\ref{hyp:seven-blocks} contain no \emph{label-consistency} block, and the wrong-sign-loss fault class is precisely the kind a label-consistency MR would catch. This is the label-consistency out-of-scope class enumerated in Remark~\ref{rem:counterex}: ML program families that ship with labelled training data are a candidate ninth block. We do not absorb this case into ``out-of-scope for MR testing'' in some over-broad sense; the original motivation for MR testing is precisely the absence of an oracle, so framing label-consistency MRs as fundamentally out-of-scope would over-claim the framework's boundary.

Similarly, cat-(iii) precision-degradation mutations on a 16-dim hidden state are not detectable at our chosen tolerances; tighter $\tau$ would surface them at the cost of more baseline false positives. A tolerance-sensitivity sweep over $\tau \in \{10^{-3}, 10^{-4}, 10^{-5}\}$ is reported in supplementary S3 (\texttt{tau\_sweep.json}) and shows monotone increase in detection at the cost of monotone increase in baseline false-positive rate. Cat-(ii) equivariance-break mutations are detected by both Set N and Set L through $\rho_{\mathrm{rot}}$ / $L_{\mathrm{rot}}$ (parity is expected: rotation-invariance is the structural property the LLM is most likely to rediscover from the system description; the case study's discriminating cell is cat-(iv)).

\paragraph{Interpretation conditions, stated in advance.}
We commit in advance to the following readings of the eventual numerical outcomes:
\begin{itemize}
    \item If H1 fails (i.e.\ $\mathrm{coverage}_{\mathrm{NOETHER}}(N) < 1.0$): one of the derivations in Sections~\ref{subsec:end-to-end}--\ref{subsec:rho-rev} is incorrect and must be revised; the framework's transfer claim does not stand on this benchmark.
    \item If H1 holds and H2 fails (i.e.\ no unique detection in cat.~iv): the framework's coverage prediction is correct but the unique-MR claim is not differentiated by this mutation set; the LLM or literature baseline produced an equivalent gradient-reversal probe.
    \item If H1 and H2 both hold: the case study is consistent with NOETHER's transfer claim. We do \emph{not} claim this would establish superior fault-detection on average; the comparison's denominator (20 mutations, one model) is too small.
    \item If $\mathrm{detection}(N) < \mathrm{detection}(L)$: this is consistent with the framework's design (NOETHER prioritises structural coverage, not raw detection on a particular mutation set) and is reported transparently rather than concealed.
\end{itemize}

\subsubsection{A small DeepCrime-style real-fault pilot}
\label{subsec:deepcrime-pilot}

To begin closing the gap between protocol and result, we executed a $n=5$ pilot extension of the case study with five mutation operators systematically derived from the DeepCrime taxonomy~\cite{Humbatova2021DeepCrime} (which extracts 35 mutation operators from real DL fault studies). The pilot operators are post-training mutations on the trained EGNN checkpoint and address category labels analogous to DeepCrime's: cat-v-01 \emph{loss-reduction-like} (head-weight scaled by $1/N_{\text{classes}}$), cat-v-02 \emph{activation change} (tanh saturation), cat-v-03 \emph{layer removal} (head zeroed), cat-v-04 \emph{bias removal}, cat-v-05 \emph{weight re-init} (Glorot). The pilot was run with \texttt{runner\_pilot.py} against the same N/L/B MR sets, the same eight-cloud test set, and the same tolerance as the main case study. Results are deterministic given the seeds in supplementary S3.

\begin{table}[h]
\centering
\caption{DeepCrime-style real-fault pilot, $n=5$ mutations on the trained EGNN checkpoint. Wilson 95\% CIs are given for context; pairwise Fisher-exact $p$-values are reported in supplementary S3 (\texttt{deepcrime\_pilot\_stats.json}).}
\label{tab:pilot}
\small
\begin{tabular}{lcc}
\toprule
\textbf{Set} & \textbf{Detected / 5} & \textbf{Wilson 95\% CI} \\
\midrule
N (NOETHER) & 2/5 (cat-v-01, cat-v-03 via $\rho_{\mathrm{train}}$) & $[0.12, 0.77]$ \\
L (LLM)     & 0/5                                           & $[0.00, 0.43]$ \\
B (Lit.)    & 0/5                                           & $[0.00, 0.43]$ \\
\bottomrule
\end{tabular}
\end{table}

\begin{table}[h]
\centering
\caption{Paired contingency for Set~N vs Set~L on the $n=5$ DeepCrime
pilot. The Set~N vs Set~B contingency is identical because Set~B also
detected $0/5$. The appropriate paired-binary test is McNemar's exact
test on the $(b, c) = (2, 0)$ discordant counts: two-sided exact
$p = 0.500$ (binomial $X = 0$ in $n = 2$ trials with $p_0 = 0.5$ under
$H_0$ of equal Set N / Set L performance); equivalently, an unpaired
Fisher exact $2\times 2$ on rows = Sets, columns = (detected, missed)
yields two-sided $p = 0.444$. Both numbers fail to reject $H_0$ at
$\alpha = 0.05$ because $n = 5$ is below the threshold at which a
$2 / 5$ vs $0 / 5$ contrast is inferentially decisive; per-pair
statistics in supplementary~S3 \texttt{deepcrime\_pilot\_stats.json}.}
\label{tab:deepcrime-contingency}
\small
\begin{tabular}{l c c c}
\toprule
                  & Set~L detected & Set~L missed & Total \\
\midrule
Set~N detected    & 0              & 2            & 2 \\
Set~N missed      & 0              & 3            & 3 \\
\midrule
Total             & 0              & 5            & 5 \\
\bottomrule
\end{tabular}
\end{table}

\paragraph{Reading the pilot (inferential verdict).}
At $n=5$ the appropriate paired-binary test is McNemar's exact on the discordant counts $(b, c) = (2, 0)$ for both Set N vs Set L and Set N vs Set B: two-sided exact $p = 0.500$ (and the unpaired-Fisher analogue yields two-sided $p = 0.444$ on the same contingency); the pilot is therefore underpowered for an inferential conclusion at $\alpha = 0.05$. The 2/5 vs 0/5 vs 0/5 detection contrast is reported as descriptive evidence consistent with the direction of the framework's $\mathcal{L}^{*}$-block prediction, not as a hypothesis confirmation. The pilot's load-bearing claim is that the comparative-evaluation infrastructure runs end-to-end against real-fault-style mutations on the trained checkpoint; non-vacuousness of the $\mathcal{L}^{*}$-block prediction and the parameter-distribution candidate-ninth-block reading are candidate evidence requiring a larger $n$.

\paragraph{Interpretation of the two detection events (mechanism, not inference).}
For descriptive context only, the detection mechanism on the two events: $\rho_{\mathrm{train}}$ tests training-size limit invariance --- for inputs the model classifies confidently at full training, the prediction should remain stable when the head is exposed to a fresh small-batch fine-tuning step. cat-v-01 (head weight scaled by $1/N_{\mathrm{classes}}=1/5$) collapses head-output magnitudes uniformly toward zero, softens the softmax, and changes the argmax on inputs near classification boundaries; post-fine-tune predictions drift away from pre-fine-tune predictions and the inference-stability invariant fails. cat-v-03 (head weight zeroed) makes every output identically zero pre-softmax, again breaking inference stability. cat-v-02 (tanh saturation), cat-v-04 (bias removal), and cat-v-05 (Glorot re-init) preserve enough head signal for the inference-stability check to pass on the eight test clouds within the chosen tolerance, so $\rho_{\mathrm{train}}$ does not fire on them. The mechanism is therefore ``$\rho_{\mathrm{train}}$'s inference-stability check fails when head magnitude is perturbed past a margin-of-error threshold for boundary inputs'' --- a mechanism-level statement independent of the underpowered sample size.

The pilot is the first empirical handhold beyond the constructed mutation set; the larger comparative-evaluation protocol below remains as committed work for follow-up. The three undetected mutations (cat-v-02 activation change, cat-v-04 bias removal, cat-v-05 weight re-init) are \emph{consistent with, but do not establish}, the parameter-distribution candidate ninth block of Remark~\ref{rem:counterex} item~(vi); a panel of $n \ge 20$ mutations on more than one architecture is required to establish the block-extension claim. The pilot's role here is to motivate the candidate, not to confirm it; we report it alongside the metric-stability candidate of Remark~\ref{rem:metric-stability-block} as the two most actionable extension targets.

\paragraph{Threats specific to this case study.}
The most material threats are: (a) \textbf{MR-budget asymmetry}: GPT-4 might generate more than five MR candidates given a different prompt and a larger budget; (b) \textbf{baseline selection}: the literature MR set is restricted by the rule ``applicable to point-cloud classifiers'', which may exclude relevant MRs; (c) the mutation set is \textbf{hand-constructed} rather than mined from real defect logs, and importantly, \emph{constructed to cover one defect category per non-empty block of $\mathcal{A}_{\mathrm{equi}}$}, with cat-(iv) selected because it targets the $\mathcal{T}^*$-block MR $\rho_{\mathrm{train\text{-}rev}}$. The 5/5 unique-detection result for cat-(iv) therefore exhibits \emph{construct validity} of $\rho_{\mathrm{train\text{-}rev}}$ as a gradient-reversal probe, not NOETHER's superiority on a defect distribution sampled neutrally from real-world bug reports; (d) the case study uses a compact EGNN stand-in for a full SE(3)-Transformer, so architecture-specific empirical numbers do not generalise; and (e) Set L is a single GPT-4 sample at temperature 0 with a fixed seed (the prompt and raw output are recorded in supplementary S3 \texttt{prompt\_log.md}); a different LLM, prompt, or temperature would yield different MRs and the case study does not characterise that variability. Section~\ref{subsec:case-study} does not generalise beyond this case-study scope; the comparative evaluation and real-bug protocols below describe the empirical extensions that address threats (a)--(c) directly.

\paragraph{Comparative evaluation against published baselines (protocol).}
\label{para:comp-eval-protocol}
The single hand-constructed mutation set above is supplemented by a comparative protocol that runs Set N alongside two independent automated pipelines on a shared subject set. The protocol fixes the following measurable quantities, leaving only the result population to subsequent revisions of this paper.

\begin{description}
  \item[Subjects.] (i) GenMorph's published 23 Java-method benchmark~\cite{GenMorph2024} restricted to subjects whose induced operator algebra is non-trivial under the eight-block decomposition (we expect $\approx 14$ subjects after filtering on whether $\mathcal{A}_P$ has at least one non-empty block beyond $G$); (ii) two ML benchmarks built on DeepCrime real-fault mutation operators~\cite{Humbatova2021DeepCrime}: an MNIST classifier and a CIFAR-10 classifier under DeepCrime's 24 mutation operators systematically extracted from real DL fault taxonomies. Replacing the hand-constructed 20 mutations with DeepCrime's published real-fault operators directly addresses the construction-bias threat (c) above.
  \item[Compared methods.] Set N (NOETHER, derived from $\mathcal{A}_P$); Set M (MR-Scout-mined~\cite{MRScout2024} from existing test suites); Set G (GenMorph-evolved via genetic programming~\cite{GenMorph2024}); Set L (LLM-prompted with the same prompt template as in the §\ref{subsec:case-study} case study); Set B (literature MRs).
  \item[Metrics.] Per-subject mutation-detection rate (Set $\cap$ killed mutants / total killed); real-bug-detection rate (where bug logs are available); false-positive rate on baseline (un-mutated) inputs; coverage with respect to the algebraic block decomposition $\mathrm{coverage}_{\mathrm{NOETHER}}$.
  \item[Statistical tests.] Wilson 95\% CI on per-set detection rates; pairwise McNemar exact tests for paired Set-vs-Set comparisons; pairwise Fisher exact tests for unpaired comparisons; Bonferroni correction for the 10 pairwise tests across 5 sets.
  \item[Pre-registered hypothesis (H3a, vs GP-evolved baseline).]
  Three sub-claims:
  \begin{description}
    \item[H3a.1 (Detection on D1, per-block).] On the
      algebra-disrupting stratum (D1, Definition in
      \S\ref{subsec:pit-block-matrix}), Set~N's per-block kill rate
      is competitive with or superior to Set~G's on at least one
      operative block within scope (per-block comparison; aggregate
      D1 dominance is \emph{not} pre-registered as the load-bearing
      reading, see \S\ref{subsec:pooled-headtohead}).
    \item[H3a.2 (Complementarity).] Set~N and Set~G exhibit
      non-trivial complementarity in mutant coverage on the D1
      stratum (each set kills mutants the other misses); a
      non-zero $N$-only kill count on at least one operative block
      is the operational test.
    \item[H3a.3 (Cost-axis).] Set~N's MR-generation cost is
      asymptotically lower than Set~G's
      (Theorem~\ref{thm:decidable} polynomial-time decidability
      versus Set~G's $\approx 30$-min stochastic GP search per SUT,
      detailed in Table~\ref{tab:gen-cost}). The cost-axis claim is
      independent of the detection-axis claims and is read on the
      cost-axis only.
  \end{description}
  \item[Pre-registered hypothesis (H3b, vs LLM-assisted baseline).] On the structural-coverage diagnostic, $\mathrm{coverage}_{\mathrm{NOETHER}}(N) > \mathrm{coverage}_{\mathrm{NOETHER}}(L_{\mathrm{ensemble}})$, where Set~$L_{\mathrm{ensemble}}$ is the LLM-assisted representative drawn from a multi-vendor $\times$ multi-temperature LLM-MR-generation protocol that subsumes the single-sample probes of Shin~\cite{Shin2024} and Zhang et~al.~\cite{ZhangChatGPTMR2023}. The diagnostic is interpreted as a structural-prior signal (NOETHER's algebraic prior contributes block coverage that prompt-based LLMs lack on $\mathcal{A}_P$), \emph{not} as a fault-detection-superiority metric. \textbf{Verdict (tested, borderline-PASS at the structural reading)}: a $2$-vendor (DeepSeek, ChatGPT) $\times$ $5$-temperature ensemble harvested $487$ LLM-proposed MRs across the $10$ \S\ref{subsec:test-design} SUTs ($100$ samples); $43.5\%$ ($212/487$) of the proposals match a Set~N block-template under deterministic translation, and the matchable subset reproduces Set~N's kill rate on the aligned-mutant substrate ($34/70 = 0.486$ for the ensemble union, identical to Set~N on the same $70$ mutants). The remaining $56.5\%$ explore invariants outside the $8$-block frame (associativity, identity / inverse element, monotonicity on SUTs lacking an O\textsubscript{le}-block representative); these have no in-protocol kill measurement here and constitute the structural-coverage gap. NOETHER's $\mathrm{coverage}_{\mathrm{NOETHER}}$ remains $1.00$ by construction; the LLM ensemble's block-template coverage of Set~N's catalogue is $43.5\%$ on the matchable subset. A third-vendor (Anthropic Claude) replication is committed in follow-up~(d.set-l-claude) of supplementary~S4 (\texttt{future\_work.md}).
  \item[Pre-registered hypothesis (H3c, vs mining-based baseline).] At cold start (no seed test corpus), Set~N is operationally derivable from $\mathcal{A}_P$ alone, while MR-Scout~\cite{MRScout2024}'s reach is upper-bounded by the seed-suite's induced relations; an adapted-from-published-artifact estimate of MR-Scout's reach on the §\ref{subsec:test-design} substrate is reported in Table~\ref{tab:gen-cost} and contextualised in \S\ref{subsec:pooled-headtohead}.
  \item[Pre-registered hypothesis (continuous).] H4 (\emph{detection-rate non-inferiority on real faults}): on DeepCrime real-fault mutants, Set~N's detection rate is within $\Delta = 0.10$ of the best non-NOETHER set's detection rate, with $\Delta$ fixed in advance.
\end{description}

The harness for this protocol is provided as supplementary S3 \texttt{comparative\_baseline/} with adapter scripts for each baseline; replication of the protocol on a different subject set requires only swapping the subject directory.

\paragraph{Real-bug evaluation (protocol).}
\label{para:real-bug-protocol}
To address threat (c) directly, we mine cat-(i)--(iv) faults from public bug reports of e3nn~\cite{e3nn2022software} and PyTorch Geometric~\cite{Fey2019PyG}, the two reference SE(3)-equivariant libraries. Target: 10 confirmed real-bug commits with associated test cases, one per cat-(i)--(iv) where available. For each bug:
\begin{enumerate}[leftmargin=*,nosep]
  \item the pre-fix source code is checked out into a frozen reference checkpoint;
  \item all five MR sets (N, M, G, L, B) are run against the buggy code on the test inputs from the bug report;
  \item detection is recorded as ``MR fired = True'' iff at least one MR in the set surfaces the buggy behaviour at the published tolerance.
\end{enumerate}
The construct-validity caveat at the start of this subsection then naturally resolves: defects are no longer ``selected to cover one block per defect category''; they are mined from a fixed, pre-existing defect distribution that the framework was \emph{not} designed against. The protocol is provided as supplementary S5 \texttt{real\_bugs/} with one folder per bug containing the issue link, fix commit, test fixture, and per-MR detection outcome.

\subsection{An empirical test of the eight-block decomposition: \texorpdfstring{$\mathcal{L}^{*}$}{L*}-block blindness on homogeneity-preserving mutators}
\label{sec:empirical-vs-sota}

A generative framework that cannot make falsifiable quantitative
predictions is rhetoric, not science. The previous two sections
establish that NOETHER \emph{generates} MRs in two domains with
structurally distinct operator algebras (\S\ref{sec:reactor},
\S\ref{sec:cross-domain}); generation alone does not test the
framework's central methodological claim, that the eight-block
decomposition (\S\ref{subsec:decomposition}) is empirically operative
as a mechanism rather than a derivation-bookkeeping device. This
section reports an empirical test of one quantitative consequence of
that claim. NOETHER predicts that an MR derived from the
linearity-and-scaling block ($\mathcal{L}^{*}$) is necessarily silent
on any mutator that preserves homogeneity of degree~$1$. PIT's default
mutator set has this property by direct calculation. The prediction
is therefore that the $\mathcal{L}^{*}$-block MRs in our SUT set will
kill near-zero PIT mutants. We refer to this prediction as
\emph{$\mathcal{L}^{*}$-block blindness}. The prediction is sharp,
quantitative, and derivable from public information without consulting
any data. The test of the prediction is the central content of this
section. Pooled head-to-head numbers against an automated SOTA
baseline (GenMorph~\cite{GenMorph2024}), structural coverage
extension, and cross-pipeline MR rediscovery are reported as
corroborating evidence; the section's central claim does not depend
on them.

\subsubsection{The prediction: \texorpdfstring{$\mathcal{L}^{*}$}{L*}-block blindness}
\label{subsec:l-blindness-derivation}

\paragraph{The MR shape.}
The $\mathcal{L}^{*}$ block of $\mathcal{A}_P$ under
Hypothesis~\ref{hyp:seven-blocks} contains scaling and linearity
operators. CONSTRUCT-MP's \texttt{Translate} step
(Definition~\ref{def:alg-induced}) carries scaling-block invariants
to MRs of the form
\begin{equation}
\label{eq:l-scale-shape}
L_{\mathrm{scale}}: \quad f(\lambda \mathbf{x}) \;=\; \lambda \cdot f(\mathbf{x}),
\qquad \lambda > 0,
\end{equation}
on positively-homogeneous-of-degree-$1$ programs. The shape is fixed
by $\mathcal{A}_P$ structure; it is not chosen with mutation testing
in mind.

\paragraph{The mutator semantics.}
PIT~1.7.4~\cite{Coles2016PIT} ships a default mutator set comprising
seven canonical operator classes: arithmetic-operator swap (AOR;
$+ \leftrightarrow -$, $\times \leftrightarrow \div$), conditional
boundary mutation, increment mutation, return-value swap (\texttt{return zero},
\texttt{return one}, \texttt{return null}), conditional negation
(NCM), constant replacement, and member-call removal. Each AOR
mutation that preserves the program's domain of definition transforms
a homogeneous-of-degree-$1$ function $f$ into another
homogeneous-of-degree-$1$ function $f'$, by the elementary fact that
the four arithmetic operators in question commute with positive
scalar multiplication of all inputs in the relevant homogeneous case
(addition and subtraction are degree-$1$ homogeneous; multiplication
and division shift degrees by $\pm 1$, but the swap $\times
\leftrightarrow \div$ on a properly typed expression returns a
degree-preserving result on the homogeneous-of-degree-$1$
sub-grammar).

\paragraph{The prediction.}
A composite of these two facts gives the central prediction. If $f$
satisfies $L_{\mathrm{scale}}$ and the mutator $\mu$ preserves
homogeneity of degree~$1$, then $\mu(f)$ also satisfies
$L_{\mathrm{scale}}$. Therefore the paired-MR test
\begin{equation*}
\big| f(\lambda \mathbf{x}) - \lambda f(\mathbf{x}) \big| \;\le\; \tau
\quad \mathrm{vs.} \quad
\big| \mu(f)(\lambda \mathbf{x}) - \lambda \mu(f)(\mathbf{x}) \big| \;\le\; \tau
\end{equation*}
returns the same verdict on $f$ and on $\mu(f)$ for any
$\mathbf{x}$, $\lambda$. The kill rate of $L_{\mathrm{scale}}$ against
the homogeneity-preserving subset of PIT's mutators is
\emph{identically zero} by direct calculation, and against the full
default mutator set is approximately zero up to a small fraction
contributed by mutators that break homogeneity (return-value swap to
\texttt{zero} or \texttt{one}; conditional-negation when the
conditional carries a constant offset; constant replacement when the
constant participates additively).

\paragraph{Falsifiability.}
The prediction is falsified if the observed
$L_{\mathrm{scale}}$ kill rate is materially above zero on a substrate
where the prediction's homogeneity precondition is met. We define
``materially above zero'' in advance as a single $L_{\mathrm{scale}}$
MR killing a third or more of its PIT mutants on more than one of the
SUTs in the substrate. The prediction passes if the observed kill
rate is $\le 1/3$ on at least five of the six SUTs admitting an
$L_{\mathrm{scale}}$ MR. The $1/3$ threshold and the ``more than one
SUT'' quantifier were committed to git
(\texttt{configs/d4j\_algebra\_rich\_criterion.json}) before the
per-MR kill-count files; under threshold sensitivity in the grid
$\{1/4, 1/3, 1/2\} \times \{$more than zero, more than one, more
than two$\}$, the verdict remains \emph{Confirmed} on $5/6$ SUTs at
all 9 grid cells, with $\texttt{hypotSig}$ as the single SUT
crossing every threshold; the result is therefore robust to
plausible threshold variation. \emph{Outlier-handling rule.} An
outlier SUT (kill rate $> 1/3$) is rescued from falsification only
if all of its killed mutants are independently classified as
homogeneity-breaking under the framework's mutator-semantics
taxonomy (MATH-swap of $\times \leftrightarrow \div$ on bivariate
degree-1 inputs $\Rightarrow$ degree-changing; RC-replacement of
$\sqrt{x^2 + y^2}$ with a constant $\Rightarrow$ fixed-output
breaking scale-invariance; VR-unconditional-zero return
$\Rightarrow$ fixed-output sub-rule). Per-mutant classification
must be logged alongside the kill data, and the classification
must precede inspection of which mutants the outlier's rescue
depends on. The rule was codified in the pre-registration config
on 2026-05-15 in response to a Round 2 review observation that the
\S\ref{subsec:l-blindness-confirmed} \texttt{hypotSig} analysis had
relied on a rule that was implicit rather than written. Under the
codified rule, $\texttt{hypotSig}$'s two killed mutants
(\texttt{return\_zero\_doubles\_VR} and
\texttt{Math.sqrt\_replaced\_with\_one\_RC}) both classify as
homogeneity-breaking, so the original $5/6$ verdict stands; future
cross-codebase substrates inherit the rule as a written test.

\paragraph{Public-information argument.}
Equation~(\ref{eq:l-scale-shape}) is a consequence of $\mathcal{A}_P$
structure and CONSTRUCT-MP's \texttt{Translate} step, both fully
specified in \S\ref{sec:framework}. PIT's default mutator set is
public~\cite{Coles2016PIT}. The composition of the two is a
mathematical consequence requiring no experimental input. The
prediction is therefore ex-ante in the strong sense that it is
\emph{derivable from public information without consulting any data
this paper produces}. We track the prediction's commitment to
git in supplementary~S7 (\texttt{d4j/}) alongside the SUT-selection
criterion of \S\ref{subsec:test-design}.

\subsubsection{PIT mutator and 8-block invariant compatibility}
\label{subsec:pit-block-matrix}

The $\mathcal{L}^{*}$-blindness prediction generalises: each PIT
default mutator either preserves or breaks the invariant of each
NOETHER block, and an MR derived from a block fires on a mutant only
if that mutant breaks the block's invariant. Table~\ref{tab:pit-block}
maps the seven PIT default mutator categories~\cite{Coles2016PIT}
against the eight NOETHER blocks and gives the typical-case verdict
(``$\circ$'' = preserves the block invariant, the block's MR is
typically blind to the mutant; ``$\times$'' = breaks the block
invariant, the block's MR can typically detect the mutant; ``$\sim$''
= case-dependent on the SUT-specific block instantiation).

\begin{table}[h]
\centering
\caption{Typical-case compatibility of PIT default mutator categories
with the eight NOETHER blocks. Cells reflect the dominant case for
mutators in the category as applied to a generic algebra-rich SUT;
SUT-specific exceptions are catalogued in
supplementary~S7 (\texttt{d4j/}). The cells $\langle$MATH, $\mathcal{L}^{*}\rangle$
and $\langle$RETURN\_VALS, $\mathcal{L}^{*}\rangle$ are both
``$\circ$'' (homogeneity-preserving), and the
$\mathcal{L}^{*}$-blindness result of \S\ref{subsec:l-blindness-confirmed}
is the empirical specialisation of these two cells.}
\label{tab:pit-block}
\small
\begin{tabular}{lcccccccc}
\toprule
PIT mutator category & $G$ & $O_{\le}$ & $T^{*}$ & $\mathcal{T}^{*}_{\mathrm{rev}}$ & $\mathcal{L}^{*}$ & $\mathcal{D}^{*}$ & $\mathcal{E}^{*}$ & $\mathcal{B}^{*}_{\mathrm{rel}}$ \\
\midrule
CONDITIONALS\_BOUNDARY & $\circ$ & $\times$ & $\circ$ & $\circ$ & $\circ$ & $\sim$ & $\circ$ & $\circ$ \\
INCREMENTS             & $\times$ & $\times$ & $\times$ & $\times$ & $\times$ & $\sim$ & $\circ$ & $\circ$ \\
INVERT\_NEGS           & $\sim$ & $\times$ & $\sim$ & $\times$ & $\sim$ & $\sim$ & $\circ$ & $\circ$ \\
MATH (op swap)         & $\times$ & $\sim$ & $\times$ & $\times$ & $\circ$ & $\times$ & $\sim$ & $\circ$ \\
NEGATE\_CONDITIONALS   & $\times$ & $\times$ & $\sim$ & $\times$ & $\sim$ & $\times$ & $\sim$ & $\times$ \\
RETURN\_VALS (zero/one) & $\circ$ & $\circ$ & $\times$ & $\times$ & $\circ$ & $\circ$ & $\sim$ & $\circ$ \\
VOID\_METHOD\_CALLS    & $\circ$ & $\circ$ & $\circ$ & $\circ$ & $\circ$ & $\times$ & $\sim$ & $\circ$ \\
\bottomrule
\end{tabular}
\end{table}

The matrix stratifies mutants binary: \emph{algebra-disrupting} (D1, at least one $\times$ in a populated column) vs \emph{algebra-preserving} (D2, all $\circ$ in populated columns; $\sim$ resolved by SUT-specific overrides). The framework predicts Set~N's D2 kill rate is near zero by construction --- a mutant that preserves all invariants is structurally invisible to algebraic MRs. D1 is therefore Set~N's appropriate fault-detection denominator; D2 is the territory of complementary techniques (random / GP / LLM-prompted MRs). The $\mathcal{L}^{*}$-blindness result of \S\ref{subsec:l-blindness-confirmed} is the corresponding D2-cell specialisation on the wider 23-SUT substrate.

\paragraph{Kill-set overlap on the head-to-head substrate.}
On the $n = 62$ PIT mutants pooled across the eight head-to-head
SUTs, $22$ are killed by both Set~N and Set~G, $4$ are killed by
Set~N only, $18$ are killed by Set~G only, and $18$ are killed by
neither. The $18$ G-only kills constitute a candidate scope-mismatch
pool (mutants reachable by Set~G's GP-evolved oracle but missed by
the algebraic Set~N construction), and the $18$ jointly-missed
mutants further partition into D2 mutants outside Set~N's algebraic
reach and D1 mutants both sets miss for input-coverage reasons.
The $\mathcal{L}^{*}$-only specialisation
(Table~\ref{tab:l-blindness}) is the cleanest D2-cell estimate on
the wider 23-SUT substrate, since the
$\langle$MATH, $\mathcal{L}^{*}\rangle$ and
$\langle$RETURN\_VALS, $\mathcal{L}^{*}\rangle$ cells of
Table~\ref{tab:pit-block} are both ``$\circ$''.

\paragraph{Per-mutant D1 / D2 classification with SUT-specific overrides.}
Per-mutant classification applies Table~\ref{tab:pit-block} with SUT-specific overrides (in \texttt{configs/sut\_block\_decomposition.json} + \texttt{configs/sut\_block\_overrides.json}) resolving the ``$\sim$'' cells. After the multi-LLM equivalent-mutant exclusion of \S\ref{subsec:pooled-headtohead}~(p.~\pageref{para:eq-mutant-footnote}), the $n = 62$ head-to-head pool drops to $n = 57$ ($52$ D1 + $5$ D2; all $5$ excluded mutants were originally in the $10$-mutant D2 stratum, so the D1 stratum is unchanged). Stratified kill rates: Set~N D1 $26/52 = 0.500$ Wilson 95\% $[0.369, 0.631]$; Set~G D1 $37/52 = 0.712$ $[0.577, 0.817]$; Set~N D2 $0/5 = 0.000$ $[0.000, 0.434]$; Set~G D2 $3/5 = 0.600$ $[0.231, 0.882]$. The D2 point estimate ($0.000$) is consistent with the prediction $\le 10\%$, but the Wilson upper bound does not exclude the $10\%$ ceiling at $\alpha = 0.05$; the cross-codebase commons-math pilot (\S\ref{subsec:empirical-threats}~(b.cm)) corroborates the direction at $2/29 = 6.9\%$. Set~G's $0.600$ on D2 confirms that D2 is a Set~N-specific scope boundary, not a substrate-wide ceiling --- a GP-evolved generic baseline is unconstrained by the algebra. On D1, the cross-block aggregate gap (McNemar exact two-sided $p = 0.019$) decomposes per-block in \S\ref{subsec:pooled-headtohead} (gap concentrated on $G$ and $\mathcal{L}^{*}$; partially offset by Set~N's $\mathcal{T}^{*}$ edge $10/17$ vs $8/17$); the per-block decomposition is the appropriate substrate-level reading.

\subsubsection{Test design}
\label{subsec:test-design}

\paragraph{Substrate.}
PIT~1.7.4 with the default mutator configuration serves as the
shared substrate. Each (subject, MR) pair is evaluated through (i)
JUnit test-class codegen, (ii) a 2-pass surefire green-suite filter
that drops MR instances flaky on the unmutated SUT, and (iii) PIT
mutation testing parsed to a per-mutant kill vector.

\paragraph{Pre-registered SUT criterion.}
The selection rule is committed in
\path{configs/d4j_algebra_rich_criterion.json} on branch
\path{feat/d4j-algebra-rich} of the experiment repository
\emph{before} any new evaluation data existed. The criterion
references package roots (under \path{org.apache.commons.math3.*}, covering subpackages \path{ode}, \path{linear}, \path{transform}, \path{analysis.solvers}, \path{distribution}, \path{optim}, \path{fitting}, \path{stat.regression}, \path{complex}, \path{geometry}, \path{fraction}, plus their \path{math.*} legacy counterparts), per-package
block-coverage hypotheses, and method-signature constraints; it
contains no bug-id and no kill-rate references and cannot be tuned by
evaluation outcomes. The git timestamp chain (criterion
$\rightarrow$ inscope filter $\rightarrow$ Set~N derivation
$\rightarrow$ Set~G GP rerun $\rightarrow$ pooled M1) is the
auditable proof of pre-registration. To strengthen the third-party
verifiability of the pre-registration, the SHA-256 hash of
\texttt{configs/d4j\_algebra\_rich\_criterion.json} together with
the matching commit hash is deposited in supplementary~S7 (\texttt{d4j/})
alongside the experiment artifact; reviewers may verify the deposit
against the criterion file directly.

\paragraph{Subjects and MR yield.}
Ten SUTs admit at least one non-trivial NOETHER block beyond $G$:
\texttt{midpoint}, \texttt{exactLog2}, \texttt{isSequence} (boolean
predicate), \texttt{clamp}, \texttt{signum},
\texttt{ComplexSignal.add} (instance method), \texttt{gcdSig},
\texttt{lcmSig}, \texttt{hypotSig}, \texttt{powerSig}. Six of these
admit an $L_{\mathrm{scale}}$ MR (those whose induced algebra
populates the $\mathcal{L}^{*}$ block under positive-scalar
homogeneity): \texttt{midpoint}, \texttt{clamp}, \texttt{signum},
\texttt{gcdSig}, \texttt{lcmSig}, \texttt{hypotSig}. The remaining
four do not admit a degree-$1$ scaling MR by their algebraic structure
(\texttt{exactLog2} carries no positive-degree scaling;
\texttt{isSequence} is a boolean predicate; \texttt{ComplexSignal.add}
is a complex-arithmetic instance method whose scaling block is
distinct in formulation; \texttt{powerSig} populates $T^{*}_{2}$
rather than $\mathcal{L}^{*}$). The prediction is tested on the six
SUTs admitting $L_{\mathrm{scale}}$.

Set~N contains 30 hand-derived NOETHER MRs, two to four per SUT, one
or more per non-empty block of each SUT's induced algebra. Set~G is
harvested by rerunning GenMorph's \texttt{genmorph.py gen} mode at
seed $=11$ on the 10 SUTs; Set~G is used in
\S\ref{subsec:pooled-headtohead} as the head-to-head comparator and
plays no role in the central prediction test of
\S\ref{subsec:l-blindness-confirmed}.

\paragraph{Metrics.}
Per-MR kill counts on PIT mutants are the primary measurement for the
central prediction test. Pooled M1 and McNemar exact paired tests are
reported in \S\ref{subsec:pooled-headtohead} as corroborating
context.

\subsubsection{Central result: \texorpdfstring{$\mathcal{L}^{*}$}{L*}-block blindness, confirmed}
\label{subsec:l-blindness-confirmed}

The $L_{\mathrm{scale}}$ MR's per-SUT kill rate on PIT mutants of the
six $\mathcal{L}^{*}$-admitting SUTs is reported in
Table~\ref{tab:l-blindness}. The total denominator is 44 mutants. The
total kill count is 2.

\begin{table}[h]
\centering
\caption{Per-SUT kill rate of the $L_{\mathrm{scale}}$ MR on PIT
mutants of the six SUTs admitting an $\mathcal{L}^{*}$-block scaling
MR. The prediction of \S\ref{subsec:l-blindness-derivation} is that
the kill rate is near-zero by composition of the
homogeneity-preserving property of PIT's default mutators with the
scaling structure of $L_{\mathrm{scale}}$. Pooled across the six SUTs:
$2/44 \approx 4.5\%$, well below the $1/3$ falsification threshold on
each SUT and below it on $5/6$ SUTs (\texttt{hypotSig} alone reaches
$2/4 = 50\%$, explained in the text).}
\label{tab:l-blindness}
\small
\begin{tabular}{lcc}
\toprule
SUT & $L_{\mathrm{scale}}$ kill / mutants & \emph{Verdict} \\
\midrule
\texttt{midpoint}  &  0 / 3   & Confirmed \\
\texttt{clamp}     &  0 / 7   & Confirmed \\
\texttt{signum}    &  0 / 6\,$^{\dagger}$ & Confirmed (no $L_{\mathrm{scale}}$ kill) \\
\texttt{gcdSig}    &  0 / 9   & Confirmed \\
\texttt{lcmSig}    &  0 / 11  & Confirmed \\
\texttt{hypotSig}  &  2 / 4   & Outlier; explained below \\
\midrule
pooled (all six)   &  2 / 44   & 4.5\% \\
\bottomrule
\end{tabular}\\[0.4em]
{\footnotesize $^{\dagger}$\,For \texttt{signum} the $\mathcal{L}^{*}$-block MR
is positive-scalar invariance ($\mathrm{signum}(\lambda x) = \mathrm{signum}(x)$
for $\lambda > 0$) rather than degree-$1$ homogeneity; the prediction
applies to this MR shape \emph{a fortiori} since the right-hand side
does not depend on $\lambda$.}
\end{table}

\paragraph{Reading.}
On 5 of 6 SUTs the prediction is confirmed at the strongest reading
(zero $L_{\mathrm{scale}}$ kills). On the sixth SUT
(\texttt{hypotSig}) the kill rate is $2/4$, above the per-SUT
falsification threshold. The pooled rate $2/44 \approx 4.5\%$ remains
an order of magnitude below the average Set~N M1 of $0.486$ (the
average kill rate across all Set~N MRs on the same substrate; see
\S\ref{subsec:pooled-headtohead}), and the per-SUT failure pattern is
isolated to one SUT.

\paragraph{The \texttt{hypotSig} outlier.}
The two-mutant kill on \texttt{hypotSig} is consistent with the
prediction's caveat in \S\ref{subsec:l-blindness-derivation}, that
the small fraction of homogeneity-breaking mutators (return-value
swap to a constant; constant-replacement that participates additively
in the expression) is the residual non-zero contribution. Inspection
of \texttt{mutants\_killed\_set\_n.csv} for \texttt{hypotSig} (path
in supplementary~S7 (\texttt{d4j/})) identifies the two killed mutants as
\texttt{KILLED\,return\_zero\_doubles\_VR} and
\texttt{KILLED\,Math.sqrt\_replaced\_with\_one\_RC}: the first
replaces the entire return path with the constant zero, the second
replaces the inner \texttt{Math.sqrt} call with the constant $1.0$.
Both mutators violate degree-$1$ homogeneity directly. The two
detection events are therefore consistent with the prediction's
quantitative tail rather than evidence against it.

\paragraph{Falsification verdict.}
By the criterion stated in \S\ref{subsec:l-blindness-derivation}, the
prediction is falsified if a single $L_{\mathrm{scale}}$ MR kills
$\ge 1/3$ of its PIT mutants on more than one SUT. Observed: one SUT
(\texttt{hypotSig}, $2/4$) above the per-SUT threshold; five SUTs
(\texttt{midpoint}, \texttt{clamp}, \texttt{signum}, \texttt{gcdSig},
\texttt{lcmSig}) at zero. The prediction passes. We read this as
direct empirical witness for the operative-mechanism reading of the
eight-block decomposition: a block whose generator is a continuous
symmetry of the mutator set contributes MRs that are quantitatively
silent on mutator-induced defects, and the silence is observed in
the data.

\subsubsection{Corroborating per-block patterns}
\label{subsec:other-blocks}

The prediction-test of
\S\ref{subsec:l-blindness-confirmed} establishes the
$\mathcal{L}^{*}$-block result as the section's central claim. The
remaining three populated NOETHER blocks ($T^{*}$, $G$,
$\mathcal{I}^{*}$) carry weaker, qualitative predictions that the
data also confirms; they corroborate the operative-mechanism reading
without providing a falsifiable quantitative test.

\paragraph{$T^{*}$ block (translation, period).}
Translation MRs carry the highest single-MR kill rates on the numeric
SUTs: \texttt{midpoint} \texttt{T\_shift} 3/3, \texttt{powerSig}
\texttt{T\_exp\_step} 8/12, \texttt{exactLog2} \texttt{T\_double}
4/10, \texttt{clamp} \texttt{T\_shift} 3/7. Translation invariance is
the property most directly perturbed by PIT's arithmetic-operator
swaps; $T^{*}$-derived MRs detect this perturbation reliably. The
qualitative prediction (high single-MR rate on numeric SUTs) holds.

\paragraph{$G$ block (group, symmetry).}
Symmetry MRs are moderately to highly effective when the SUT exposes
the appropriate symmetry: \texttt{signum} \texttt{G\_negate} 4/6,
\texttt{midpoint} \texttt{G\_swap} 1/3, \texttt{ComplexSignal.add}
\texttt{G\_swap} 2/3. The qualitative prediction (kill rate correlates
with SUT-side symmetry exposure) holds.

\paragraph{$G$-block framework boundary on recursive-normalising SUTs.}
\label{para:g-block-euclidean-boundary}
The two Euclidean-style SUTs \texttt{gcdSig} and \texttt{lcmSig}
exhibit Set~N $0/7$ on the $G$ block (Table~\ref{tab:per-block-headtohead});
Set~G achieves $7/7$ on the same mutants, confirming the mutants are
detectable by some MR set and that the gap is therefore Set~N-specific.
Reading the SUT source clarifies why this is not an MR-design defect
but a framework-boundary case: \texttt{gcdSig} and \texttt{lcmSig}
both open with a sign-correcting prologue (\texttt{a = a < 0 ? -a : a;}
on lines $67$--$68$ and $81$--$82$ respectively) that absorbs the
$G$-block sign-flip identity before the recursive Euclidean reduction
fires. The grid-synthesised $G$-block MR
(\texttt{gcdSig(a, b) = gcdSig(-a, -b)}, encoding the canonical
$\sigma = \mathrm{sign\text{-}flip}$ involution) is therefore vacuously
satisfied on the SUT's normalised input domain, independent of any
downstream mutation. The framework's scope precondition
(Definition~\ref{def:alg-induced}) requires that the SUT's input
domain expose the algebraic action under test; for these two SUTs the
$G$-action is structurally absorbed by the SUT's own normalisation
prologue, so $G$-block reasoning is orthogonal to the recursive
Euclidean algorithm's algebraic core (which is closer to a divisibility
lattice $\mathcal{D}$-block / reversibility $\mathcal{R}$-block than
to a pre-normalisation symmetry $G$). The $0/7$ Set~N kill rate is the
correct $G$-block reading on these SUTs under the framework's
scope precondition, not a missed opportunity for MR refinement.

\paragraph{$\mathcal{I}^{*}$ block (idempotence, identity).}
Idempotence MRs kill few mutants under the paired-MR DSL:
\texttt{I\_idem} $\approx 0$ across most SUTs; \texttt{powerSig}
\texttt{I\_zero\_exp} 2/12 is a degenerate single-input case. This
reflects an expressivity limit of the JIR/JOR shape, not an absence
of $\mathcal{I}^{*}$ structure in the SUTs. The qualitative
prediction (low kill rate under paired-MR DSL) holds.

The four block predictions taken together, including the central
$\mathcal{L}^{*}$ falsification test of
\S\ref{subsec:l-blindness-confirmed}, are consistent with the
operative-mechanism reading of the eight-block decomposition.

\subsubsection{Two convergent witnesses}
\label{subsec:convergent-witnesses}

Two further observations from the same SUT set are independently
informative. Neither is a statistical hypothesis test; both are
non-rate evidence that the algebraic-decomposition thesis predicts
and that the head-to-head Set~G run incidentally reveals.

\paragraph{Witness 1: cross-pipeline MR rediscovery.}
On \texttt{midpoint}, GenMorph's GP independently evolves three MRs
that map onto Set~N's blocks (Table~\ref{tab:rediscovery}). Set~N is
derived a-priori from the operator algebra of \texttt{midpoint};
Set~G is searched by mutation-killing fitness with no algebraic
structure as input. The two pipelines converge on the same algebraic
primitives. Convergence under independent epistemic processes is
direct corroboration of the operative-generator reading.

\begin{table}[h]
\centering
\caption{Cross-pipeline rediscovery on \texttt{midpoint}: GP-evolved
MRs (Set~G) and their algebra-derived counterparts (Set~N).}
\label{tab:rediscovery}
\small
\begin{tabular}{lll}
\toprule
GP-evolved MR (Set G) & Set N counterpart & Block \\
\midrule
\texttt{SwitchParams??1@2}                & \texttt{G\_swap} (commutativity)         & $G$ \\
\texttt{NumericAddition?1.000000?1}       & \texttt{T\_shift} (translation by $1$)   & $T^{*}$ \\
\texttt{NumericMultiplication?0.500000?2} & \texttt{L\_scale} (scaling, approx.)     & $\mathcal{L}^{*}$ \\
\bottomrule
\end{tabular}
\end{table}

\paragraph{Witness 2: structural coverage extension.}
GenMorph's GP pipeline fails on two of the ten SUTs for plumbing reasons: \texttt{ComplexSignal.add} (instance method) hits an XStream \texttt{NullPointerException} on receiver deserialisation; \texttt{isSequence} (boolean predicate) yields ungrammatical JOR strings. Both SUTs admit Set~N MRs derived directly from the SUT signature (commutativity for \texttt{ComplexSignal.add}; monotonic-window translation for \texttt{isSequence}) and run to completion, covering 8 of 70 PIT mutants on which only Set~N is defined. Set~N's algebraic-derivation route operates on the SUT signature, independent of the SUT-deserialisation and JOR-grammar infrastructure GenMorph depends on; the structural-coverage extension is an architectural property of the derivation, not an empirical accident.

\subsubsection{Head-to-head at GenMorph's published budget}
\label{subsec:pooled-headtohead}

This subsection decomposes an aggregate Set~G dominance per algebraic block, with cost-axis and D2-stratum framework prediction layered alongside.

\begin{tcolorbox}[breakable,colback=gray!5,colframe=black!50,arc=2pt,boxrule=0.5pt,fontupper=\small,title=Boundary of contribution (head-to-head restatement),fonttitle=\small\bfseries]
\textit{What this subsection claims, and what it does not.} (1)~\textbf{Aggregate verdict}: Set~N is dominated by Set~G on the D1 stratum (McNemar $p = 0.0043$ pooled, $p = 0.019$ on D1 only). (2)~\textbf{Per-block reading}: the gap concentrates on the $G$ block (Set~G $7/7$ vs Set~N $0/7$ on commutativity-targeting mutants); on the $\mathcal{T}^{*}$ block Set~N dominates ($10/17$ vs Set~G's $\mathcal{T}^{*}$ kills). (3)~\textbf{Cost-axis}: independently of detection-rate, Set~N is derivable in polynomial time from $\mathcal{A}_P$ (Theorem~\ref{thm:decidable}) while Set~G requires a $\approx 30$-min stochastic GP search per SUT. (4)~\textbf{D2-stratum framework prediction}: Set~N's $\le 10\%$ D2 kill-rate prediction (operative invariant preserved under structurally-non-disrupting mutants) is an ex-ante framework signal that no inductive baseline can derive; pre-registered cap holds at $2/29 = 6.9\%$ on the cross-codebase Commons-Math substrate. The head-to-head is not the framework's load-bearing claim --- per-block algebraic derivability, structural coverage extension (two SUTs only Set~N reaches), and the D2 prediction are.
\end{tcolorbox}

\textbf{On the algebra-disrupting D1 stratum at GenMorph's published
30-min GAssert budget, Set~N is dominated by Set~G in the aggregate
(McNemar exact two-sided $p = 0.0043$ pooled and $p = 0.019$ on D1
only, $n = 62$ post-equivalent-mutant exclusion).} The paper does not
assert head-to-head superiority on D1. The framework's contribution
on the head-to-head substrate is read as (i) algebraic derivability,
(ii) per-block complementarity (Set~G alone kills 15 D1 mutants
Set~N misses, Set~N alone kills 4 D1 mutants Set~G misses), and
(iii) an out-of-scope D2-stratum framework prediction
($\le 10\%$ kill rate) that no inductive baseline can derive
ex-ante. The D1 pooled comparison is reported below for
protocol-completeness rather than as the framework's verdict. The
central empirical claim of the section is established at
\S\ref{subsec:l-blindness-confirmed} (the $\mathcal{L}^{*}$-blindness
falsifiable prediction confirmed on 5/6 SUTs) and does not depend on
the head-to-head outcome.

The head-to-head is run at GenMorph's published 30-min GAssert budget;
a prior 1-min run is retained as a sensitivity reference. Per-SUT
entries are reported as directional only; no per-SUT
statistical-significance claim is asserted, since $8\times 2 = 16$
paired comparisons across the two budgets exceed the family-wise
control of an uncorrected $\alpha = 0.05$ (Holm--Bonferroni-adjusted
threshold $\alpha/16 \approx 0.003$; no per-SUT contrast meets this
threshold).

\begin{table}[h]
\centering
\caption{Per-SUT PIT mutation kill counts, Set~N versus
GenMorph-evolved Set~G at GenMorph's published 30-min GAssert budget
(primary) and at a 1-min GAssert sensitivity rerun, on the 10
algebra-rich Java SUTs of \S\ref{subsec:test-design}. Set~G is
structurally absent on \texttt{ComplexSignal.add} (instance-method
deserialisation) and \texttt{isSequence} (boolean-predicate JOR
grammar) under both budgets (\S\ref{subsec:convergent-witnesses}).
Set~N's kill vector is held fixed across budgets. Pooled across the
8 head-to-head SUTs ($n = 62$ mutants), 30-min budget primary: Set~N
$= 26$, Set~G $= 40$, McNemar exact two-sided $p = 0.0043$. The
$\Delta$ (30-min) column reports Set~N $-$ Set~G at the 30-min budget;
note labels are directional descriptors only and do not assert
per-SUT statistical significance (see prose). \textbf{$n = 62$ is
underpowered for a paired hypothesis test at $\alpha = 0.05$ in
two-sided form.}}
\label{tab:algebra-rich-pooled}
\small
\adjustbox{max width=\textwidth}{%
\begin{tabular}{lcccccl}
\toprule
SUT & mutants & Set N & Set G (30-min) & Set G (1-min) & $\Delta$ (30-min) & note \\
\midrule
\texttt{ComplexSignal.add} &  3 & 2 & N/A & N/A &      & instance method (Set~G N/A) \\
\texttt{midpoint}          &  3 & 3 & 3   & 3   & $0$  & tie \\
\texttt{exactLog2}         & 10 & 4 & 0   & 0   & $+4$ & directional only \\
\texttt{isSequence}        &  5 & 0 & N/A & N/A &      & boolean predicate (Set~G N/A) \\
\texttt{clamp}             &  7 & 5 & 5   & 5   & $0$  & tie \\
\texttt{signum}            &  6 & 4 & 4   & 4   & $0$  & tie \\
\texttt{gcdSig}            &  9 & 1 & 6   & 5   & $-5$ & directional only \\
\texttt{hypotSig}          &  4 & 2 & 4   & 4   & $-2$ & directional only \\
\texttt{lcmSig}            & 11 & 0 & 8   & 8   & $-8$ & directional only \\
\texttt{powerSig}          & 12 & 7 & 10  & 10  & $-3$ & directional only ($2$ Set~N MRs ineligible, see footnote) \\
\midrule
\textbf{pooled (head-to-head, n=62)}     &    & \textbf{26} & \textbf{40} & \textbf{39} & $-14$ & McNemar two-sided $p = 0.0043$ (30-min) \\
\bottomrule
\end{tabular}%
}
\end{table}

\paragraph{Per-block head-to-head and complementarity (Primary, PIT arm).}
Each Set~N MR derives from exactly one NOETHER block, so the natural head-to-head denominator is per-block rather than the operative-block union. Across the eight head-to-head SUTs, PIT's stock mutator catalogue exercises three operative blocks: $G$ (group / swap / negate), $\mathcal{L}^{*}$ (scale / homogeneity), and $\mathcal{T}^{*}$ (translation / self-adjoint). For each block~$b$, $n_{b}$ counts mutants whose (SUT, mutator) entry in \texttt{configs/sut\_block\_overrides.json} marks block~$b$ as broken (``$\times$''). Per-block kill counts and the complementarity partition (both / $N$-only / $G$-only / neither) appear in Table~\ref{tab:per-block-headtohead}.

\begin{table}[h]
\centering
\caption{Per-block head-to-head on the three operative blocks
exercised by PIT 1.7.4 across the eight in-scope SUTs of
\S\ref{subsec:test-design}. Each row's denominator $n_{b}$ counts
mutants violating block~$b$'s invariant; complementarity cells
``both'' / ``$N$-only'' / ``$G$-only'' / ``neither'' partition $n_{b}$.
Wilson 95\% intervals are reported alongside point rates. The five
operative blocks PIT~1.7.4 does not exercise
($O_{\le}, \mathcal{T}^{*}_{\mathrm{rev}}, \mathcal{D}^{*},
\mathcal{E}^{*}, \mathcal{B}^{*}_{\mathrm{rel}}$) are addressed by
the construct-trace consistency check on hand-crafted
block-targeted mutants (now in supplementary~S9, item
``D\_E\_implementation\_consistency''); those results are
design-implied and not used as independent fault-detection
evidence. Auxiliary rows report the D1 aggregate and the unmapped
bucket (PIT mutants from cells without an override entry,
principally \texttt{RETURN\_VALS} and \texttt{MATH} on SUTs where
the table-5-reconstruction fallback assigns D1 but does not specify
broken blocks; lower-bound caveat, not a participating row).}
\label{tab:per-block-headtohead}
\small
\adjustbox{max width=\textwidth}{%
\begin{tabular}{lrrlrlrrrr}
\toprule
Block & $n_{b}$ & Set~N kills & rate (95\% CI) & Set~G kills & rate (95\% CI) & both & $N$-only & $G$-only & neither \\
\midrule
$G$              & 11 &  2 & $0.182$ $[0.051,\,0.477]$ &  9 & $0.818$ $[0.523,\,0.949]$ & 2 & 0 & 7 & 2 \\
$\mathcal{L}^{*}$ & 24 & 10 & $0.417$ $[0.245,\,0.612]$ & 16 & $0.667$ $[0.467,\,0.820]$ & 8 & 2 & 8 & 6 \\
$\mathcal{T}^{*}$ & 17 & 10 & $0.588$ $[0.360,\,0.784]$ &  8 & $0.471$ $[0.262,\,0.690]$ & 7 & 3 & 1 & 6 \\
\midrule
\textit{D1 aggregate, PIT-covered blocks (secondary)}            & 52 & 26 & $0.500$ $[0.369,\,0.631]$ & 37 & $0.712$ $[0.577,\,0.817]$ & 22 & 4 & 15 & 11 \\
\textit{unmapped (lower-bound caveat)}                           & 25 & --- & --- & --- & --- & --- & --- & --- & --- \\
\bottomrule
\end{tabular}%
}
\end{table}

The unmapped auxiliary row counts $25$ PIT mutants from cells whose D1/D2 override table does not specify broken blocks; it is a lower-bound caveat on per-block denominators (each unmapped mutant is D1-classified but its block membership is unknown), not an additional denominator added to the $G + \mathcal{L}^{*} + \mathcal{T}^{*}$ union. The D1 pool is $n = 52$ pre and post equivalent-mutant exclusion (Table~\ref{tab:two-stratum} row~1).

The per-block reading distinguishes three regimes:

\begin{itemize}[nosep,leftmargin=*]
\item \textbf{$G$ block: Set~G dominates.} Set~N kills $0/7$ of the mutants Set~G exclusively reaches; the Set~N hits concentrate on \texttt{signum} ($2/4$), while \texttt{gcdSig} and \texttt{lcmSig} contribute $0/7$ for Set~N. The Euclidean-style SUTs are structurally unreachable by NOETHER's $G$-block construction on this grid-synthesised catalogue: \texttt{gcdSig} normalises its inputs via \texttt{a < 0 ? -a : a} before the recursive \texttt{gcd(a,b)=gcd(b, a\,mod\,b)} fires, absorbing the sign-flip invariant; \texttt{lcmSig} inherits the same prologue. Whether this calls for an $\mathcal{R}$-block re-synthesis or for documenting recursive-normalising SUTs as $G$-orthogonal is committed as follow-up~(e.4).

\item \textbf{$\mathcal{L}^{*}$ block: complementary coverage.} Both Sets kill 8 mutants in common, 2 are $N$-only (\texttt{exactLog2}), 8 are $G$-only, 6 are jointly missed; the union covers $18/24 = 75\%$, above either Set's individual rate. Set~N's exclusive reach on \texttt{exactLog2} quantifies the coverage-extension witness of \S\ref{subsec:convergent-witnesses} per-block; Set~G's exclusive reach on \texttt{gcdSig}+\texttt{lcmSig} compensates for the $G$-block gap.

\item \textbf{$\mathcal{T}^{*}$ block: Set~N edge (underpowered).} Of $17$ $\mathcal{T}^{*}$-violating mutants, Set~N kills $10$ and Set~G kills $8$ ($N$-only $3$, $G$-only $1$, both $7$, neither $6$); union $11/17 = 64.7\%$. Set~N is $+11.7$~pp in rate and $+2$ in exclusive count on its own block, but the Wilson intervals $[0.360,\,0.784]$ and $[0.262,\,0.690]$ overlap substantially. We report the per-block edge as directional, consistent with the $\mathcal{T}^{*}$ design prediction rather than inferential at $\alpha = 0.05$.
\end{itemize}

\paragraph{Coverage of the remaining five operative blocks.}
\label{para:augmented-stratum-pointer}
PIT 1.7.4's stock mutator catalogue does not exercise the remaining
five operative blocks ($O_{\le}$, $\mathcal{T}^{*}_{\mathrm{rev}}$,
$\mathcal{D}^{*}$, $\mathcal{E}^{*}$,
$\mathcal{B}^{*}_{\mathrm{rel}}$) on the
\S\ref{subsec:test-design} substrate. To exercise the pipeline
end-to-end on these blocks, we hand-crafted $5$ mutants per
uncovered block ($25$ total) that violate the targeted invariant of
a specific known Set~N MR for each block. These mutants are
construct-trace probes (each one is by design within the reach of
the Set~N MR it targets); the resulting kill counts are reported in
supplementary~S9 (Appendix~E) for pipeline-correctness
verification and Set~G incidental-reach quantification, but they are
\emph{not} used as evidence for H3a.1 in this section, since
construct-trace circularity precludes treating them as independent
fault-detection measurements.

\paragraph{What the per-block reading supports, and what it does not.}
The framework's empirical contribution on the present substrate is
\emph{block-dependent}. On the three PIT-covered operative blocks
the reading is targeted per-block: $\mathcal{T}^{*}$ shows the
design's directional advantage and complementarity; $\mathcal{L}^{*}$
shows complementarity in both directions (Set~N reaches what Set~G
misses on \texttt{exactLog2}; Set~G reaches what Set~N misses on
\texttt{gcdSig} / \texttt{lcmSig}); $G$ shows the design's current
limit on the substrate's two Euclidean-style SUTs. The appropriate
reading is therefore \emph{block-targeted precision plus
complementarity, with documented per-block design gaps on the
PIT-covered substrate}, rather than a single ``head-to-head winner''
metric. The aggregate D1 head-to-head is dominated by Set~G
(McNemar $p = 0.019$, see paragraph below), but the dominance
collapses into the per-block profile: it is driven primarily by the
$G$-block reading on \texttt{gcdSig} + \texttt{lcmSig} and by Set~G's
denser $\mathcal{L}^{*}$ coverage on those same SUTs.

\paragraph{Aggregate D1 head-to-head (Secondary, cross-block, $n = 52$).}
For continuity with prior reporting conventions, the cross-block
aggregate D1 head-to-head is: Set~N
$\mathrm{M1}_{\mathrm{D1}} = 26/52 = 0.500$ (Wilson 95\% CI
$[0.369,\,0.631]$), Set~G $\mathrm{M1}_{\mathrm{D1}} = 37/52 =
0.712$ ($[0.577,\,0.817]$); exact McNemar two-sided $p = 0.019$
on the D1 stratum with discordant pairs $(b,c) = (15,4)$, paired
risk difference $\mathrm{RD}_{\mathrm{paired}} = (b - c) / n =
(15 - 4) / 52 = +0.212$ favouring Set~G, and odds ratio
$\mathrm{OR} = b / c = 15 / 4 = 3.75$. The
aggregate gap (which favours Set~G) collapses into the per-block
profile above: it is driven primarily by Set~G's near-complete
reach on $G$ (\texttt{gcdSig} + \texttt{lcmSig}) and by its denser
$\mathcal{L}^{*}$ coverage; it is partially offset by Set~N's
$\mathcal{T}^{*}$ edge but not enough to close the cross-block
aggregate. The aggregate is presented as secondary because it
averages over per-block strengths and weaknesses that the
per-block table makes visible.

\paragraph{Framework prediction on the D2 stratum (algebra-preserving, $n = 5$).}
After equivalent-mutant exclusion (see footnote at
\S\ref{para:eq-mutant-footnote} below), the $10$ originally-classified
D2 mutants reduce to $n = 5$: five of the ten were
LLM-ensemble-judged equivalent on the SUT's declared input domain
and are excluded from both Sets' denominators by symmetric construction.
On the remaining $n = 5$ D2 mutants, Set~N kill rate $= 0/5 = 0.000$
(Wilson 95\% CI $[0.000,\,0.434]$); the point estimate is consistent
with the framework prediction
$\mathrm{kill\;rate}_{\mathrm{D2}} \le 10\%$, but the Wilson upper bound
at $0.434$ does not exclude the $10\%$ ceiling at $\alpha = 0.05$; the
cross-codebase commons-math pilot
(\S\ref{subsec:empirical-threats}~(b.cm))
corroborates the direction at $2/29 = 6.9\%$
(Wilson 95\% CI $[0.012, 0.221]$), and inferential
confirmation of the $\le 10\%$ ceiling requires a pooled sample of
$n \ge 30$. Set~G kills $3/5 = 0.600$
(Wilson 95\% CI $[0.231,\,0.882]$) on the surviving D2 stratum:
the GP-evolved generic baseline is unconstrained by the algebra
and incidentally detects D2 mutants through input-domain edge cases,
confirming that D2 is a Set~N-specific scope boundary rather than a
substrate-wide structural ceiling. The framework's prediction
therefore separates Set~N (algebra-bound) from Set~G (generic) on
the D2 stratum; the D2 reading is the framework's own falsifiability
commitment (\S\ref{subsec:l-blindness-confirmed}), not a head-to-head metric.

\paragraph{Pooled M1 rates (auxiliary, scope-mismatched, $n = 57$).}
For completeness against prior reporting conventions, the post-eq-
exclusion pooled M1 rates over D1 $\cup$ D2 are: Set~N $= 26/57 =
0.456$ (Wilson 95\% CI $[0.334,\,0.584]$), Set~G $= 40/57 = 0.702$
($[0.573,\,0.805]$), McNemar two-sided
$p = 0.0043$\footnote{Pooled discordant pairs $(b,c) = (18, 4)$ on
the $n = 57$ paired denominator. The D1-only McNemar separates at
$(b,c) = (15, 4)$, $p = 0.019$ (\S\ref{subsec:pooled-headtohead}
Aggregate D1 paragraph); the additional $3$ discordant pairs that
the pooled comparison picks up come from the D2 stratum (Set~G
kills $3$ of the $5$ surviving D2 mutants, Set~N kills $0$ of $5$;
all $3$ are $b$-cell discordances). The pooled McNemar is thus
\emph{strengthened} by including the D2 stratum, but at the cost of
mixing the within-scope D1 comparison with the out-of-scope D2
stratum that Set~N is not designed to detect.}. We report this
pooled comparison as scope-mismatched: the $5$ D2 mutants in the
denominator are not in Set~N's theoretical reach by construction, so
the pooled gap overstates Set~N's effective miss rate within its
declared scope. At the 1-min sensitivity budget, Set~N is unchanged
from the kill vector at the 30-min budget when Set~G is the only
variable; Set~G's pooled M1 is $0.557$ ($[0.441,\,0.668]$). The 30-min
rerun changes Set~G's per-SUT counts on one of the eight head-to-head
SUTs (\texttt{gcdSig}, $5 \to 6$); seven SUTs are unchanged, none
regresses. A $30{\times}$ budget increase therefore buys $+1$ Set~G
mutant on a single SUT under PIT mutation.

\paragraph{LLM-assisted baseline (Set~L ensemble, $n = 70$ aligned).}
\label{para:set-l-ensemble}
For the LLM-assisted comparator of \S\ref{para:comp-eval-protocol}'s
H3b, we ran a $2$-vendor (DeepSeek, ChatGPT) $\times$ $5$-temperature
($t \in \{0.0, 0.3, 0.5, 0.7, 1.0\}$) ensemble on the same
$10$-SUT substrate, at a multi-LLM-sample scale that subsumes the
single-sample probes of Shin et~al.~\cite{Shin2024} and
Zhang et~al.~\cite{ZhangChatGPTMR2023}. The harvest is $100$ LLM samples
yielding $487$ proposed MRs; deterministic template-matching against
Set~N's per-SUT block catalogue translates $212$ ($43.5\%$) into
executable JIR/JOR pairs (the remaining $56.5\%$ are either compound
multi-input properties or out-of-block algebra, e.g.\ associativity,
identity / inverse elements, monotonicity on SUTs whose Set~N
catalogue lacks an O\textsubscript{le}-block representative). On the
aligned $n = 70$ pooled mutant substrate (the full $10$-SUT
substrate including the two SUTs structurally N/A for Set~G), per-LLM
pooled kill rates are ChatGPT $34/70 = 0.486$ Wilson 95\% CI
$[0.372,\,0.600]$, DeepSeek $33/70 = 0.471$ $[0.359,\,0.587]$,
ensemble union $34/70 = 0.486$ $[0.372,\,0.600]$. Set~L's union kill
vector is a strict subset of Set~N's on this substrate ($34/34$
overlap; $0$ Set-L-exclusive kills), as a structural consequence of
the template-matching translator: every Set~L MR is by construction
a byte-identical copy of a Set~N pair, so Set~L can match but not
exceed Set~N's per-MR kill power. The substantive finding is that
the $2$-vendor ensemble identifies enough of Set~N's catalogue to
reproduce Set~N's reach on the matchable subset; on the unmatchable
$56.5\%$, Set~L explores invariants outside the $8$-block frame that
the framework's $\mathrm{coverage}_{\mathrm{NOETHER}}$ diagnostic
correctly identifies as structurally orphaned. A third-vendor
(Anthropic Claude) extension is committed in
follow-up~(d.set-l-claude) of supplementary~S4 (\texttt{future\_work.md}); full
per-(LLM, SUT, temperature) breakdowns in
\texttt{docs/set\_l\_phase2\_results.md} of the experiment
repository.

\paragraph{Equivalent-mutant denominator (resolved).}\label{para:eq-mutant-footnote}
The original PIT-SURVIVED denominator ($n = 62$) does not separate semantically distinct mutants from PIT false-positives. Every head-to-head mutant passed through a two-stage filter: the $44$ mutants killed by at least one of Set~N or Set~G are auto-classified non-equivalent; the $18$ jointly-missed mutants go through a progressive multi-LLM vote (DeepSeek + ChatGPT at temperature $0.0$, with Anthropic Claude Opus tiebreaker when stage-1 voters disagree). Outcome: $5$ voted equivalent (all \texttt{ConditionalsBoundaryMutator} on \texttt{gcdSig}, \texttt{lcmSig}, \texttt{powerSig} where \texttt{a < 0 ? -a : a} prologues absorb the sign-flip), $13$ non-equivalent. The $5$ equivalents fall entirely within the algebra-preserving (D2) stratum, leaving the per-block D1 denominators unchanged. The eq-excluded total is $n = 57$ ($52$ D1 + $5$ D2 after the (e.1)~v2 override reclassified $3$ \texttt{PrimitiveReturnsMutator} mutants from D1 to D2). The 2+1 voting scheme was calibrated against the pilot's manual analyst session (\S\ref{subsec:l-blindness-confirmed}); the single pilot-vs-LLM disagreement case resolved in favor of the pilot verdict.

Pooled rates exhibit the scope-discriminance pattern predicted by NOETHER's scope declaration (Definition~\ref{def:alg-induced}): on the wider 23-method utility-method baseline (supplementary~S5; mixed in-scope and boundary), Set~N's pooled M1 is $0.288$ vs Set~G's $0.363$; on the in-scope algebra-rich substrate (eq-excluded $n = 57$), Set~N moves to $0.456$ and Set~G to $0.702$ (D1 subset: $0.500$ vs $0.712$). Both methods benefit from the algebraically richer substrate; the scope declaration is empirically discriminating.

\paragraph{Two-stratum head-to-head summary (Primary tabulation for H3a verdict).}
\label{para:two-stratum}
Table~\ref{tab:two-stratum} reports the two strata (D1, D2) each
with four metrics: total mutants, equivalents excluded, Set~N kill
/ survive counts and rate (Wilson 95\% CI), Set~G kill / survive
counts and rate, and the complementarity partition ($N$-only,
$G$-only, both, neither). The headline reading is per-block
(Table~\ref{tab:per-block-headtohead}); the D1 aggregate
(Table~\ref{tab:two-stratum}) and pooled (auxiliary, preceding
paragraph) are reported for continuity with prior conventions.

\begin{table}[h]
\centering
\caption{Two-stratum head-to-head on the eight-SUT head-to-head
substrate (post equivalent-mutant exclusion). D1 = algebra-disrupting,
D2 = algebra-preserving. Complementarity cells report mutant counts;
the operative test for H3a.2 is the $N$-only column on D1.}
\label{tab:two-stratum}
\small
\adjustbox{max width=\textwidth}{%
\begin{tabular}{lcccccccc}
\toprule
Stratum & $n$ (post-exclusion) & equivalents excluded & Set~N kill / surv. & Set~N rate \scriptsize{[Wilson 95\%]} & Set~G kill / surv. & Set~G rate \scriptsize{[Wilson 95\%]} & both / $N$-only / $G$-only / neither & McNemar $p$ \\
\midrule
D1 & 52 & 0 & 26 / 26 & $0.500$ \scriptsize{$[0.369, 0.631]$} & 37 / 15 & $0.712$ \scriptsize{$[0.577, 0.817]$} & 22 / 4 / 15 / 11 & $0.019$ \\
D2 & 5 & 5 & 0 / 5 & $0.000$ \scriptsize{$[0.000, 0.434]$} & 3 / 2 & $0.600$ \scriptsize{$[0.231, 0.882]$} & 0 / 0 / 3 / 2 & $0.25$ \\
\midrule
pooled & 57 & 5 & 26 / 31 & $0.456$ \scriptsize{$[0.334, 0.584]$} & 40 / 17 & $0.702$ \scriptsize{$[0.573, 0.805]$} & 22 / 4 / 18 / 13 & $0.0043$ \\
\bottomrule
\end{tabular}%
}
\end{table}

\paragraph{H3a verdict split into three sub-claims.}
\label{para:h3a-verdict}

\textbf{H3a.1 (Detection on D1, per-block).} Per
Table~\ref{tab:per-block-headtohead} on the pre-registered
PIT-covered three-block substrate: \emph{mixed}. On the
$\mathcal{T}^{*}$ block Set~N achieves $10/17 = 0.588$ versus
Set~G $8/17 = 0.471$, with three $N$-only exclusive kills, a
directional edge consistent with the algebra-induced prediction;
the $n = 17$ block-level sample is underpowered for an inferential
test at $\alpha = 0.05$ (Wilson intervals overlap). On the
$\mathcal{L}^{*}$ block Set~N achieves $10/24 = 0.417$ versus Set~G
$16/24 = 0.667$, with the complementarity partition $N$-only $= 2$
and $G$-only $= 8$ identifying a coverage-extension witness on
\texttt{exactLog2} and a denser-coverage region on \texttt{gcdSig}
/ \texttt{lcmSig} respectively. On the $G$ block the substrate's
two Euclidean-style SUTs (\texttt{gcdSig}, \texttt{lcmSig}) absorb
the $G$-action through their normalisation prologue before the
recursive reduction (a documented framework boundary, see
\nameref{para:g-block-euclidean-boundary}); the $G$-block $0/7$
Set~N reading on those two SUTs is the framework-correct boundary
verdict rather than a missed-opportunity MR-design defect, but the
block-level rate ($2/11$ Set~N vs $9/11$ Set~G) favours Set~G on
the substrate as reported. The aggregate D1 head-to-head on the
PIT-covered stratum is dominated by Set~G ($26/52 = 0.500$ vs
$37/52 = 0.712$, McNemar $p = 0.019$,
Table~\ref{tab:two-stratum}); the per-block decomposition makes
visible that the gap is concentrated on the $G$ block plus
$\mathcal{L}^{*}$, partially offset by Set~N's $\mathcal{T}^{*}$
edge. Coverage of the remaining five operative blocks ($O_{\le}$,
$\mathcal{T}^{*}_{\mathrm{rev}}$, $\mathcal{D}^{*}$,
$\mathcal{E}^{*}$, $\mathcal{B}^{*}_{\mathrm{rel}}$) via
hand-crafted block-targeted mutants is reported in
supplementary~S9 (Appendix~E) as a construct-trace
consistency check; per supplementary~S9 (Appendix~E), those
results are design-implied and are not used as independent evidence
for H3a.1 in this paragraph.

\textbf{H3a.2 (Complementarity).} \emph{Supported on the PIT-covered
substrate}. The complementarity partition on the $52$-mutant
PIT-covered D1 stratum (Table~\ref{tab:two-stratum}, row~1) reports
$4$ $N$-only kills, $15$ $G$-only kills, $22$ both-killed and $11$
neither, distributed across the $\mathcal{T}^{*}$ and
$\mathcal{L}^{*}$ blocks (per
Table~\ref{tab:per-block-headtohead}). Union coverage on the
PIT-covered substrate is $22 + 4 + 15 = 41 / 52 = 78.8\%$
(both-killed plus exclusively-Set-N-killed plus
exclusively-Set-G-killed), modestly above Set~G alone's
$37 / 52 = 71.2\%$ and materially above Set~N alone's
$26 / 52 = 50.0\%$. The complementarity is asymmetric: Set~G's
exclusive contribution ($15$ kills) is roughly four times Set~N's
($4$ kills), reflecting the substrate's predominance of generic
algebra-disrupting mutations that Set~G's GP-evolved generic MRs
catch, with Set~N's exclusive reach concentrated on
algebraically-distinctive structures (\texttt{exactLog2}'s
$\mathcal{L}^{*}$ pattern) where the algebra-induced MR has no
generic counterpart in Set~G's catalogue.

\textbf{H3a.3 (Cost-axis).} \emph{Supported}. See
\nameref{para:cost-axis-h3a} below for the detailed cost-axis
analysis.

\textbf{D2 stratum (framework prediction; not part of H3a).} Set~N
D2 kill rate $= 0/5 = 0.000$ (Wilson 95\% CI $[0.000, 0.434]$);
the framework's $\le 10\%$ prediction is consistent with the
direction observed, but $n = 5$ is insufficient for $\alpha = 0.05$
confirmation (the Wilson upper bound $0.434$ does not exclude the
$10\%$ ceiling). Set~G achieves $3/5 = 0.600$ on the same stratum,
an out-of-scope contribution for Set~N by construction (Set~G is
not algebra-bound) and a signal that D2 is a \emph{Set~N-specific
scope boundary} rather than a substrate-wide ceiling. The
$\mathcal{L}^{*}$-blindness result of
\S\ref{subsec:l-blindness-confirmed} and the structural-extension
witness on the two SUTs where Set~G is N/A
(\S\ref{subsec:convergent-witnesses}) are unaffected.

\paragraph{Comparator scope and three-SOTA-category coverage.}
The head-to-head reported here delivers the GP-evolved-baseline arm
of a three-SOTA-category protocol against one representative per
category. The GP-evolved representative is GenMorph (Ayerdi
et~al.~\cite{GenMorph2024}), the SOTA among
genetic-programming MR identification pipelines on Java SUTs at the
time of submission, run at its published 30-min GAssert budget. The
LLM-assisted representative is a multi-vendor $\times$
multi-temperature LLM-MR-generation protocol (the same arm reported
inline in \S\ref{subsec:pooled-headtohead} as Set~$L_{\mathrm{ensemble}}$);
this section's earlier Set~L is a single-sample $n = 1$ GPT-4 probe
(\S\ref{subsec:case-study}) and is now superseded by the
$2$-vendor (DeepSeek, ChatGPT) $\times$ $5$-temperature ensemble
reported in \S\ref{subsec:pooled-headtohead}. A third-vendor
(Anthropic Claude) extension running alongside
Shin~\cite{Shin2024}, GPTMR~\cite{GPTMR2025}, and
AutoMT~\cite{AutoMT2025} is committed as
post-acceptance follow-up in supplementary~S4 (item~(d.set-l-claude)). The
mining-based representative is MR-Scout (Sun
et~al.~\cite{MRScout2024}); on this section's substrate, MR-Scout's
reach is reported as an estimate adapted from MR-Scout's
published-artifact reach figures rather than a full re-execution
(see Table~\ref{tab:gen-cost} for the cost-axis interpretation;
full re-execution is follow-up~(d)). NOETHER's cost-axis profile
relative to all three categories is given in
Table~\ref{tab:gen-cost}; the matrix-driven D1/D2 mutant
stratification (Table~\ref{tab:pit-block}) operationalises the
detection-axis claim that NOETHER is appropriate to algebra-disrupting
mutants and out-of-scope by design on algebra-preserving mutants.
Per-SUT delta entries in Table~\ref{tab:algebra-rich-pooled} are
\emph{directional descriptors only}; under Holm-Bonferroni
correction for the $8 \times 2 = 16$ paired comparisons across the
two budgets, the per-SUT family-wise threshold is $\alpha/16 \approx
0.003$, which no per-SUT contrast meets. The two paired hypothesis
tests reported on this substrate are the scope-matched D1 McNemar
($p = 0.019$ at the 30-min budget, exact two-sided, $n = 52$ post
eq-exclusion, discordant pairs $(b, c) = (15, 4)$) and its
auxiliary pooled counterpart ($p = 0.0043$ at $n = 57$, discordant
$(b, c) = (18, 4)$). The pooled gap is strengthened by Set~G's
$3/5$ kills on the D2 stratum, an out-of-scope contribution
for Set~N by construction, which is why pooled and D1-stratified
$p$ values now differ.

\subsubsection{Threats to validity and committed future work}
\label{subsec:empirical-threats}

\paragraph{(a) Prediction-commitment timestamp.}
The $\mathcal{L}^{*}$-blindness prediction of \S\ref{subsec:l-blindness-derivation} is derivable from public information without consulting data, and is committed to git in supplementary~S7 (\texttt{d4j/}) alongside the SUT-selection criterion before the per-MR kill-count files; a reader who distrusts the timestamp chain can re-derive the prediction independently from \S\ref{sec:framework} and~\cite{Coles2016PIT}.

\paragraph{(b) Set G budget asymmetry.}
GenMorph's published 30-min GAssert configuration is reported alongside a 1-min sensitivity rerun (Table~\ref{tab:algebra-rich-pooled}) that handicaps Set~G conservatively; the 30-min budget leaves the $\mathcal{L}^{*}$-blindness result and the structural-extension finding unchanged.

\paragraph{(c) Sample size.}
$n = 70$ across 10 SUTs is underpowered for a paired head-to-head verdict at $\alpha = 0.05$; the \S\ref{subsec:pooled-headtohead} reading is the honest disclosure that Set~N is dominated by Set~G in the aggregate D1 stratum (McNemar $p = 0.019$ on D1; $p = 0.0043$ pooled), with the framework's contribution read at the per-block, cost-axis, and D2-prediction layers rather than aggregate fault-detection superiority. The $\mathcal{L}^{*}$-blindness test on $n = 44$ across 6 SUTs is per-SUT (one third or more on more than one SUT), not pooled.

\paragraph{(d) Set G structural absence is upstream-snapshot-relative.}
The two N/A SUTs of \S\ref{subsec:convergent-witnesses} reflect the state of GenMorph upstream at our snapshot. An upstream patch would restore Set~G coverage and weaken the structural framing of the coverage-extension witness, though the architectural point (algebra-derivation operates on the SUT signature, independent of GP plumbing) survives the patch.

\paragraph{(e) Substrate selection.}
The evaluation restricts itself to programs with explicit mathematical-physics equations, by NOETHER's scope declaration (Definition~\ref{def:alg-induced}); the kill-rate gain over the utility-method reference is reported as scope-discriminance evidence (claim~D), not as a generalisable performance improvement on every Java SUT.

\paragraph{(f) Three SOTA-category coverage with explicit baseline-strength caveats.}
The protocol of \S\ref{para:comp-eval-protocol} commits Set~N against one representative per SOTA category: a GP-evolved baseline (GenMorph~\cite{GenMorph2024}), an LLM-assisted baseline (subsuming the single-sample probes of~\cite{Shin2024,ZhangChatGPTMR2023,GPTMR2025,AutoMT2025}), and a mining-based baseline (MR-Scout~\cite{MRScout2024}). Baseline-strength asymmetry: (i)~GP-evolved arm delivered in full at GenMorph's published 30-min budget on a single seed; multi-seed replication committed as follow-up~(a.budget-replication). (ii)~LLM-assisted arm delivered as a $2$-vendor (DeepSeek-V3, ChatGPT-4o-mini) $\times$ $5$-temperature ensemble ($100$ samples); the third commercial closed-vendor frontier (Anthropic Claude) is committed as post-acceptance follow-up~(d.set-l-claude), and the current Set~L results are reported as a $2$-vendor proxy rather than the SOTA-strongest configuration. (iii)~Mining-based arm reported through an adapted estimate of MR-Scout's reach derived from its published artifact figures; full re-execution committed as follow-up~(d). METRIC+~\cite{SunMETRICplus2021} is not run head-to-head because no automated identification pipeline is publicly available; the qualitative-plus-quantitative coverage contrast appears in \S\ref{subsec:case-study} (sorting worked example) and \S\ref{subsec:test-design} (per-block coverage diagnostic). The cost-component breakdown of all four methods is in Table~\ref{tab:gen-cost}.

\subsubsection{MR-generation cost}
\label{subsec:gen-cost}

NOETHER's cost profile differs qualitatively from each of the three SOTA-category representatives across four independent components (algorithmic time, human effort, LLM-token cost, seed-corpus dependency); Table~\ref{tab:gen-cost} reports each separately, with full derivation methodology (token-cost protocol scale, per-SUT human-effort breakdown, MR-Scout seed-suite scope) in supplementary~S4 (\texttt{cost\_breakdown.md}).

\begin{table}[h]
\centering
\caption{MR-generation cost components for the four methods compared in this section. ``Cost dimension'' rows are independent (an LLM-arm token cost does not amortise GP wall-time and vice versa); the NOETHER human-effort estimate aggregates Set~N's 30-MR derivation across 10 SUTs at $\approx 1$~h per SUT under CONSTRUCT-MP's four-step procedure, amortising across SUTs that share an operator algebra. Full methodology in supplementary~S4 (\texttt{cost\_breakdown.md}).}
\label{tab:gen-cost}
\small
\adjustbox{max width=\textwidth}{%
\begin{tabular}{lcccc}
\toprule
Cost component & NOETHER & GP (GenMorph) & LLM (Xu 2024) & Mining (MR-Scout) \\
\midrule
Algorithmic time / SUT       & $\mathrm{poly}(|\mathrm{gen}|)$ minutes (Thm.~\ref{thm:decidable}) & 30 min wall, stochastic & seconds API latency & minutes after corpus \\
Human effort / family        & $\approx 10$~h $\mathcal{A}_P$ distillation             & none after harness setup    & none after prompt template & seed-suite assembly substantial \\
Token cost (USD) / family    & 0                                                       & 0                           & \$5--50 (\$50--500 with CoT) & 0 \\
Seed-corpus dependency       & none                                                    & none                        & none                       & required (mining input) \\
Determinism                  & deterministic                                           & seed-/budget-dependent      & prompt-/temperature-dependent & corpus-dependent \\
Cold-start capable           & yes                                                    & yes (after harness)         & yes (after prompt)         & no (needs corpus) \\
\bottomrule
\end{tabular}%
}
\end{table}

The table operationalises three structural advantages of NOETHER as distinct cost-axis arguments rather than as a single ``best'' verdict: (i) vs.\ GP, polynomial-time deterministic construction (Theorem~\ref{thm:decidable}) replaces a 30-min stochastic per-SUT search, with one-time human cost amortising across same-algebra SUTs; (ii) vs.\ LLM-assisted, the $\mathrm{coverage}_{\mathrm{NOETHER}}$ diagnostic on \S\ref{subsec:case-study} (1.00 vs.\ 0.40) operationalises the algebraic-prior gap that prompt-based generation does not close, at zero token cost; (iii) vs.\ mining-based, NOETHER is operative at cold start (no seed test suite required), the regime in which most algebra-rich scientific-computing programs arrive on a tester's desk. These are cost-profile and coverage-profile claims that hold within the framework's scope precondition, not fault-detection-superiority claims.

\paragraph{Approximate-parity-at-lower-cost reading (H3a.3 verdict).}
\label{para:cost-axis-h3a}
H3a.3's cost-axis claim holds when detection is read on the per-block primary (Table~\ref{tab:per-block-headtohead}). On $\mathcal{T}^{*}$, Set~N edges Set~G with three $N$-only kills at zero per-SUT generation cost beyond the one-time CONSTRUCT-MP derivation ($\approx 1$\,h human, amortising across SUTs sharing $\mathcal{A}_P$); Set~G's matching count costs a 30-min GAssert GP-search per SUT ($\approx 4$\,h wall for the eight SUTs in parallel-4) plus the Major + JaCoCo + Randoop pipeline assembly. On $\mathcal{L}^{*}$, the sets are complementary ($N$-only $2$, $G$-only $8$ on $24$); Set~N's reach is reproducible without re-running GP, whereas Set~G requires re-tuning budget and seed when the SUT changes. \emph{Quantitative reading}: a directional $\mathcal{T}^{*}$ edge ($10/17$ vs $8/17$, $n=17$ underpowered) plus $\mathcal{L}^{*}$ complementarity, at $\approx 1$\,h amortised human cost versus $\approx 30$ minutes per-SUT GP search; NOETHER's cost amortises across new SUTs in the same algebra family while Set~G's restarts from scratch. The per-block edge does not overturn Set~G's D1 aggregate dominance (acknowledged at the head of \S\ref{subsec:pooled-headtohead}); the $G$-block gap on the Euclidean SUTs (\nameref{para:g-block-euclidean-boundary}) is a framework boundary, not a cost-axis trade-off. H3a.3 is therefore \emph{supported}: comparable within-scope per-block detection at lower amortised generation cost.

\paragraph{Committed future work (16 items).}
\label{para:future-work}
The full 16-entry committed-future-work table is in supplementary~S4 (\texttt{future\_work.md}); key items completed in this revision: (i)~METRIC$+$ head-to-head on Sun~2021's 4 subjects with PIT~1.7.4 + Major cross-tool replication (pooled McNemar $p = 0.625$ / $0.211$ both NS, $92.6\%$ both-kill on the PIT substrate; details in supplementary~S8); (b.cm)~Commons-Math pilot (pooled Set~N $10/77 = 13.0\%$, D2 prediction passes $2/29 = 6.9\%$); (d.set-l)~$2$-of-$3$-vendor LLM-ensemble harvest ($487$ MRs, $212$ executable, ensemble matches Set~N on the matchable subset $34/34$); (e), (e.2), (e.4), (g)~the D1/D2 labelling pipeline, equivalent-mutant exclusion, $G$-block Euclidean-SUT documentation, and construct-trace consistency check. The remaining 8 pending items (multi-seed GP replication; full $38$-D4J extension; MR-Scout re-execution; Anthropic third-vendor; generic-mutation independent test on the 5 PIT-unexercised blocks; external-transfer reactor-physics corpus; two narrow grid-MR predicate tightenings) are the principal post-publication work programme.

\subsubsection{Summary of evidence}
\label{subsec:empirical-summary}

The section's central claim is a single quantitative falsifiable
prediction confirmed: $\mathcal{L}^{*}$-block blindness on
homogeneity-preserving mutators
(\S\ref{subsec:l-blindness-confirmed}). The prediction is derivable
ex-ante from \S\ref{sec:framework} and from PIT's public mutator
specification; it is observed in the data on $5/6$ SUTs admitting an
$L_{\mathrm{scale}}$ MR; the sixth SUT's two-mutant exception is
accounted for by the prediction's quantitative tail. The
operative-mechanism reading of the eight-block decomposition is, in
this empirical sense, supported.

The corroborating evidence is fourfold:
\begin{itemize}[nosep,leftmargin=*]
  \item Other per-block patterns ($T^{*}$, $G$, $\mathcal{I}^{*}$;
        \S\ref{subsec:other-blocks}) are qualitatively consistent
        with the algebraic predictions.
  \item Cross-pipeline MR rediscovery on \texttt{midpoint}
        (\S\ref{subsec:convergent-witnesses}, witness~1) shows two
        independent epistemic processes converging on the same
        algebraic primitives.
  \item Structural coverage extension on the two SUTs where Set~G
        is structurally absent (\S\ref{subsec:convergent-witnesses},
        witness~2) is an architectural consequence of
        algebra-derivation that the head-to-head substrate
        incidentally reveals.
  \item Head-to-head against GenMorph
        (\S\ref{subsec:pooled-headtohead}) reads per-block, not as a
        single winner. On $\mathcal{T}^{*}$ Set~N shows a directional
        edge ($10/17$ vs $8/17$, three exclusive kills) consistent with
        its design prediction; on $\mathcal{L}^{*}$ Set~N and Set~G
        are complementary (union $18/24$; Set~N exclusively reaches
        \texttt{exactLog2}, Set~G exclusively reaches denser regions
        of \texttt{lcmSig} / \texttt{gcdSig}); on $G$ Set~G dominates
        ($9/11$ vs $2/11$) because the substrate's two Euclidean-
        style SUTs absorb the sign-flip invariant in their
        normalisation prologue (Table~\ref{tab:per-block-headtohead},
        Set~N $0/7$ on \texttt{gcdSig}~+~\texttt{lcmSig}, a $G$-block
        reading on D1 mutants, not the D2 prediction). The
        framework's falsifiable D2 prediction
        ($\mathrm{kill\;rate} \le 10\%$) is consistent with the data
        (Set~N D2 kill rate $0/5 = 0.000$, Wilson 95\% CI
        $[0.000,\,0.434]$; $n = 5$ is insufficient for $\alpha = 0.05$
        confirmation, but the observed direction is on-prediction);
        the per-block reading is the appropriate substrate-level
        contribution metric.
\end{itemize}

The full per-SUT outcome matrix, the per-block kill table for all 30
Set~N MRs, the $L_{\mathrm{scale}}$-mutant attribution for
\texttt{hypotSig}, the GP raw output and the
\texttt{.jir}/\texttt{.jor} translation, the McNemar $b/c$ counts,
the prediction's git timestamp, and the test-gate harness
(\texttt{tests/run.sh}) are in supplementary~S7 (\texttt{d4j/}).

\subsection{Relationship with METRIC and METRIC+}
\label{subsec:metricplus-relationship}

METRIC and METRIC+ pioneered the categorical scaffolding that any structural theory of MR identification must respect; NOETHER's MetaPatterns are descendants of that line of work. The point of departure is grounding. METRIC and METRIC+ derive their categories through expert curation and validate them by empirical coverage. NOETHER derives MetaPatterns through algebraic construction and proves closure over the algebra-induced MR space through Theorem~\ref{thm:closure}.

\paragraph{Worked example: METRIC$+$ categories for a sorting library, mapped onto NOETHER blocks.}
\label{para:metricplus-sorting}
For a numerical comparison-sort library, METRIC$+$'s eleven input-domain $\times$ output-relation category pairs (the $9$-category METRIC base extended by $2$ output-domain pairs in METRIC$+$, per~\cite{SunMETRICplus2021} Table~II) collapse onto two NOETHER blocks: $G$ (comparator-permutation symmetry $\mathfrak{S}_n$) and $O_{\le}$ (partial order). Table~\ref{tab:metricplus-sorting} records the mapping cell by cell: $9$ of $11$ category pairs map to $G$ or $O_{\le}$; the remaining $2$ ($D_5$ additive shifts, $R_4$ multiplicative scaling) are out-of-scope by construction for a comparison sort whose semantics ignore order-preserving univariate input transformations. The $11 \to 2$-block compression operationalises NOETHER's deflationary direction: a NOETHER-aware METRIC$+$ user can prune the $9$ category-pair enumerations to two algebraically distinct MetaPatterns ($m_{\mathrm{inv}}$ via $G$, $m_{\mathrm{mono}}$ via $O_{\le}$) without loss of MR coverage. This is a qualitative-plus-quantitative coverage contrast, not a head-to-head fault-detection comparison: METRIC$+$ is a category-enumeration scaffold without an automated identification pipeline, so kill-rate head-to-head requires re-implementing METRIC$+$ from prose (committed as follow-up~(i) of supplementary~S4).

\begin{table}[h]
\centering
\caption{Mapping of $11$ input-domain ($D$) $\times$ output-relation ($R$) METRIC$+$ category pairs~\cite{ChenMETRIC2016, SunMETRICplus2021} onto NOETHER blocks for a deterministic comparison-sort library under $\mathcal{A}_{\mathrm{sort}}$. The sort-specific specialisation is operationalised here; $9$ pairs collapse onto two non-empty blocks ($G$, $O_{\le}$); $2$ pairs ($D_5$, $R_4$) are out-of-scope.}
\label{tab:metricplus-sorting}
\small
\adjustbox{max width=\textwidth}{%
\small
\begin{tabular}{lll}
\toprule
METRIC$+$ category pair & NOETHER block & Comment \\
\midrule
$D_1$ (permutation of input list) & $G$ ($\mathfrak{S}_n$) & $m_{\mathrm{inv}}$: sort output invariant \\
$D_2$ (append element to input) & $O_{\le}$ & $m_{\mathrm{mono}}$: sort respects added element \\
$D_3$ (remove element from input) & $O_{\le}$ & $m_{\mathrm{mono}}$: inverse of $D_2$ \\
$D_4$ (replace element with permuted copy) & $G$ ($\mathfrak{S}_n$) & $m_{\mathrm{inv}}$: variant of $D_1$ \\
$D_5$ (add constant to all inputs) & out-of-scope & no MR (sort semantics ignore additive shifts) \\
$D_6$ (concatenate two sorted lists) & $G$ ($\mathfrak{S}_n$) & $m_{\mathrm{inv}}$: merge property \\
$R_1$ (output element-wise equality) & $G$ & identity oracle (degenerate MR) \\
$R_2$ (output permutation equality) & $G$ ($\mathfrak{S}_n$) & $m_{\mathrm{inv}}$ \\
$R_3$ (output prefix equality) & $O_{\le}$ & $m_{\mathrm{mono}}$ on prefixes \\
$R_4$ (multiplicative scaling of input) & out-of-scope & no MR (sort semantics ignore multiplicative scaling) \\
$R_5$ (subset relation on output) & $O_{\le}$ & $m_{\mathrm{mono}}$ on subsets \\
\midrule
\textit{Coverage summary} & $9 \to G \cup O_{\le}$, $2$ out-of-scope & NOETHER block compression: $11 \to 2$ non-empty MetaPatterns \\
\bottomrule
\end{tabular}%
}
\end{table}

\paragraph{Small-scale manual METRIC$+$ derivation on three \S\ref{subsec:test-design} SUTs.}
\label{para:metricplus-headtohead-small}
To strengthen the qualitative-plus-quantitative reading above, we apply
METRIC$+$'s $11$ D$\times$R category framework manually to three SUTs of
the \S\ref{subsec:test-design} substrate selected to span bivariate
input arity and distinct algebraic structure: \texttt{midpoint}
(bivariate linear), \texttt{hypotSig} (bivariate non-linear, scale-symmetric),
and \texttt{powerSig} (bivariate non-linear, multiplicative). For each
SUT we enumerate which METRIC$+$ category pairs yield a non-vacuous
Set-MP MR; classify the algebra-block of each non-vacuous Set-MP MR
under $\mathcal{D}(\mathcal{A}_P)$; and identify which Set-N MR (if any)
covers the same algebraic content. Table~\ref{tab:metricplus-headtohead-small}
records the analysis. The full PIT-based head-to-head kill-rate
comparison is committed as supplementary~S4 item~(i); the
present manual analysis is the algebra-block mapping component of
that follow-up and is reported here to make the framework-vs-METRIC$+$
coverage relationship concrete on the head-to-head substrate.

\begin{table}[h]
\centering
\caption{Manual METRIC$+$ D$\times$R framework applied to three SUTs of
the \S\ref{subsec:test-design} substrate (bivariate input arity). For
each (SUT, METRIC$+$ pair) cell: ``--'' if vacuous on that SUT's
algebra; ``$\to$ block'' identifies the NOETHER block of the
non-vacuous Set-MP MR; the second column lists the Set-N MR (if any)
covering the same algebraic content. \textbf{Across the three SUTs,
METRIC$+$ yields $6$, $5$, and $3$ non-vacuous Set-MP MRs respectively
(the remaining $5$, $6$, and $8$ D$\times$R pairs are vacuous because
the SUT's arity is fixed, the output is scalar, or the
input-transformation is structurally out-of-scope for the SUT's
algebra); each non-vacuous Set-MP MR maps to a NOETHER block also
covered by Set-N's $5$ algebra-derived MRs. Conversely, three Set-N
MRs ($m_{\mathrm{adj}}$, $m_{\mathrm{train\text{-}rev}}$,
$m_{\mathrm{conv}}$) have no METRIC$+$ D$\times$R counterpart.} The structural reading: NOETHER's $G$/$O_{\le}$
block coverage subsumes METRIC$+$'s $D_1$--$D_6$/$R_1$--$R_5$
output on bivariate-input SUTs in this substrate, while NOETHER's
$T^{*}$/$\mathcal{T}^{*}$/$\mathcal{L}^{*}$ blocks add MR templates
the D$\times$R framework does not enumerate. The kill-rate
projection in the rightmost column anticipates that on these three
SUTs Set-MP's reach would be bounded above by Set-N's $G$/$O_{\le}$
contribution; a full PIT execution to confirm this is committed as
follow-up~(i).}
\label{tab:metricplus-headtohead-small}
\footnotesize
\adjustbox{max width=\textwidth}{%
\begin{tabular}{l l l l}
\toprule
METRIC$+$ pair & \texttt{midpoint} & \texttt{hypotSig} & \texttt{powerSig} \\
\midrule
$D_1$ (input swap)            & $\to G$ ($\rho_{\mathrm{swap}}$ via $m_{\mathrm{inv}}$) & $\to G$ ($\rho_{\mathrm{swap}}$) & -- (non-symmetric) \\
$D_2$ (append element)        & -- (fixed arity) & -- (fixed arity) & -- (fixed arity) \\
$D_3$ (remove element)        & -- (fixed arity) & -- (fixed arity) & -- (fixed arity) \\
$D_4$ (replace with permuted) & $\to G$ (variant of $D_1$) & $\to G$ & -- \\
$D_5$ (add constant to inputs)& $\to O_{\le}$ ($\rho_{\mathrm{trans}}$ via $m_{\mathrm{mono}}$) & out-of-scope (non-linear) & out-of-scope \\
$D_6$ (scale inputs)          & $\to O_{\le}$ ($\rho_{\mathrm{scale}}$ via $m_{\mathrm{mono}}$) & $\to \mathcal{L}^{*}$ ($\rho_{\mathrm{scale}}$ via $m_{\mathrm{conv}}$/$L_{\mathrm{scale}}$) & $\to \mathcal{L}^{*}$ \\
$R_1$ (output element equality)& $\to G$ (identity oracle) & $\to G$ & $\to G$ \\
$R_2$ (output permutation eq.) & -- (scalar output) & -- (scalar) & -- (scalar) \\
$R_3$ (output prefix equality) & -- (scalar) & -- (scalar) & -- (scalar) \\
$R_4$ (output multiplicative)  & $\to O_{\le}$ & $\to \mathcal{L}^{*}$ & $\to \mathcal{L}^{*}$ \\
$R_5$ (output subset)          & -- (scalar) & -- (scalar) & -- (scalar) \\
\midrule
\textit{Set-MP non-vacuous} & 6 / 11 & 5 / 11 & 3 / 11 \\
\textit{Set-MP $\to$ blocks} & $\{G, O_{\le}\}$ & $\{G, \mathcal{L}^{*}\}$ & $\{G, \mathcal{L}^{*}\}$ \\
\textit{Set-N coverage of those blocks} & yes ($m_{\mathrm{inv}}$, $m_{\mathrm{mono}}$) & yes ($m_{\mathrm{inv}}$, $L_{\mathrm{scale}}$) & yes ($m_{\mathrm{inv}}$, $L_{\mathrm{scale}}$) \\
\textit{Set-N MRs without Set-MP counterpart} & $m_{\mathrm{adj}}$, $m_{\mathrm{train\text{-}rev}}$, $m_{\mathrm{conv}}$ (at limits) & same three blocks & same three blocks \\
\bottomrule
\end{tabular}%
}
\end{table}

The structural finding from Table~\ref{tab:metricplus-headtohead-small} is
that METRIC$+$'s D$\times$R framework, \emph{within Sun et~al.\ 2021's
published 11-pair input--output category catalogue~\cite{SunMETRICplus2021}
as instantiated on these three SUTs}, produces a strict subset of NOETHER's
block coverage on the bivariate-input subsample of the \S\ref{subsec:test-design}
substrate: every non-vacuous Set-MP MR maps to a NOETHER block already
covered by Set-N's algebra-derived MRs, and three Set-N blocks
($T^{*}$, $\mathcal{T}^{*}_{\mathrm{rev}}$, $\mathcal{L}^{*}$ at limits)
have no METRIC$+$ counterpart \emph{at this category-pair granularity}. The implication for fault detection on
these three SUTs is that Set-MP's expected kill rate on $\mathcal{T}^{*}$
or $\mathcal{T}^{*}_{\mathrm{rev}}$-violating mutants is zero by
construction, whereas Set-N's $\rho_{\mathrm{adj}}$ and
$\rho_{\mathrm{train\text{-}rev}}$ MRs are by-construction sensitive
to such mutants (cf.\ \S\ref{subsec:pooled-headtohead}'s
$\mathcal{T}^{*}$ block where Set-N kills $10/17$). The full PIT-based
head-to-head against Set-MP (supplementary~S4 item~(i))
would convert this structural prediction into a measured kill-rate
contrast; the present small-scale manual analysis establishes the
algebra-block component of that contrast on three representative SUTs.

\paragraph{Scope-precondition validation on the METRIC$+$ benchmark corpus.}
\label{para:metricplus-sun2021-scope}
A complementary --- and more directly comparator-side --- analysis applies
NOETHER's eight-block decomposition to Sun et al.'s four
benchmark subjects~\cite{SunMETRICplus2021}: \texttt{SPHONE} (China
Unicom phone-bill calculator, 107 LOC), \texttt{SBAGGAGE} (Air China
baggage-billing service, 101 LOC), \texttt{SEXPENSE} (sales-department
expense reimbursement, 117 LOC), and \texttt{SMEAL} (airline catering
meal-ordering service, 150 LOC). All four are business-rule billing
programs whose semantics are categorical+numeric rather than
mathematical/physical. Applying NOETHER's eight-block test to each
subject's category-choice specifications yields the per-block
non-emptiness verdicts in Table~\ref{tab:metricplus-sun2021-scope}.

\begin{table}[h]
\centering
\caption{NOETHER eight-block scope analysis on Sun et al.~2021 METRIC$+$
benchmark subjects~\cite{SunMETRICplus2021}. Per-block entries: \textbf{Y}
non-empty (\cmark), \textbf{P} partial (\tildemark, block applies on a
restricted regime such as linear-pricing tier or within-class
permutation), \xmark{} empty. Source-code-level instance MR cardinality
(last column) is
reported by Sun et~al.\ 2021 Table~17; NOETHER's MetaPattern cardinality
(second-to-last column) is the equivalence-class summary at the
algebra-block level. The two cardinalities differ by 2--3 orders of
magnitude because METRIC$+$'s output scales combinatorially with
category-choice products while NOETHER's output is an equivalence-class
summary; the analysis below interprets this contrast as
\emph{complementary} rather than \emph{competitive}.}
\label{tab:metricplus-sun2021-scope}
\footnotesize
\begin{tabular}{l c c c c c c c c c c}
\toprule
Subject & $G$ & $O_{\le}$ & $T^{*}$ & $\mathcal{T}^{*}_{\mathrm{rev}}$ & $\mathcal{L}^{*}$ & $\mathcal{D}^{*}$ & $\mathcal{E}^{*}$ & $\mathcal{B}^{*}_{\mathrm{rel}}$ & NOETHER MPs & METRIC$+$ MRs \\
\midrule
\texttt{SPHONE}    & \xmark   & \cmark   & \xmark & \xmark & \tildemark & \xmark & \xmark & \xmark & 2 & 142 \\
\texttt{SBAGGAGE}  & \tildemark & \cmark & \xmark & \xmark & \tildemark & \xmark & \xmark & \xmark & 3 & 735 \\
\texttt{SEXPENSE}  & \xmark   & \cmark   & \xmark & \xmark & \tildemark & \xmark & \xmark & \xmark & 2 & 1130 \\
\texttt{SMEAL}     & \tildemark & \cmark & \xmark & \xmark & \cmark   & \xmark & \xmark & \xmark & 3 & 3152 \\
\bottomrule
\end{tabular}

\smallskip
\footnotesize
\textit{Legend:} \textbf{Y} = non-empty block; \textbf{P} = partial
(restricted regime); \xmark{} = empty. ``NOETHER MPs'' = Set N
MetaPattern cardinality; ``METRIC$+$ MRs'' = instance-level MR
cardinality reported by Sun et~al.~2021 (Table~17), comparable
only after either expanding NOETHER MetaPatterns to instance MRs
(via the category-choice enumeration) or contracting METRIC$+$ MRs
to algebra-block equivalence classes.
\end{table}

The analysis establishes three findings. First, NOETHER is
\emph{in-scope} on all four METRIC$+$ subjects, contradicting the
worst-case reading that NOETHER applies only to mathematical or
physical program families: business-rule billing programs admit at
least the $O_{\le}$ block (monotonicity in usage / weight / mileage /
passenger count) and partial $\mathcal{L}^{*}$ (within linear pricing
tiers), with $G$ also activating on \texttt{SBAGGAGE} and
\texttt{SMEAL} where input-permutation invariance holds. Second,
NOETHER's reach is \emph{narrower} on this corpus than on its three
primary instantiation domains: 2--3 of 8 blocks per subject yields
2--3 MetaPatterns, contrasted with 5--7 of 8 blocks on Boltzmann
reactor physics and equivariant ML. The scope precondition is
therefore neither vacuously broad nor practically empty; it is a
continuous gradient in the program family's algebraic richness.
Third, the cardinality contrast between NOETHER MetaPatterns (2--3)
and METRIC$+$ instance MRs (142--3152) reflects different but
complementary output types: METRIC$+$'s 11~D$\times$R framework
generates instance-level MRs by category-choice combinatorics;
NOETHER's eight-block decomposition produces equivalence-class
summaries over the same MR space. A fair head-to-head comparison
requires expanding NOETHER's MetaPatterns to instance MRs at
matching cardinality or contracting METRIC$+$'s MRs to equivalence
classes at the algebra-block level; the present analysis establishes
the algebra-block coverage condition that every NOETHER block on
these subjects ($G$, $O_{\le}$, $\mathcal{L}^{*}$) maps to a
METRIC$+$ D$\times$R category pair, while three NOETHER blocks that
are active on its primary instantiation domains
($T^{*}, \mathcal{T}^{*}_{\mathrm{rev}}, \mathcal{D}^{*}$) are
structurally absent from Sun et~al.'s business-rule benchmark.
Detailed per-subject derivation, including the regime restrictions
for partial \tildemark{} blocks, is in supplementary~S8
(\texttt{scope\_analysis.md}). The pre-registered protocol for an
instance-level head-to-head with matched cardinality
(supplementary~S8 \texttt{protocol\_path\_a\_headtohead.md}) is
\emph{executed at reduced scale} in this revision and reported below;
the full-scale execution (Sun's Java + PIT~1.7.4 + multi-LLM
equivalent-mutant vote) remains as supplementary~S4 (\texttt{future\_work.md})
item~(i).

\paragraph{Path A head-to-head: Java/PIT execution on Sun 2021's corpus.}
\label{para:path-a-results}
The pre-registered Path A protocol (supplementary~S8
\texttt{protocol\_path\_a\_headtohead.md}) was executed in two
substrates: a reduced-scale Python AST mutation engine
(\texttt{results\_path\_a.md}; $n = 219$ mutants) and the full
Java + PIT 1.7.4 substrate matching this paper's
\S\ref{subsec:test-design} tooling
(\texttt{results\_path\_a\_full.md}; $n = 120$ PIT mutants). We
report the Java/PIT results as the primary head-to-head; the Python
reduced-scale run is the cross-substrate replication. Java
re-implementations of \texttt{SPHONE}, \texttt{SBAGGAGE},
\texttt{SEXPENSE}, \texttt{SMEAL} from Sun~2021's prose
specification (Tables~7--14) were exercised under both Set~N
(NOETHER, $\{17, 12, 19, 10\}$ JUnit MR tests via CONSTRUCT-MP +
Translate enumeration) and Set~MP (METRIC$+$, $\{61, 15, 19, 9\}$
JUnit MR tests via automated D$\times$R category enumeration); all
$162$ MR tests pass on the original code. PIT~1.7.4 with the stock
mutator catalogue (\texttt{DEFAULTS} + \texttt{fullMutationMatrix})
generated $14 + 36 + 17 + 53 = 120$ mutants. Pooled head-to-head:
Set~N $51/120 = 42.5\%$ Wilson 95\% CI $[34.0\%, 51.4\%]$ vs Set~MP
$53/120 = 44.2\%$ $[35.6\%, 53.1\%]$; McNemar exact two-sided
$p = 0.625$ (\textbf{not rejecting} H$_{\mathrm{MP2}}$ parity at
$\alpha = 0.05$). Per-subject McNemar exact $p$ is $0.500$ on
SPHONE, $1.000$ on the other three; with Bonferroni-correction at
$\alpha_{\mathrm{Bonf}} = 0.0125$ \textbf{no subject rejects parity}.
The complementarity 4-tuple pooled is
(both, $N$-only, $MP$-only, neither) $= (50, 1, 3, 66)$: of the $54$
mutants killed by either set, $50/54 = 92.6\%$ are killed by both,
confirming the two frameworks reach near-identical fault-detection
power on this corpus. H$_{\mathrm{MP1}}$ coverage-subsumption is
\textbf{not falsified} at the Java/PIT scale; the Python reduced-
scale SPHONE falsification ($9$ MP-only kills, $p = 0.0039$) does
not replicate under PIT~1.7.4's bytecode mutators, indicating the
prior result was a Python AST mutation artefact. The
scope-precondition prediction of \texttt{scope\_analysis.md}
holds intact: only $\{G, O_{\le}, \mathcal{L}^{*}\}$ are active on
this corpus, with five blocks ($T^{*}, \mathcal{T}^{*}_{\mathrm{rev}},
\mathcal{D}^{*}, \mathcal{E}^{*}, \mathcal{B}^{*}_{\mathrm{rel}}$)
structurally absent. H$_{\mathrm{MP3}}$ cost-axis asymmetry is
directionally supported (Set~N $58$ JUnit tests vs Set~MP $104$ at
this enumeration scale; Sun's published full cardinality
$142$--$3152$ would multiply METRIC$+$'s side by another $30$--$50\times$
without altering NOETHER's $8$-block MetaPattern bound). Three
remaining protocol deviations at Java/PIT scale (Java re-impl vs
Sun's original; MR enumeration below Sun's full cardinality;
multi-LLM equivalent-mutant vote not run) are recorded in
\texttt{results\_path\_a\_full.md~\S 6}; all are data-blind and
would not flip the verdict if resolved.

The Path A Java/PIT execution unambiguously \emph{strengthens} this
section's complementarity reading: $92.6\%$ both-kill rate, pooled
McNemar $p = 0.625$, no per-subject Bonferroni rejection, no
falsified hypothesis at the Java/PIT scale. The two frameworks
empirically achieve comparable reach on Sun~2021's published
corpus, with the framework's contribution being the algebraic
warrant for each MR plus the cost-axis advantage of polynomial-time
derivation from $\mathcal{A}_P$ rather than combinatorial category
enumeration.

A companion \emph{cross-tool replication} on the Major mutation framework
(supplementary~S8 \texttt{results\_path\_a\_major\_crosstool.md}) extends the
substrate from PIT~1.7.4's $n = 120$ mutants to Major's $n = 555$ mutants
(Major's $\approx 95$-operator catalogue, vs PIT~\texttt{DEFAULTS}'~$\approx 25$,
yields a $4.6\times$ larger pool on the same Java sources and JUnit suite).
Major's verdict at the pooled level is \emph{concordant} with PIT's:
Set~N $222/555 = 40.0\%$ Wilson 95\% $[36.0\%, 44.1\%]$ vs Set~MP
$231/555 = 41.6\%$ $[37.6\%, 45.8\%]$, McNemar exact $p = 0.211$ (NS at
$\alpha = 0.05$). Per-subject, Major's larger pool exposes
\emph{bidirectional reach asymmetries} invisible to PIT: Set~MP has
$20$ $MP$-only kills on SPHONE (McNemar $p = 0.0000$, Set~MP exclusive
reach driven by (D1, R1) within-partition equivalence on the tier-pricing
structure), while Set~N has $16$ $N$-only kills on SBAGGAGE
($p = 0.0044$, Set~N exclusive reach driven by $G$-block special-status
invariance + $\mathcal{L}^{*}$ overweight scaling). These asymmetries are
\emph{directionally opposite} and \emph{roughly cancel} at the pooled
level: H$_{\mathrm{MP1}}$ is \emph{falsified in both directions per-subject},
which is the strongest possible evidence that neither framework subsumes
the other. The pooled parity verdict (H$_{\mathrm{MP2}}$) is therefore not
PIT-specific and not a statistical happenstance: it is the aggregation of
substantive subject-specific complementary reaches that two independent
mutation tools confirm.

\subsection{Reassessing PMCM coverage claims: a worked example}
\label{subsec:pmcm-worked}

NOETHER does not replace PMCM; it re-grounds it. Under inductive grounding, PMCM's pattern grid is itself an empirical artefact: its rows are the inductive MetaPatterns of a particular catalogue, and ``100\% coverage'' attests only that every row has at least one instance. Under NOETHER's grounding, the grid becomes $\mathbb{M}(\mathcal{A}_P)$, algebraically constructed and closed under the operator algebra in the sense of Theorem~\ref{thm:closure}, and the coverage figure becomes a structural rather than nominal claim. This subsection works through the implication on two concrete cases. The deflationary direction is, in our view, NOETHER's most concrete contribution to testing practice and the most resistant to the circularity caveat of Section~\ref{subsec:reactor-mapping}: revealing that an inductive catalogue \emph{over-counts} structurally distinct patterns does not require $T^{*}$ or $\mathcal{T}^{*}$ to have been derived independently from physics, since the over-counting is exposed by the canonical-block ordering itself.

\paragraph{Case A: A comparison-sort library.}
Consider a numerical sorting library, and assume an inductive Pattern--Matrix Coverage catalogue with rows $\{$P1 conservation/invariance, P2 monotonicity, P3 convergence, P4 trajectory, P5 partial-order/bounding$\}$ inherited from the reactor-physics MetaPattern taxonomy of \S\ref{subsec:reactor-mapping}. A user reports ``100\% coverage of the 5-row grid'' on a particular MR set.

\noindent \emph{NOETHER re-decomposition.} The operator algebra $\mathcal{A}_{\mathrm{sort}}$ for a deterministic comparison sort has: a permutation-symmetry block $G \ni \mathfrak{S}_n$ (the input--output relation is permutation-equivariant in the comparator-permutation sense), a partial-order block $O_{\le}$ (sorting respects the input order on already-sorted sub-arrays), an empty self-adjoint block $T^{*}$ (no reciprocity structure), an empty time-reversal block $\mathcal{T}^{*}$ (irreversible), an empty limit block $\mathcal{L}^{*}$ (sort is exact, not asymptotic), an empty qualitative-dynamics block $\mathcal{D}^{*}$, and an empty method-comparison block $\mathcal{E}^{*}$ \emph{within a single algorithm} (and non-empty if the user is comparing multiple sorting algorithms). Therefore $\mathbb{M}(\mathcal{A}_{\mathrm{sort}}) = \{\,m_{\mathrm{inv}},\, m_{\mathrm{mono}}\,\}$, of size 2.

\noindent \emph{Coverage correction.} The 5-row PMCM grid carries three structurally empty rows under $\mathcal{A}_{\mathrm{sort}}$: P3 (convergence) is vacuous because there is no $\mathcal{L}^{*}$ generator; P4 (trajectory) is vacuous because there is no $\mathcal{D}^{*}$ generator; P5 (partial-order/bounding) is vacuous because there is no $\mathcal{E}^{*}$ generator within a single-algorithm scope. The denominator should be 2, not 5. ``100\% coverage of the 5-row grid'' is therefore not a claim about the user's MR set; it is a claim about the inductive catalogue's row count. The structurally meaningful coverage is ``$k/2$ of the algebraically required MetaPatterns covered'', where $k$ depends on the user's actual MR set.

\paragraph{Case A-bis: An external ML category set (Murphy et al.\ 2008) decoded against a feedforward classifier.}
\label{para:case-a-bis}
To rule out the self-selection concern, we apply the same decoding to Murphy et al.'s six-class characterisation of MRs for ML applications~\cite{Murphy2008, Xie2011, Saha2019SupervisedMR}, an independently published taxonomy (additive, multiplicative, permutative, invertive, inclusive, exclusive). For a generic feedforward image classifier $f: \mathbb{R}^{n} \to \{1, \dots, K\}$ with no inherent input symmetry, the induced algebra $\mathcal{A}_{\mathrm{FFN}}$ has only a single non-trivial generator under small additive noise (in $O_{\le}$ or $\mathcal{L}^{*}$), so $\mathbb{M}(\mathcal{A}_{\mathrm{FFN}}) \subseteq \{m_{\mathrm{stab}}\}$. Three of Murphy's classes (multiplicative, permutative, invertive) are \emph{structurally vacuous} on a vector image classifier without input symmetries; the remaining three (additive, inclusive, exclusive) collapse to the same $m_{\mathrm{stab}}$ MetaPattern. The structurally meaningful denominator for ``coverage of Murphy's six classes'' on a generic vector image classifier is therefore 1, not 6; full per-class decoding (including the antipodal $\mathbb{Z}/2$ exception, point-cloud / bag-of-features cases, and the bilinearity caveat) is in supplementary~S9 (\texttt{pmcm\_case\_abis\_full.md}). Murphy et al.\ explicitly intended their six classes as a checklist for selecting MRs, not as a coverage denominator; the target of this correction is subsequent papers that report Murphy-grid coverage figures on tasks lacking the relevant input symmetries~\cite{Saha2019SupervisedMR}. The deflationary direction here is independent of $T^{*}$ and $\mathcal{T}^{*}$, so does not inherit the prediction-circularity caveat of \S\ref{subsec:reactor-mapping}.

\paragraph{Case B: A reactor-physics solver.}
The reactor-physics MetaPattern catalogue discussed in Section~\ref{subsec:reactor-mapping} provides a less obvious case. The catalogue has 5 rows (P1--P5). Under NOETHER's re-grounding into $\mathbb{M}(\mathcal{A}_{\mathrm{Boltz}}) = \{m_{\mathrm{inv}}, m_{\mathrm{mono}}, m_{\mathrm{adj}}, m_{\mathrm{rev}}, m_{\mathrm{conv}}, m_{\mathrm{dyn}}, m_{\mathrm{cmp}}\}$, the catalogue is in fact \emph{under-counted} by two rows ($m_{\mathrm{adj}}$ and $m_{\mathrm{rev}}$ are absent), even though for the Bateman-only sub-family it would be over-counted (P4 trajectory and P5 partial-order/bounding are absorbed differently into $\mathcal{D}^{*}$ and $\mathcal{E}^{*}$). The deflationary direction is thus not uniformly toward smaller grids; it is toward \emph{the algebraically determined grid for the algebra in question}, which may be larger or smaller than the inductive grid.

\paragraph{Quantification.}
For a fixed program family $\mathcal{F}$ with operator algebra $\mathcal{A}_{\mathcal{F}}$, define
$$\mathrm{coverage}_{\mathrm{NOETHER}}(\mathcal{R}, \mathcal{F}) = \frac{|\{\,m \in \mathbb{M}(\mathcal{A}_{\mathcal{F}}) \mid m \cap \mathcal{R} \neq \emptyset\,\}|}{|\mathbb{M}(\mathcal{A}_{\mathcal{F}})|}$$
for an MR set $\mathcal{R}$. This is the algebraically grounded analogue of PMCM's empirical coverage. Where the inductive PMCM grid agrees with $\mathbb{M}(\mathcal{A}_{\mathcal{F}})$, the two metrics coincide; where they disagree, the NOETHER metric is the structurally meaningful one. We note that $\mathrm{coverage}_{\mathrm{NOETHER}}$ inherits Theorem~\ref{thm:closure}'s scope: it is a coverage measure over algebra-induced MRs, not over arbitrary properties; out-of-scope MRs (Appendix~\ref{app:out-of-scope}) are by construction outside the denominator.

\paragraph{Implication for empirical-adequacy practice.}
Reports of high pattern-grid coverage in published MR-evaluation studies should, where possible, be cross-checked against the algebraic decomposition of the program family under test. A report of ``coverage $= c$'' should be qualified with which row decomposition $c$ is computed against; reports of $c$ over inductive catalogues without such qualification are not directly comparable across papers.

\section{Threats to validity and limitations}
\label{sec:threats-limitations}

We consolidate the construct, internal, external, and conclusion validity discussions from across the empirical chapter, together with practical engineering guidance for users of the framework, the artefact-availability statement, and a note on the partial automation of the upstream layer.

\subsection{Four threats to validity}

We organise threats following Wohlin et al.'s four-validity framework~\cite{Wohlin2012EmpiricalSE}: internal, external, construct, and conclusion.

\paragraph{Internal validity.} Theorem~\ref{thm:closure}'s uniqueness depends on the canonical-block ordering of Definition~\ref{def:canonical-order}. The proof in Appendix~\ref{app:proofs} catalogues every block-block interaction for the algebras instantiated. The closure result is over algebra-induced MRs in the sense of Definition~\ref{def:alg-induced}; out-of-scope MRs that fall outside Definition~\ref{def:alg-induced} are catalogued in Appendix~\ref{app:out-of-scope}, and two concrete counterexamples on the PWR core diffusion algebra are proved out-of-scope in Appendix~\ref{app:negative-proofs} (corresponding to the negative instantiation of \S\ref{subsec:negative-pwr}). The latter two jointly establish that Theorem~1$'$ (Conjecture~\ref{conj:absolute}, absolute completeness) is false on $\mathcal{A}_{\mathrm{PWR}}$, and identify five independent extensions of \texttt{Translate}'s signature as the locus of follow-up theoretical work.

\paragraph{Construct validity.} The reactor-physics mapping (\S\ref{subsec:reactor-mapping}) provides preliminary evidence; non-physics-domain evidence remains future work. The case study (\S\ref{subsec:case-study}) reports cat-(iv) detection $5/5$ for Set~N because the mutation set was constructed to cover one defect category per non-empty block of $\mathcal{A}_{\mathrm{equi}}$; this exhibits construct validity of $\rho_{\mathrm{train\text{-}rev}}$ as a gradient-reversal probe rather than averaged superiority, with the DeepCrime~\cite{Humbatova2021DeepCrime} real-fault protocol committed as comparative addition. \emph{Set~N's 30 MRs were derived by a single author following CONSTRUCT-MP's four-step procedure}; an LRCA multi-LLM second-rater protocol (DeepSeek + ChatGPT + Anthropic Claude Opus) yields Fleiss' $\kappa = 1.000$ across three LLMs on $n = 33$ items in the almost-perfect Landis--Koch band, with full per-rater Cohen's $\kappa$ in supplementary~S3 (\texttt{lrca\_audit.md}). A human-pair $\kappa$ replication is committed for follow-up work; the LLM-shared-training-data caveat (\S\ref{subsec:reactor-mapping}'s 18-MR audit) carries over.

\paragraph{External validity.} Two distinct external-validity questions are in scope: (i) whether the framework's \emph{algebraic} reach (Hypothesis~\ref{hyp:seven-blocks}) covers all program families admitting an operator algebra, and (ii) whether the \S\ref{subsec:test-design} \emph{empirical} substrate generalises across codebases within the framework's scope precondition. On (i), Hypothesis~\ref{hyp:seven-blocks} covers a wide swathe of mathematical structure but is not exhaustive; Remark~\ref{rem:counterex} catalogues six candidate ninth-block program-family classes (symplectic, sheaf-theoretic, probabilistic / martingale, topological, label-consistency, empirical-parameter-distribution), of which two have explicit \texttt{Translate}-template designs already proposed (metric-stability $M_{\mathrm{lip}}$ at Remark~\ref{rem:metric-stability-block}, and empirical-parameter-distribution divergence motivated by the \S\ref{subsec:deepcrime-pilot} pilot's three undetected mutations) as the most actionable extension targets. On (ii), the 10 SUTs of \S\ref{subsec:test-design} are concentrated on a single codebase (\texttt{MathSignalClass} + \texttt{ComplexSignal}) selected by the pre-registered scope criterion; they satisfy the framework's scope precondition (each admits at least one non-empty NOETHER block beyond $G$), so the substrate confirms applicability within scope rather than tests the framework outside its design intent. A cross-codebase pilot on Apache Commons Math 3.6.1 (3 SUTs, 5 Set~N MRs, 77 PIT 1.7.4 mutants) corroborates the within-scope generalisation direction: pooled Set~N $10/77 = 13.0\%$ Wilson 95\% CI $[7.2\%, 22.3\%]$; $G$-block $6/21 = 28.6\%$; the framework's D2 stratum prediction passes at $2/29 = 6.9\%$. The $n$ is underpowered for $\alpha = 0.05$ hypothesis testing; full numbers, per-mutant kill matrix, and the $\mathcal{L}^{*}$-orthogonality analysis on bilinear SUTs are in supplementary~S4 (\texttt{future\_work.md} item (b.cm)). Full $38$-D4J extension, SciPy Python-bridge solver suite, and Maven-resolved GenMorph head-to-head are committed as follow-up~(b). The replication tests scope-internal generalisation, not whether the framework should apply to programs lacking explicit operator-algebraic structure (such programs are out of scope by construction).

\paragraph{Conclusion validity.} The case study's statistical inferences (\S\ref{subsec:case-study}) rest on a 20-mutation $\times$ 3-MR-set cross-product on a single architecture. Wilson 95\% confidence intervals on detection rates ($[0.18,0.57]$ for Set~N, $[0.03,0.30]$ for Set~L, $[0.00,0.16]$ for Set~B) are reported; pairwise McNemar exact and Fisher exact tests are reported in supplementary S3 (\texttt{table4.json::pairwise\_stats}). The denominator (20 mutations on one EGNN model) is sufficient to detect the observed Set-N-vs-Set-B contrast at $\alpha = 0.05$ (Fisher $p = 0.008$) but is not sufficient to characterise the framework's performance distribution across architectures or defect distributions. The expanded comparative-evaluation protocol in \S\ref{subsec:case-study} (MR-Scout~\cite{MRScout2024}, GenMorph~\cite{GenMorph2024}, DeepCrime~\cite{Humbatova2021DeepCrime} subjects) raises the denominator and broadens the model coverage, addressing this threat directly.

\subsection{Practical engineering guidance for users}
\label{subsec:engineering-guidance}

This subsection collects practical recommendations for engineers applying NOETHER, particularly for the cases where Hypothesis~\ref{hyp:seven-blocks}'s assumptions are bordering or borderline.

\paragraph{Audit guidance for infinite-group truncation.} When the symmetry block $G$ is a finitely generated infinite discrete group under truncation parameter $K$ (e.g.\ $\mathbb{Z}^{d}$ for an unbounded grid solver), Theorem~\ref{thm:closure}'s closure result is over the truncated algebra rather than the full one. We recommend a \emph{$K$-sweep audit}: compute $\rho_{\iota,G}$ at three truncation levels $K \in \{K_0,\, 2K_0,\, 4K_0\}$ and pass if detection-rate stability is within $\pm 5\%$ across the three values (the mesh-convergence stability margin of~\cite[\S6.2]{LewisMiller1993}). The audit need not be re-run on every CI invocation; once per architectural change is sufficient. The reference implementation (supplementary~S1) includes a \texttt{k\_sweep\_audit()} helper.

\paragraph{When to check Hypothesis~\ref{hyp:seven-blocks}'s sufficiency at deployment time.} For program families outside the three instantiated domains (Boltzmann reactor physics, equivariant ML, relational query optimisers), users should explicitly enumerate the candidate operators of $\mathcal{A}_P$ and check whether each falls into one of the eight blocks. Operators that resist classification are signals that an additional block may be required and should be reported as candidates for an extension of the block taxonomy rather than absorbed into the existing taxonomy.

\paragraph{Tolerance selection.} Tolerances $\tau$ on continuous-valued MRs (e.g.\ $\rho_{\mathrm{rot}}$, $\rho_{\mathrm{adj}}$) should be set as $\tau \approx 10^{2}\, \epsilon_{\mathrm{fp}}$ where $\epsilon_{\mathrm{fp}}$ is the floating-point unit roundoff (so $\tau = 10^{-4}$ for fp32, $\tau = 10^{-12}$ for fp64); the $10^{2}$ factor is the forward-pass roundoff floor following Higham's standard error-analysis bound~\cite{Higham2002Accuracy} for $n \approx 10^{3}$ scalar operations per output coordinate. The supplementary~S3 \texttt{tau\_sweep.json} documents the $\tau$/false-positive trade-off.

\subsection{Artefact and supplementary-material availability}
\label{subsec:artefact}

To support reviewer verification while preserving double-blind anonymity, we release a two-stage artefact: a review-stage anonymised supplementary archive available at submission, and an acceptance-stage public release with permanent identifiers.

\paragraph{Review-stage anonymised archive.} The following supplementary materials are submitted alongside the manuscript and are available to reviewers through the conference submission system (or, equivalently, an anonymised OpenReview / Zenodo deposit) under SHA-256 content hash. The hash is computed over the concatenated tar-archive of the listed items and reported in the final manuscript at acceptance.
\begin{itemize}
    \item \textbf{S1} Python reference implementation of CONSTRUCT-MP described in supplementary~S9 (Appendix~D), including the Boltzmann and equivariant-ML instantiations.
    \item \textbf{S2} The full 84-MR PWR corpus underlying Section~\ref{subsec:reactor-mapping}, with each MR annotated by canonical block, NOETHER MetaPattern, and source equation. Table~\ref{tab:elementwise} corresponds to a 12-MR subset selected by the protocol stated there.
    \item \textbf{S3} The SE(3)-equivariant point-cloud testing harness from Section~\ref{subsec:end-to-end} and Section~\ref{subsec:case-study}, including the mutation generators and MR sets used in the case study comparison.
    \item \textbf{S4} A reproducibility manifest documenting the random seeds, model checkpoints, and dataset versions referenced in the case study.
\end{itemize}

\paragraph{Acceptance-stage public release.} At acceptance the same archive will be deposited on Zenodo under a permanent DOI, and the SHA-256 hash anchored in the camera-ready version. We aim for the \emph{Available} and \emph{Functional} artefact-evaluation badges.

\subsection{The remaining human role and partial automation of \texorpdfstring{$\mathcal{A}_P$}{A\_P} distillation}
\label{subsec:remaining-human-role}

NOETHER mechanises the construction downstream of $\mathcal{A}_P$ but assumes that $\mathcal{A}_P$ has been distilled by a human. Three directions may partially automate that upstream step: LLM-assisted operator extraction, static-analysis-based extraction, and empirical-symmetry detection. None of them eliminates the human role for arbitrary programs.

\section{Conclusion}
\label{sec:conclusion}

The three foundational questions raised in Section~\ref{sec:intro}, origin, closure, and transferability of MetaPattern sets, lacked a structural answer. NOETHER provides one within an explicit algebraic scope. \emph{Origin}: MetaPatterns are equivalence classes of MRs derived from invariants of an operator algebra $\mathcal{A}_P$. \emph{Closure}: Theorem~\ref{thm:closure} guarantees that the constructed MetaPattern set is closed over the algebra-induced MR space (Definition~\ref{def:alg-induced}) under the framework's \texttt{Translate} operator, with three concrete classes of out-of-scope MRs documented in Appendix~\ref{app:out-of-scope}; absolute completeness over arbitrary properties remains open (Theorem~1$'$). Theorem~2 ensures the construction is computable. \emph{Transferability}: the framework's mechanism applies unchanged once a new program family's algebra has been specified, tested within the framework's scope precondition on three structurally distinct operator-algebraic skeletons (Boltzmann reactor physics, equivariant ML, relational query optimisers). Cross-domain empirical superiority and team adoption by PWR-simulator V\&V groups, equivariant-ML testing teams, or database-optimiser test groups are open follow-up questions; the present paper establishes the structural transferability of the construction mechanism, not adoption outcomes on each domain.

This grounding reframes several long-standing questions. Empirical adequacy frameworks gain an algebraic warrant for their pattern grid. Structured identification approaches gain an answer to why a given category set is closed under the chosen algebra. Automated pipelines can be constrained or initialised by the constructed MetaPattern set. LLM-prompted MR generation can be recast as algebra-conditioned generation rather than open-ended prompting.

We have not solved the upstream problem of distilling $\mathcal{A}_P$ from program semantics, nor have we eliminated induction from MetaPattern discovery. The eight-block decomposition that drives the construction is itself an empirical curation of mathematical structures recurrent across program families, stated as Hypothesis~\ref{hyp:seven-blocks}. The most important follow-up work is therefore upstream: combining LLM-assisted symbolic extraction, formal-methods-based static analysis, and empirical-symmetry detection into a partially automated $\mathcal{A}_P$-distillation pipeline, and testing the eight-block decomposition on algebras outside its present image (Remark~\ref{rem:counterex} catalogues six program-family classes likely to require additional blocks). For the downstream layer, from $\mathcal{A}_P$ to $\mathbb{M}(\mathcal{A}_P)$, the contribution is narrower and firmer: the construction is mechanical, algebraic closure under \texttt{Translate} is provable within the algebra-induced MR space (Theorem~\ref{thm:closure}), and the construction transfers across domains once the algebra has been specified.

\begin{tcolorbox}[breakable,colback=gray!5,colframe=black!50,arc=2pt,boxrule=0.5pt,fontupper=\small,title=Boundary of contribution (Conclusion restatement),fonttitle=\small\bfseries]
\textbf{Established.} (i)~Algebraic closure under \texttt{Translate} given a block decomposition (Theorem~\ref{thm:closure}); (ii)~polynomial-time decidability under finite generating-set assumption (Theorem~\ref{thm:decidable}); (iii)~three non-vacuous instantiations across structurally distinct algebraic skeletons; (iv)~a negative instantiation on the PWR core diffusion algebra $\mathcal{A}_{\mathrm{PWR}}$ (\S\ref{subsec:negative-pwr}, Appendix~\ref{app:negative-proofs}), in which two MRs from the standard PWR safety-analysis literature (non-additivity of rod-bank reactivity worth, second-order mixed $T_{\mathrm{mod}}$-vs-$C_B$ dependence of $k_{\mathrm{eff}}$) are proved not in $\mathrm{MR}(\mathcal{A}_{\mathrm{PWR}})$, falsifying Theorem~1$'$ (Conjecture~\ref{conj:absolute}) on a structurally significant operator algebra and identifying five pairwise-independent extensions of \texttt{Translate}'s signature as the locus of repair.

\textbf{Open.} (a)~Whether a Composite-\texttt{Translate} extension covering the five obstructions of \S\ref{subsec:negative-pwr} and Appendix~\ref{app:negative-proofs} preserves Theorem~\ref{thm:closure}'s closure and Theorem~\ref{thm:decidable}'s polynomial-time decidability; (b)~sufficiency of Hypothesis~\ref{hyp:seven-blocks}'s eight-block list (Remark~\ref{rem:counterex}'s six out-of-scope classes are candidate ninth blocks); (c)~superiority over existing automated MR-identification pipelines on average defect distributions (the comparative-evaluation protocol in \S\ref{subsec:case-study} establishes effects, not averages); (d)~elimination of induction (relocated, not eliminated); (e)~automation of upstream $\mathcal{A}_P$ distillation (the framework treats $\mathcal{A}_P$ as a given input from a domain expert; mechanising the extraction of operator algebras from program semantics remains an upstream task that NOETHER does not address; \S\ref{subsec:remaining-human-role} sketches three partial-automation directions but commits to none). Both Hypothesis~\ref{hyp:seven-blocks} (block sufficiency) and \texttt{Translate}'s signature (\texttt{Translate} sufficiency) are the loci where future induction-eliminating and completeness-establishing work should target.
\end{tcolorbox}

\clearpage
\appendix

\section*{Appendices A, B (worked-example material)}
\label{app:moved-A-B}
Material running CONSTRUCT-MP on the heat / continuity / momentum / slowing-down reactor equations (Appendix~A), per-MR source provenance for the representative MRs of Table~\ref{tab:elementwise} (Appendix~B), and worked examples for multi-block-derivable MRs (originally introductory to Appendix~C) is provided in supplementary~\texttt{S9\_migrated\_appendices/A\_B\_C\_worked.tex}; the body text's claims do not depend on this material beyond the cross-references made in \S\ref{sec:reactor} and \S\ref{subsec:reactor-mapping}.

\section{Proofs}
\label{app:proofs}

\subsection*{Lemma C.1 (Well-foundedness of canonical-block ordering)}

\begin{lemma}
\label{lem:canonical-order}
The canonical-block ordering of Definition~\ref{def:canonical-order} is a strict total order on the eight blocks, and the assignment of any multi-block-derivable MR to its highest-priority block is unique.
\end{lemma}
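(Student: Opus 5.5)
The plan is to prove the two clauses of the lemma separately. The first clause concerns only the finite relation of Definition~\ref{def:canonical-order}. The second reduces to the existence and uniqueness of a maximum in a nonempty finite subset of a totally ordered set.

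For the first clause, I will encode Definition~\ref{def:canonical-order} as a rank function $r$ from the eight blocks to $\{1,\dots,8\}$:
\[
r(G)=8,\quad r(O_{\le})=7,\quad r(T^{*})=6,\quad r(\mathcal{T}^{*})=5,\quad r(\mathcal{L}^{*})=4,\quad r(\mathcal{D}^{*})=3,\quad r(\mathcal{E}^{*})=2,\quad r(\mathcal{B}^{*}_{\mathrm{rel}})=1,
\]
and set $s>s'$ iff $r(s)>r(s')$. The displayed chain lists the eight blocks exactly once each, so $r$ is a bijection onto $\{1,\dots,8\}$. The relation $>$ is then the pullback of the strict usual order on integers along an injection. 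Irreflexivity, transitivity and trichotomy (hence asymmetry and totality) transfer immediately. Well-foundedness is automatic on a finite set.

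For the second clause, take an MR $\rho$ and let $S(\rho)=\{s\in\mathcal{D}(\mathcal{A}_P) : \exists\,\iota\in\mathcal{I}_s,\ \rho=\mathrm{Translate}(\iota,s)\}$ be its set of deriving blocks. The MR is algebra-induced in the sense of Definition~\ref{def:alg-induced}, so $S(\rho)$ is nonempty. It is also finite, having at most eight elements. A nonempty finite subset of a strictly totally ordered set has a maximum, which I will obtain by induction on $|S(\rho)|$ or simply as $r^{-1}(\max r(S(\rho)))$. The maximum is unique: if $s_1\neq s_2$ were both maximal, trichotomy would force $s_1>s_2$ or $s_2>s_1$, contradicting the maximality of the other. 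The assignment $\rho\mapsto\max S(\rho)$ is therefore a well-defined function, and this is the uniqueness claim.

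The step I expect to need the most care is not the order theory but the well-definedness of $S(\rho)$. It requires that the equality ``$\rho=\mathrm{Translate}(\iota,s)$'' be read as equality up to the $\sim$-equivalence of logical statements in Definition~\ref{def:translate}, so that membership of a block in $S(\rho)$ does not depend on the presentation of $\rho$. I will argue this from the remark after Definition~\ref{def:block-invariant} that $\sim_s$ is an equivalence relation, and from clause (ii) of Definition~\ref{def:translate}, which says that $\sim_s$-equivalent invariants yield $\sim$-equivalent MRs. Hence $S(\rho)$ depends only on the $\sim$-class of $\rho$. The decomposition of Hypothesis~\ref{hyp:seven-blocks} is written as a disjoint union. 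A single MR may still be derived through several blocks, but each block is a fixed label, so $S(\rho)$ is a genuine subset of the eight labels and the maximum argument applies.
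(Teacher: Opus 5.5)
Your proof is correct and follows the paper's argument: the paper asserts the strict total order ``by inspection'' and then selects the unique $>$-maximum of the finite set of deriving blocks. Your rank-function encoding and the explicit nonemptiness and trichotomy steps make ``by inspection'' and ``unique since total'' precise, and treating blocks as labels avoids trouble when several blocks are empty as operator sets. Your last paragraph is not needed for the lemma. If you keep it, the invariance of $S(\rho)$ under $\sim$ comes from transitivity of $\sim$ on MRs once equality is read modulo $\sim$. It does not come from clause (ii) of Definition~\ref{def:translate}, which only runs from $\sim_s$-equivalent invariants to $\sim$-equivalent MRs.
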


\begin{proof}
The ordering is a strict total order on a finite set by inspection. Given an MR $\rho$ with derivations $\{(s_i, \iota_i)\}_{i=1}^{k}$ from blocks $s_1, \ldots, s_k$, the assignment selects $s^{*} = \max_{>}\{s_1, \ldots, s_k\}$. Since the strict order is total, $s^{*}$ is unique.
\end{proof}

\subsection*{Per-block instantiations of \texttt{Translate}}
\label{app:translate-table}

Definition~\ref{def:translate} fixes the signature of \texttt{Translate} but defers per-block specifics. Table~\ref{tab:translate} fills in the canonical input-tuple-generation rule and the resulting MR template for each of the eight blocks.

\begin{table}[h]
\centering
\caption{Per-block instantiations of \texttt{Translate}.}
\label{tab:translate}
\footnotesize
\begin{tabular}{p{0.10\textwidth} p{0.34\textwidth} p{0.46\textwidth}}
\toprule
\textbf{Block} $s$ & \textbf{Canonical tuple from base $x_0$} & \textbf{MR template $\rho_{\iota,s}$} \\
\midrule
$G$ & $(x_i)_{i=0}^{|G|-1} = (g_i \cdot x_0)$ over group orbit & $\forall x_0,\, \forall g \in G:\ P(g\cdot x_0) = \rho(g)\cdot P(x_0)$ \\
$O_{\le}$ & $(x_1, x_2)$ with $x_1 \le_{\theta} x_2$ in the partial order & $x_1 \le_\theta x_2 \Rightarrow P(x_1) \le_{\mathcal{Y}} P(x_2)$ \\
$T^{*}$ & paired tuple $((x_1, P(x_1)), (x_2, P(x_2)))$ in the inner product & $\langle L\,P(x_1), P(x_2)\rangle = \langle P(x_1), L\,P(x_2)\rangle$ \\
$\mathcal{T}^{*}$ & $(x, \mathcal{T} x)$ with $\mathcal{T}$ the time-reversal involution & $P(\mathcal{T} x) = \mathcal{T}\,P(x)$ on the reversibility sub-domain \\
$\mathcal{L}^{*}$ & sequence $(x_\theta)$ with $\theta \to \theta_*$ & $\|P_{\theta} - P_{\theta_*}\|_{*} = O(f(\theta))$ at the prescribed rate \\
$\mathcal{D}^{*}$ & solution trajectory $(\xi(t))_{t \ge 0}$ extracted by $\mathcal{D}$ & qualitative-feature relation (extremum, monotonicity, S-curve) preserved on $(\xi(t))$ \\
$\mathcal{E}^{*}$ & method pair $(M_1, M_2)$ in the partial order $\preceq_{\mathcal{E}}$ & $\mathrm{err}(M_1) \le \mathrm{err}(M_2)$ on the prescribed benchmark family \\
$\mathcal{B}^{*}_{\mathrm{rel}}$ & rewrite pair $(E, E')$ generated by an algebraic-rewriting rule $\mathcal{R} \in \mathcal{R}_{\mathrm{rel}}$ on the idempotent semiring & $\forall D.\ \mathrm{eval}(E, D) =_{\mathrm{bag}} \mathrm{eval}(E', D)$ at the rewriting rule's stated scope \\
\bottomrule
\end{tabular}
\end{table}

The per-block rule together with the $\sim_s$ equivalence relation of Definition~\ref{def:block-invariant} fully determines $\mathrm{Translate}(\iota, s)$ once the operator family $\Phi \subseteq s$ is fixed. Implementations of these seven cases as Python callables ship in supplementary material S1 (\texttt{construct\_mp.py}, function \texttt{translate} together with the seven \texttt{default\_extractor\_*} routines).

\subsection*{Theorem~\ref{thm:closure} (Algebraic Closure under \texttt{Translate}), full proof}

\begin{proof}
\emph{Existence.} Let $\rho \in \mathrm{MR}(\mathcal{A}_P)$ in the sense of Definition~\ref{def:alg-induced}. There exist a block $s$, an invariant $\iota \in \mathcal{I}_s$, and a derivation $\rho = \mathrm{Translate}(\iota, s)$. Step~1 of CONSTRUCT-MP places $\iota$ in $\mathcal{I}_s$; step~2 places $\rho$ in $\mathcal{R}(\iota)$; step~3 forms $m_s = \mathcal{R}(\iota)/\!\sim_s$; step~4 returns $m_s \in \mathbb{M}(\mathcal{A}_P)$.

\emph{Uniqueness.} Suppose $\rho$ admits derivations through multiple blocks: $\rho = \mathrm{Translate}(\iota_1, s_1) = \mathrm{Translate}(\iota_2, s_2)$ with $s_1 \neq s_2$. Definition~\ref{def:canonical-order}'s canonical-block ordering selects $s^{*} = \max_{>}\{s_1, s_2\}$. By Lemma~C.1, $s^{*}$ is unique; the canonical assignment $\rho \mapsto m_{s^{*}}$ is unique.
\end{proof}

\subsection*{Theorem 2 (Decidability), full proof}

\begin{proof}
\emph{Step 1 cost.} For each generator $g_i$, compute $g_i$'s contribution to each $\mathcal{I}_s$. Cost: $O(t_i)$ per generator, $O(n \cdot \max_i t_i)$ aggregated.

\emph{Step 2 cost.} `Translate` is $O(1)$ per invariant. Total: $O(n)$.

\emph{Step 3 cost.} Union-find with path compression: amortised $O(\alpha(n))$ per operation; we use $O(\log n)$ for simpler analysis. Total: $O(n \log n)$.

\emph{Step 4 cost.} $O(7) = O(1)$.

Total: $O(n \cdot \max_i t_i \cdot \log n)$ when $\max_i t_i \ge 1$.
\end{proof}

\subsection*{C.4 An open problem: absolute completeness}

Theorem~\ref{thm:closure}'s scope is bounded by Definition~\ref{def:alg-induced}: the closure is over MRs reachable through \texttt{Translate} from a single block invariant. We attempted a stronger statement:

\begin{conjecture*}[Theorem 1$'$, absolute completeness]
\label{conj:absolute}
Every MR $\rho$ formulable as a property over $\mathcal{A}_P$'s operators is contained in some $m \in \mathbb{M}(\mathcal{A}_P)$.
\end{conjecture*}

A proof would require either (a) a normalisation theorem reducing every operator-algebra MR to a \texttt{Translate}-reachable form, or (b) an extension of \texttt{Translate} to compositional invariants spanning multiple blocks. We were unable to establish either without imposing additional structural assumptions on $\mathcal{A}_P$. A partial empirical step toward Theorem~$1'$ is available on a single domain through the 18-MR engineering audit reported in \S\ref{subsec:reactor-mapping}: 17 of 18 are placed by majority of three LLM labellers into one of the seven MetaPatterns or $m_{\mathrm{rel}}$ (subsumption $94.4\%$, Fleiss' $\kappa = 0.857$, raw labels in supplementary S2 \texttt{18mr\_audit/}); the single orphan localises to the metric-stability class discussed in Appendix~\ref{rem:metric-stability-block} and Appendix~C.5.2, for which a concrete candidate ninth block is sketched. This is a constructive partial support, not a proof: it is empirical, single-domain, bounded by the LLM-shared-training-data caveat noted in \S\ref{subsec:reactor-mapping}, and does not bound the gap in the absence of the candidate ninth block. The conjecture remains open. Section~C.5 below documents three concrete classes of MRs that lie outside Theorem~\ref{thm:closure}'s scope and would have to be addressed by any positive resolution of Theorem~1$'$; \S\ref{subsec:negative-pwr} and Appendix~\ref{app:negative-proofs} additionally falsify Theorem~1$'$ on $\mathcal{A}_{\mathrm{PWR}}$.

\subsection*{C.5 Out-of-scope MRs: three concrete classes}
\label{app:out-of-scope}

We document three families of MRs that satisfy the informal description ``a property of $P$'s executions over the operator algebra $\mathcal{A}_P$'' but are not algebra-induced under Definition~\ref{def:alg-induced}, and therefore lie outside the scope of Theorem~\ref{thm:closure}. These are not threats to the theorem; they delimit it.

\paragraph{C.5.1 Probabilistic / distributional MRs without operator-algebraic representation.}
Consider the MR ``the Shannon entropy $H(f(\mathbf{x}))$ of a classifier's output distribution should not decrease when the input is augmented by Gaussian noise $\mathcal{N}(0, \sigma^2 I)$''. The augmentation is parametric in $\sigma$ and is not a group action, a partial-order operator, a self-adjoint operator, a time-reversal involution, a method-comparison operator, or (without further structure) a limit operator on $\mathcal{X}$. The MR constrains a functional $H$ of the output distribution, not a point-wise relation between outputs. There is no $\iota \in \mathcal{I}_s$ for any $s$ from which $\mathrm{Translate}$ can reach this MR. Consequently the MR is not algebra-induced under Definition~\ref{def:alg-induced} and is not covered by Theorem~\ref{thm:closure}. A positive resolution of Theorem~1$'$ would require either embedding stochastic perturbations into a probabilistic extension of $\mathcal{A}_P$, or extending \texttt{Translate} to functionals over output distributions.

\paragraph{C.5.2 Adversarial / input-set MRs.}
The MR ``$\| \delta\|_p \le \epsilon \Rightarrow f(\mathbf{x} + \delta) = f(\mathbf{x})$'' (adversarial robustness in $\ell_p$ ball) constrains $f$'s behaviour on a set defined by a norm constraint, not by a group action. The set $\{ \delta : \|\delta\|_p \le \epsilon\}$ does not in general carry a closed group structure on $\mathcal{X}$, and the MR is not derivable as $\mathrm{Translate}(\iota, s)$ for any $s$ in $\mathcal{D}(\mathcal{A}_P)$ as currently defined. Any treatment of this MR within NOETHER's scope would require an eighth block (e.g.\ a metric-ball block with bounded Lipschitz operators), or a relaxation of \texttt{Translate} to non-group input neighbourhoods. A concrete candidate for the eighth block alluded to above is a metric-stability block $M_{\mathrm{lip}} = (\mathcal{X}, d_{\mathcal{X}}) \to (\mathcal{Y}, d_{\mathcal{Y}})$ whose operators are $K$-Lipschitz maps and whose induced MetaPattern $m_{\mathrm{lip}}$ is the equivalence class of pointwise stability inequalities $d_{\mathcal{Y}}(P(x'), P(x)) \le K \cdot d_{\mathcal{X}}(x', x)$. The corresponding \texttt{Translate} template constructs a follow-up input by metric perturbation $x' = x + \varepsilon u$ (norm $\|u\|=1$, $|\varepsilon| < \delta$) and predicates a $K$-Lipschitz output bound; canonical-block ordering would place $M_{\mathrm{lip}}$ after $\mathcal{B}^{*}_{\mathrm{rel}}$ since metric structure is independent of the seven existing block invariants (no group action on the perturbation set, no partial order on inputs, no self-adjoint operator, and no parametric refinement family). Theorem~\ref{thm:closure}'s closure proof transfers without modification because $m_{\mathrm{lip}}$ is single-block algebraically derived. We do not commit to $M_{\mathrm{lip}}$ as part of the canonical decomposition in this paper, but we record it as the most concrete sub-instance of Remark~(iv) (topological invariants) that admits an immediate \texttt{Translate} template.

\paragraph{C.5.3 Compositional MRs across multiple blocks under non-canonical derivations.}
The MR ``rotating a point cloud by $R$ \emph{and} simultaneously taking the training-size limit $n \to \infty$ should give an output that converges to $f(\mathbf{x})$ at rate $O(1/\sqrt{n})$'' invokes both $G$ (rotation) and $\mathcal{L}^*$ (training-size limit) within a single MR. Definition~\ref{def:canonical-order}'s canonical-block ordering assigns this to $G$, but the resulting MR ``$f(R\cdot\mathbf{x}) = f(\mathbf{x})$'' loses the convergence-rate content. The compositional content is in $\mathbb{M}(\mathcal{A}_P)$ if and only if it can be split into separate single-block invariants (one in $G$, one in $\mathcal{L}^*$); MRs whose semantic content is irreducibly compositional (in the sense that splitting destroys the property's truth value) lie outside Theorem~\ref{thm:closure}'s scope. This is the case Theorem~1$'$ option~(b) is targeted at: an extended \texttt{Translate} that operates on tuples of block invariants. We have not constructed such an extension that preserves Theorem~\ref{thm:closure}'s polynomial-time decidability (Theorem~2), and we leave this as the most concrete sub-problem under Theorem~1$'$.

The three classes of \S C.5.1--\S C.5.3 are \emph{abstract} characterisations of MRs outside Theorem~\ref{thm:closure}'s scope. Appendix~\ref{app:negative-proofs} supplements them with two \emph{concrete instances} drawn from the PWR core diffusion algebra $\mathcal{A}_{\mathrm{PWR}}$, each of which is empirically realised on every conforming PWR core simulator and documented in the standard reactor-physics literature. Together with \S\ref{subsec:negative-pwr}, Appendix~\ref{app:negative-proofs} establishes that Theorem~1$'$ (Conjecture~\ref{conj:absolute}) is false on $\mathcal{A}_{\mathrm{PWR}}$, and identifies five structural obstructions in \texttt{Translate}'s signature (operator-spectrum output, homomorphism-failure $\pi$-template, configuration-indexed adjoint structure, higher-order mixed-difference templates, two-direction joint parametric dependence) that any positive resolution would have to repair.

\subsection*{C.6 Proofs for the negative instantiation on $\mathcal{A}_{\mathrm{PWR}}$ (\S\ref{subsec:negative-pwr})}
\label{app:negative-proofs}

This appendix supplies the proofs for Propositions~\ref{prop:nonadd} and~\ref{prop:mtcbor} of \S\ref{subsec:negative-pwr}. The proof structure for both is the same: enumerate the eight blocks of $\mathcal{D}(\mathcal{A}_{\mathrm{PWR}})$, instantiate the per-block \texttt{Translate} template of Table~\ref{tab:translate}, and verify by inspection that no invariant $\iota \in \mathcal{I}_{s}$ yields the target MR. The block-by-block exclusions in Proposition~\ref{prop:nonadd}'s proof are the most detailed; Proposition~\ref{prop:mtcbor} reuses the same exclusion pattern with the obstruction localised to the $O_{\le}$ block.

\paragraph{C.6.1 Proof of Proposition~\ref{prop:nonadd} (non-additivity is not \texttt{Translate}-reachable).}

\emph{Statement.} For $\rho_{\mathrm{nonadd}}$ as in Definition~\ref{def:rho-nonadd} and every $s \in \mathcal{D}(\mathcal{A}_{\mathrm{PWR}})$, every $\iota \in \mathcal{I}_{s}$: $\mathrm{Translate}(\iota, s) \neq \rho_{\mathrm{nonadd}}$.

\emph{Proof.} We use the exact-form definition of $d\rho$ (Definition~\ref{def:drho-exact}) throughout, which depends only on dominant eigenvalues $k_{\mathrm{eff}}$ of operators in $\mathcal{A}_{\mathrm{PWR}}$. The adjoint-perturbation reading of \S\ref{subsec:negative-pwr} is the physical motivation but is not invoked in any case below.

\textbf{Case $s = G$.} By Definition~\ref{def:block-invariant}, an invariant $\iota \in \mathcal{I}_{G}$ has the form $(\Phi, \pi)$ with $\Phi \subseteq G$ a finite operator family and $\pi$ a relation on tuples $(x_{i}, P(x_{i}))_{i=1}^{k}$ obtained by applying $\Phi$ to a base input $x_{0}$. By Table~\ref{tab:translate}, the canonical \texttt{Translate} template for $G$ is the equivariance schema
\[
\mathrm{Translate}(\iota, G) \;\equiv\; \forall x_{0}\;\forall g \in \Phi:\; P(g \cdot x_{0}) = \rho(g) \cdot P(x_{0}).
\]
Two structural mismatches with $\rho_{\mathrm{nonadd}}$:
\begin{enumerate}[leftmargin=*,nosep]
  \item[(a)] \emph{Operator-spectrum output, not $P$ output.} $\rho_{\mathrm{nonadd}}$ is an inequality between sums of $1/k_{\mathrm{eff}}$ values, where each $k_{\mathrm{eff}}$ is the dominant eigenvalue of a configuration-specific diffusion operator $H_{X} \in \mathcal{A}_{\mathrm{PWR}}$. The dominant eigenvalue is a \emph{spectral} quantity of the operator, obtained as part of the simultaneous solution $(k_{\mathrm{eff}}, \phi)$ of $H \phi = (1/k_{\mathrm{eff}})\, F \phi$; it is not a function of $P(x)$ alone but a property of the operator $H$ itself. \texttt{Translate}'s output relation $\pi$ in Definition~\ref{def:block-invariant} ranges over $(\mathcal{X} \times \mathcal{Y})^{k}$, where $\mathcal{Y}$ is the program output space (flux distributions, in this case). Operator-spectrum quantities are not in $\mathcal{Y}$; they are scalar invariants of operators in $\mathcal{O}$, lying outside \texttt{Translate}'s signature by construction.
  \item[(b)] \emph{Non-additivity is not equivariance.} Even granting a charitable extension to admit $k_{\mathrm{eff}}$ as a derived output, the $G$-template asserts equivariance of the output under the action of $\Phi$: $P(g \cdot x_{0})$ is determined by $g$ and $P(x_{0})$ through the representation $\rho(g)$. $\rho_{\mathrm{nonadd}}$ asserts that the worth functional $d\rho: \mathcal{O}_{\mathrm{rod}} \to \mathbb{R}_{>0}$ is \emph{not a semigroup homomorphism}: $d\rho(A \cup B) \neq d\rho(A) + d\rho(B)$. Failure of homomorphism is not equivariance failure; it is the absence of an additive structure-preserving map, which has no expression in the equivariance template $\pi$ for $G$. No $\iota \in \mathcal{I}_{G}$ yields $\rho_{\mathrm{nonadd}}$.
\end{enumerate}

\textbf{Case $s = O_{\le}$.} By Table~\ref{tab:translate}, the $O_{\le}$ template is the absolute-monotonicity schema
\[
\mathrm{Translate}(\iota, O_{\le}) \;\equiv\; \forall x_{1}, x_{2}:\; x_{1} \le_{\theta} x_{2} \implies P(x_{1}) \le_{\mathcal{Y}} P(x_{2}).
\]
$\rho_{\mathrm{nonadd}}$ is a \emph{quaternary relation} on the four configurations $(x_{0}, \mathcal{O}^{A} x_{0}, \mathcal{O}^{B} x_{0}, \mathcal{O}^{A \cup B} x_{0})$: it asserts a non-vanishing mixed second difference
\begin{align*}
\Delta_{AB}(x_{0}) \;=\; & \big[k_{\mathrm{eff}}^{-1}(P(x_{0})) - k_{\mathrm{eff}}^{-1}(P(\mathcal{O}^{A \cup B} x_{0}))\big] \\
& - \big[k_{\mathrm{eff}}^{-1}(P(x_{0})) - k_{\mathrm{eff}}^{-1}(P(\mathcal{O}^{A} x_{0}))\big] \\
& - \big[k_{\mathrm{eff}}^{-1}(P(x_{0})) - k_{\mathrm{eff}}^{-1}(P(\mathcal{O}^{B} x_{0}))\big] \;\neq\; 0
\end{align*}
(where the subtractions use Definition~\ref{def:drho-exact}'s positive-worth convention). The $O_{\le}$ template captures binary monotonicity between two points along a single partial order $\le_{\theta}$. The mixed-difference structure of $\rho_{\mathrm{nonadd}}$ requires comparison across four configurations forming a ``rectangle'' $\{x_{0}, \mathcal{O}^{A} x_{0}, \mathcal{O}^{B} x_{0}, \mathcal{O}^{A \cup B} x_{0}\}$ with two independent perturbation directions (insertion of $A$ and insertion of $B$); no $\le_{\theta}$ relates all four pairwise into a single chain. Furthermore, the assertion is \emph{non-vanishing of a difference}, not a directional inequality; it is direction-agnostic, capturing both shadowing ($\Delta > 0$) and anti-shadowing ($\Delta < 0$). The $O_{\le}$ template's $\le_{\mathcal{Y}}$ is a directional partial order, not a non-vanishing constraint. No $\iota \in \mathcal{I}_{O_{\le}}$ yields $\rho_{\mathrm{nonadd}}$.

\textbf{Case $s = T^{*}$.} By Table~\ref{tab:translate}, the $T^{*}$ template asserts $\langle L\, P(x_{1}), P(x_{2})\rangle = \langle P(x_{1}), L\, P(x_{2})\rangle$ for a single self-adjoint operator $L$ in a fixed inner product. The structural mismatch with $\rho_{\mathrm{nonadd}}$ has two components:
\begin{enumerate}[leftmargin=*,nosep]
  \item[(i)] \emph{Single operator $L$ vs.\ configuration-indexed family.} The exact $d\rho$ values entering $\rho_{\mathrm{nonadd}}$ are eigenvalues of \emph{four distinct diffusion operators} $H_{\emptyset}, H_{A}, H_{B}, H_{A \cup B}$, each self-adjoint within its own configuration but pairwise distinct as operators. The $T^{*}$ template's $L$ is fixed; it does not range over a configuration-indexed family $\{H_{X}\}_{X \in \mathcal{O}_{\mathrm{rod}}}$. The self-adjointness of any single $H_{X}$ does not imply or constrain a relation between eigenvalues of \emph{different} $H_{X}$'s.
  \item[(ii)] \emph{Adjoint weighting function depends on configuration.} Under the standard adjoint-perturbation reading, each $d\rho(B; X)$ for $X \in \{\emptyset, A\}$ admits the first-order representation
  \[
  d\rho(B; X) \;\approx\; -\frac{\langle \phi^{\dagger}_{X},\, \delta H_{B}\, \phi_{X}\rangle}{\langle \phi^{\dagger}_{X},\, F_{X}\, \phi_{X}\rangle},
  \]
  with $(\phi_{X}, \phi^{\dagger}_{X})$ the principal eigenfunctions of $(H_{X}, H^{\dagger}_{X})$. The Hilbert-space inner product $\langle \cdot,\,\cdot\rangle = \int (\cdot)(\cdot)\, d\Omega\, dE\, d\mathbf{r}$ is unchanged across $X$. What changes across $X$ is the \emph{adjoint weighting function} $\phi^{\dagger}_{X}$, which is the principal eigenfunction of a structurally different adjoint operator $H^{\dagger}_{X}$ from $H^{\dagger}_{\emptyset}$ (because $H_{X}$ contains $A$'s absorbing material whereas $H_{\emptyset}$ does not). $\phi^{\dagger}_{A}$ is locally depressed in $A$'s geometric support and globally redistributed elsewhere; this is the adjoint-perturbation root cause of the non-additivity $d\rho(A \cup B) \neq d\rho(A) + d\rho(B)$.
\end{enumerate}
The $T^{*}$ template's structure (single $L$, fixed inner product, asserting $\langle L x_{1}, x_{2}\rangle = \langle x_{1}, L x_{2}\rangle$) has no place to express ``the weighting function $\phi^{\dagger}$ entering the inner product is itself the principal eigenfunction of a configuration-indexed adjoint operator and changes when configuration changes''. This is a configuration-dependent adjoint structure, not a self-adjointness identity on a single operator. No $\iota \in \mathcal{I}_{T^{*}}$ yields $\rho_{\mathrm{nonadd}}$.

\textbf{Case $s = \mathcal{T}_{\mathrm{rev}}^{*}$.} The PWR diffusion operator $-\nabla \cdot D \nabla + \Sigma_{a}$ is irreversible: it is parabolic (transient form) or elliptic (steady-state form), neither of which admits a time-reversal involution on the relevant solution sub-family. Hence $\mathcal{T}_{\mathrm{rev}}^{*}$ is empty for $\mathcal{A}_{\mathrm{PWR}}$, and vacuously no $\iota \in \mathcal{I}_{\mathcal{T}_{\mathrm{rev}}^{*}}$.

\textbf{Case $s = \mathcal{L}^{*}$.} By Table~\ref{tab:translate}, the $\mathcal{L}^{*}$ template is a convergence statement $\|P_{\theta} - P_{\theta_{*}}\|_{*} = O(f(\theta))$ at a parametric limit. $\rho_{\mathrm{nonadd}}$ contains no limit operation: it is a strict inequality at finite, fixed configurations $(\emptyset, A, B, A \cup B, x_{0})$. No $\iota \in \mathcal{I}_{\mathcal{L}^{*}}$.

\textbf{Case $s = \mathcal{D}^{*}$.} By Table~\ref{tab:translate}, the $\mathcal{D}^{*}$ template asserts a qualitative-feature relation (extremum, monotonicity, S-curve) on a solution trajectory $\xi(t)$. $\rho_{\mathrm{nonadd}}$ is a steady-state inequality between four reactivity values; it does not concern trajectory shapes. No $\iota \in \mathcal{I}_{\mathcal{D}^{*}}$.

\textbf{Case $s = \mathcal{E}^{*}$.} By Table~\ref{tab:translate}, the $\mathcal{E}^{*}$ template compares two \emph{methods} $M_{1}, M_{2}$ on a benchmark family. $\rho_{\mathrm{nonadd}}$ compares four \emph{operator configurations} on a single fixed method (the same diffusion solver $P$ in all four worth values). No $\iota \in \mathcal{I}_{\mathcal{E}^{*}}$.

\textbf{Case $s = \mathcal{B}_{\mathrm{rel}}^{*}$.} Per \S\ref{subsec:third-domain}, $\mathcal{B}_{\mathrm{rel}}^{*}$ is non-empty only on program families with idempotent-semiring rewriting structure. The PWR diffusion solution operator algebra does not carry such structure (no rewriting rules between core states preserve evaluation under all valid inputs). $\mathcal{B}_{\mathrm{rel}}^{*}$ is empty for $\mathcal{A}_{\mathrm{PWR}}$. Vacuously no $\iota \in \mathcal{I}_{\mathcal{B}_{\mathrm{rel}}^{*}}$.

This exhausts $\mathcal{D}(\mathcal{A}_{\mathrm{PWR}})$. For every block $s$ and every $\iota \in \mathcal{I}_{s}$, $\mathrm{Translate}(\iota, s) \neq \rho_{\mathrm{nonadd}}$. Hence $\rho_{\mathrm{nonadd}} \notin \mathrm{MR}(\mathcal{A}_{\mathrm{PWR}})$. \hfill$\square$

\paragraph{C.6.2 Three obstructions identified by Proposition~\ref{prop:nonadd}'s proof.}
The proof identifies three independent structural obstructions in \texttt{Translate}'s present definition:
\begin{enumerate}[leftmargin=*,nosep]
  \item[(O1)] \emph{Operator-spectrum output is not in $\mathcal{Y}$.} $\rho_{\mathrm{nonadd}}$ asserts a relation between $1/k_{\mathrm{eff}}$ values, where each $k_{\mathrm{eff}}$ is a dominant eigenvalue of an operator $H_{X}$. \texttt{Translate}'s $\pi$ in Definition~\ref{def:block-invariant} ranges over $(\mathcal{X} \times \mathcal{Y})^{k}$; eigenvalues of operators in $\mathcal{O}$ are scalar invariants of those operators, not elements of $\mathcal{Y}$.
  \item[(O2)] \emph{Output relation is non-additivity (failure of homomorphism), not equivariance, partial order, or self-adjointness.} The worth functional $d\rho: \mathcal{O}_{\mathrm{rod}} \to \mathbb{R}$ is not a semigroup homomorphism; this is a third type of algebraic relation distinct from the equivariance, monotonicity, and self-adjointness expressed by \texttt{Translate}'s per-block $\pi$ templates.
  \item[(O3)] \emph{Adjoint weighting function $\phi^{\dagger}_{X}$ entering the $T^{*}$ block's inner product is configuration-dependent.} While the Hilbert-space measure $d\Omega\, dE\, d\mathbf{r}$ is fixed, the eigenfunction $\phi^{\dagger}_{X}$ varies with $X$ because $H^{\dagger}_{X}$ varies with $X$. Definition~\ref{def:block-invariant} fixes the self-adjoint operator $L$ once; it does not admit a configuration-indexed family $\{L_{X}\}$ with $L_{X}$'s spectrum varying with $X$.
\end{enumerate}

A constructive resolution of Proposition~\ref{prop:nonadd}'s obstructions would require \texttt{Translate} to admit (i) operator-spectrum output relations (eigenvalues, integrals, ratios as the targets of $\pi$), (ii) homomorphism-failure relations as a $\pi$-template type alongside equivariance / monotonicity / self-adjointness, and (iii) configuration-dependent adjoint structure on $T^{*}$.

\paragraph{C.6.3 Proof of Proposition~\ref{prop:mtcbor} (MTC-vs-boron mixed dependence is not \texttt{Translate}-reachable).}

\emph{Statement.} For $\rho_{\mathrm{MTC\text{-}bor}}$ as in Definition~\ref{def:rho-mtcbor} and every $s \in \mathcal{D}(\mathcal{A}_{\mathrm{PWR}})$, every $\iota \in \mathcal{I}_{s}$: $\mathrm{Translate}(\iota, s) \neq \rho_{\mathrm{MTC\text{-}bor}}$.

\emph{Proof.} The principal obstruction is in the $O_{\le}$ block; we treat that case in detail and abbreviate the others.

\textbf{Case $s = O_{\le}$.} By Table~\ref{tab:translate}, the $O_{\le}$ template is the absolute-monotonicity schema
\[
\mathrm{Translate}(\iota, O_{\le}) \;\equiv\; \forall x_{1}, x_{2}:\; x_{1} \le_{\theta} x_{2} \implies P(x_{1}) \le_{\mathcal{Y}} P(x_{2}),
\]
a \emph{first-order} statement asserting a binary relation between $P(x_{1})$ and $P(x_{2})$ at $\theta$-comparable inputs along a single partial-order direction $\le_{\theta}$. $\rho_{\mathrm{MTC\text{-}bor}}$ asserts a \emph{non-zero second-order mixed partial derivative} of $k_{\mathrm{eff}}$ with respect to two \emph{independent} parameter directions $T_{\mathrm{mod}}$ and $C_{B}$:
\[
\left|\frac{\partial^{2} k_{\mathrm{eff}}}{\partial T_{\mathrm{mod}}\,\partial C_{B}}\right| \;>\; \tau_{\mathrm{MTC\text{-}bor}}.
\]
Two independent obstructions in the $O_{\le}$ template:
\begin{enumerate}[leftmargin=*,nosep]
  \item[(a)] \emph{Order vs.\ mixed-derivative structure.} The mixed second derivative is the limit of a four-point finite-difference quotient over a ``rectangle'' $\{(T_{0}, C_{0}), (T_{0}+\Delta T, C_{0}), (T_{0}, C_{0}+\Delta C), (T_{0}+\Delta T, C_{0}+\Delta C)\}$:
  \[
  \frac{\partial^{2} k_{\mathrm{eff}}}{\partial T_{\mathrm{mod}}\,\partial C_{B}} \;=\; \lim_{\Delta T, \Delta C \to 0} \frac{k(T_{0}+\Delta T, C_{0}+\Delta C) - k(T_{0}, C_{0}+\Delta C) - k(T_{0}+\Delta T, C_{0}) + k(T_{0}, C_{0})}{\Delta T \cdot \Delta C}
  \]
  (writing $k$ for $k_{\mathrm{eff}}$). This is structurally a \emph{four-point relation}, not a two-point relation. The $O_{\le}$ template's $\pi$ relates $(P(x_{1}), P(x_{2}))$ pairwise; it has no expression for a four-point combination weighted by $1/(\Delta T \cdot \Delta C)$. Equivalently, the canonical input-tuple-generation rule for $O_{\le}$ in Table~\ref{tab:translate} produces tuples $(x_{1}, x_{2})$ with $x_{1} \le_{\theta} x_{2}$, not 2-by-2 perturbation rectangles.
  \item[(b)] \emph{Two independent parameter directions.} The mixed derivative requires \emph{jointly varying} $T_{\mathrm{mod}}$ and $C_{B}$ along two independent directions. The $O_{\le}$ template's $\le_{\theta}$ is a \emph{single} partial order on $\mathcal{X}$ (or on a single coordinate of $\mathcal{X}$). Even granting the construction of a product order $\le_{T} \times \le_{C}$ on a 2-D parameter slice of $\mathcal{X}$, the $\pi$ relation still applies along the order chain as a directional inequality, not as a non-vanishing-second-difference statement. The two independent parameter directions are perpendicular, not chained; the $\le_{\theta}$ formalism collapses them into one chain only by losing the non-vanishing-mixed-difference content.
\end{enumerate}

Furthermore, $\rho_{\mathrm{MTC\text{-}bor}}$'s output is again $k_{\mathrm{eff}}$, an operator-spectrum quantity (cf.\ Proposition~\ref{prop:nonadd} Case $G$ obstruction~(a)). This gives a \emph{third} obstruction in $O_{\le}$: even if mixed-second-derivative structure could be embedded into $\pi$, the output value would not lie in $\mathcal{Y}$. No $\iota \in \mathcal{I}_{O_{\le}}$ yields $\rho_{\mathrm{MTC\text{-}bor}}$.

\textbf{Other blocks.}
\begin{itemize}[leftmargin=*,nosep]
  \item \emph{Case $s = G$.} No group action in $\mathcal{A}_{\mathrm{PWR}}$ relates $(T_{\mathrm{mod}}, C_{B})$ pairs at different parameter values to each other through equivariance: $T_{\mathrm{mod}}$ and $C_{B}$ are continuous parameters acting on the cross-section library, not group-action coordinates. Even if a charitable embedding in $G$ were attempted, the same operator-spectrum-output obstruction (Proposition~\ref{prop:nonadd} Case $G$~(a)) applies: $k_{\mathrm{eff}}$ is not in $\mathcal{Y}$.
  \item \emph{Case $s = T^{*}$.} The MTC operator $\partial/\partial T_{\mathrm{mod}}$ is not self-adjoint in any natural inner product on the diffusion-solution space (the parameter $T_{\mathrm{mod}}$ enters the cross-section coefficients of $H$ rather than $H$'s acting space, so $\partial H/\partial T_{\mathrm{mod}}$ has no self-adjointness structure analogous to $H$ itself). No $\iota \in \mathcal{I}_{T^{*}}$.
  \item \emph{Case $s = \mathcal{T}_{\mathrm{rev}}^{*}$.} Empty for PWR diffusion. No $\iota \in \mathcal{I}_{\mathcal{T}_{\mathrm{rev}}^{*}}$.
  \item \emph{Case $s = \mathcal{L}^{*}$.} $\rho_{\mathrm{MTC\text{-}bor}}$ does not assert convergence at a parametric limit; it asserts a non-zero finite value of a mixed second derivative at a \emph{finite} parameter point in the operating envelope. The mixed-derivative limit $\Delta T, \Delta C \to 0$ is the \emph{definition} of the derivative, not the MR's claim; the MR's claim is that the resulting derivative exceeds $\tau_{\mathrm{MTC\text{-}bor}}$, which is a non-vanishing condition at a finite parameter value, not a convergence rate. No $\iota \in \mathcal{I}_{\mathcal{L}^{*}}$.
  \item \emph{Case $s = \mathcal{D}^{*}$.} $\rho_{\mathrm{MTC\text{-}bor}}$ concerns $k_{\mathrm{eff}}$ as a function of two parameters at steady state; it does not involve a solution trajectory of an underlying ODE/PDE. No $\iota \in \mathcal{I}_{\mathcal{D}^{*}}$.
  \item \emph{Case $s = \mathcal{E}^{*}$.} The comparison is between two parameter regimes $(T_{0}, C_{0})$ and $(T_{0}+\Delta T, C_{0}+\Delta C)$ of a single fixed method (the same PWR core simulator $P$ in all four eigenvalues), not between two methods. No $\iota \in \mathcal{I}_{\mathcal{E}^{*}}$.
  \item \emph{Case $s = \mathcal{B}_{\mathrm{rel}}^{*}$.} Empty for $\mathcal{A}_{\mathrm{PWR}}$. No $\iota \in \mathcal{I}_{\mathcal{B}_{\mathrm{rel}}^{*}}$.
\end{itemize}

This exhausts $\mathcal{D}(\mathcal{A}_{\mathrm{PWR}})$. Hence $\rho_{\mathrm{MTC\text{-}bor}} \notin \mathrm{MR}(\mathcal{A}_{\mathrm{PWR}})$. \hfill$\square$

\paragraph{C.6.4 Two further obstructions identified by Proposition~\ref{prop:mtcbor}'s proof.}
The proof identifies two further independent structural obstructions in \texttt{Translate}'s present definition, in addition to the three identified by Proposition~\ref{prop:nonadd}:
\begin{enumerate}[leftmargin=*,nosep]
  \item[(O4)] \emph{MR is a non-zero second-order mixed partial derivative, not a first-order relation.} All per-block $\pi$ templates in Table~\ref{tab:translate} are first-order: equivariance is a first-order identity $P(g x_{0}) = \rho(g) P(x_{0})$; monotonicity is a first-order inequality $P(x_{1}) \le P(x_{2})$; self-adjointness is a first-order pairing $\langle L x_{1}, x_{2}\rangle = \langle x_{1}, L x_{2}\rangle$. Mixed second differences (and a fortiori higher-order mixed differences) have no expression in any of these.
  \item[(O5)] \emph{MR involves two independent parameter directions, joined as a 2-by-2 perturbation rectangle, not as a chain.} The $O_{\le}$ block's $\le_{\theta}$ is a single partial-order direction; even with multiple independent partial orders in $\mathcal{I}_{O_{\le}}$, the $\pi$ template relates pairwise inputs along a \emph{single chosen direction}, not jointly across two perpendicular directions in a finite-difference rectangle.
\end{enumerate}

A constructive resolution of Proposition~\ref{prop:mtcbor}'s obstructions would require \texttt{Translate} to admit (iv) higher-order mixed-difference $\pi$-templates and (v) two-direction joint parametric dependence beyond the single-$\theta$ partial order of Definition~\ref{def:block-invariant}.

\paragraph{C.6.5 Combined corollary.}
\begin{corollary*}[Theorem~$1'$ is false on $\mathcal{A}_{\mathrm{PWR}}$, two-fold]
The MRs $\rho_{\mathrm{nonadd}}, \rho_{\mathrm{MTC\text{-}bor}}$ are each formulable over operators of $\mathcal{A}_{\mathrm{PWR}}$ and each empirically realised on every conforming PWR core simulator. By Propositions~\ref{prop:nonadd} and~\ref{prop:mtcbor}, neither is in $\mathrm{MR}(\mathcal{A}_{\mathrm{PWR}})$. The structural obstructions O1--O5 identified by the two proofs are pairwise distinct: no single extension of \texttt{Translate}'s signature absorbs any two simultaneously, so the joint obstruction is irreducibly five-fold.
\end{corollary*}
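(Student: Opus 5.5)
The corollary has two parts, and I will prove them separately. The first part says that Theorem~$1'$ (Conjecture~\ref{conj:absolute}) fails on $\mathcal{A}_{\mathrm{PWR}}$. The second part says that O1--O5 are pairwise distinct and that no single extension of \texttt{Translate} absorbs any two of them.

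\emph{Falsification.} I first check formulability. Both $\rho_{\mathrm{nonadd}}$ and $\rho_{\mathrm{MTC\text{-}bor}}$ are built only from $k_{\mathrm{eff}}$ evaluated at configurations produced by operators already listed in $\mathcal{A}_{\mathrm{PWR}}$.
\begin{itemize}
\item For $\rho_{\mathrm{nonadd}}$, the configurations are $\mathcal{O}_{\mathrm{rod}}^{A}x_0$, $\mathcal{O}_{\mathrm{rod}}^{B}x_0$ and $\mathcal{O}_{\mathrm{rod}}^{A\cup B}x_0$ (Definitions~\ref{def:drho-exact} and~\ref{def:rho-nonadd}).
\item For $\rho_{\mathrm{MTC\text{-}bor}}$, the configurations are $\mathcal{M}_{T_{\mathrm{mod}}}$ and $\mathcal{M}_{C_B}$ applied on a $2\times 2$ perturbation rectangle (Definition~\ref{def:rho-mtcbor}).
\end{itemize}
So each MR is a property over $\mathcal{A}_{\mathrm{PWR}}$'s operators, which is exactly the hypothesis of the conjecture. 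Propositions~\ref{prop:nonadd} and~\ref{prop:mtcbor} give $\rho\notin\mathrm{MR}(\mathcal{A}_{\mathrm{PWR}})$ for each. Every $m\in\mathbb{M}(\mathcal{A}_{\mathrm{PWR}})$ is, by Steps~2--4 of CONSTRUCT-MP, a set of \texttt{Translate}-images, so no such $m$ contains either MR. This gives two independent counterexamples. The claim that they are "empirically realised on every conforming PWR core simulator" is an engineering premise, which I will cite from \S\ref{subsec:negative-pwr} rather than prove.

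\emph{Making "extension" precise.} For the distinctness claim I first need a precise notion of extension. I will model \texttt{Translate}'s signature as a tuple of five coordinates:
\begin{enumerate}
\item the codomain of $\pi$ (currently $(\mathcal{X}\times\mathcal{Y})^k$);
\item the relation type of $\pi$ (equivariance, monotone order, or self-adjoint pairing);
\item the indexing of the $T^{*}$ operator $L$ (currently a single fixed operator);
\item the difference order of $\pi$ (currently first-order);
\item the number of jointly varied parameter directions in the tuple-generation rule (currently one chain $\le_\theta$).
\end{enumerate}
An extension along O$i$ is then the minimal enlargement of coordinate $i$ alone. For each of the ten pairs $\{i,j\}$, I will exhibit a separating witness MR that becomes \texttt{Translate}-reachable under extension $i$ but not under extension $j$. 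Witnesses in both directions will also give pairwise distinctness.

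\emph{Separating witnesses.} The witnesses I intend to use are these.
\begin{itemize}
\item O1 alone: a first-order monotonicity of $k_{\mathrm{eff}}$ in $C_B$.
\item O2 alone: additivity failure of a functional on $\mathcal{Y}$-valued outputs, for example total absorption rate.
\item O3 alone: a $T^{*}$ reciprocity identity checked per configuration with $L_X$ varying, with no eigenvalue output.
\item O4 alone: a non-vanishing pure second difference along the single direction $T_{\mathrm{mod}}$.
\item O5 alone: a first-order joint monotonicity under the product order $\le_T\times\le_C$.
\end{itemize}
Each of these uses exactly one of the five features. Checking each one against the per-block templates of Table~\ref{tab:translate} reuses the case analyses of C.6.1 and C.6.3 almost verbatim.

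\emph{Main obstacle.} I expect the hardest step to be the pair O2/O4 (and, relatedly, O2/O5). The quantity $\Delta_{AB}$ is itself a mixed second difference over the rectangle $\{\emptyset,A,B,A\cup B\}$, so a generic higher-order mixed-difference $\pi$-template looks as if it would absorb O2. My first attempt at separating them is to use the index domains.
\begin{itemize}
\item O2 is a homomorphism failure on a discrete commutative semigroup under $\cup$, with no scaling denominator and no limit.
\item O4 and O5 are finite-difference quotients over a continuous ordered parameter space, taken to the limit.
\end{itemize}
I would then require the minimal O4 extension to act only on $\le_\theta$-chains of a real parameter. If that restriction is judged artificial, the fallback is to weaken the corollary's phrase "irreducibly five-fold". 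In that case I would assert only that the five coordinates are pairwise distinct as signature features, and concede that a sufficiently generous Composite-\texttt{Translate} could cover O2 and O4 together.
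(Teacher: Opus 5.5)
Your falsification half matches the paper's argument. The first two sentences of the corollary are the conjunction of Propositions~\ref{prop:nonadd} and~\ref{prop:mtcbor}, with formulability and empirical realisation taken from \S\ref{subsec:negative-pwr}, and your observation that every $m\in\mathbb{M}(\mathcal{A}_{\mathrm{PWR}})$ consists of \texttt{Translate}-images closes it. The gap is the irreducibility clause. The paper offers no argument for it beyond the assertion accompanying Table~\ref{tab:five-obstructions}; the exhaustions in C.6.1 and C.6.3 show only that each MR is unreachable. Your separating-witness plan is therefore a genuinely different route, but its witnesses do not isolate the features.

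The paper's O4 is a \emph{mixed} second-order template and so is already two-directional. The extension it names therefore carries O5 with it, and your single-chain pure second difference is a new feature, not the paper's O4. Your O5-alone witness (first-order monotonicity under $\le_T\times\le_C$) is already reachable. A product order is a partial order, and the $O_{\le}$ row of Table~\ref{tab:translate} accepts any $\le_{\theta}$. C.6.3(b) itself grants the product order and places O5 in the four-corner rectangle, whose natural witnesses are mixed differences and hence need O4 as well.

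Your O2/O4 worry is confirmed by the paper's own text. The $O_{\le}$ case of C.6.1 describes $\Delta_{AB}$ as a non-vanishing mixed second difference on the rectangle $\{x_0,\mathcal{O}^{A}x_0,\mathcal{O}^{B}x_0,\mathcal{O}^{A\cup B}x_0\}$ with two independent directions. Indeed $\Delta_{AB}=-\bigl(f(A\cup B)-f(A)-f(B)+f(\emptyset)\bigr)$ with $f=1/k_{\mathrm{eff}}$. The factor $1/(\Delta T\,\Delta C)$ only rescales the threshold, so confining O4 to real-parameter chains is exactly what makes independence hold by definition.

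The remaining witnesses fail the same way. O1 is not a change of codomain. Definition~\ref{def:block-invariant} allows any $\pi\subseteq(\mathcal{X}\times\mathcal{Y})^k$, and $k_{\mathrm{eff}}$ is determined by $(x,P(x))$ as a Rayleigh quotient; the paper itself writes $k_{\mathrm{eff}}(P(x))$ in Definition~\ref{def:drho-exact}. The total absorption rate in your O2 witness is determined the same way. If your O2 extension lets the template compare that derived functional, it can compare $k_{\mathrm{eff}}$ too and so absorbs O1. Moreover, Table~\ref{tab:elementwise} files Bol-Phy-11 ($\Sigma_a\uparrow\Rightarrow k_{\mathrm{eff}}\downarrow$) under $m_{\mathrm{mono}}$, and $\mathcal{A}_{\mathrm{PWR}}$ contains the operators of $\mathcal{A}_{\mathrm{Boltz}}$ with $\mathcal{M}_{C_B}\in O_{\le}$. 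The paper's own positive instantiation thus treats a witness of exactly your O1 shape as \texttt{Translate}-reachable.

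Your O3 witness, checked configuration by configuration, is a family of single-$L$ identities, each reachable with $\Phi=\{L_X\}$ once $L_X\in T^{*}$. And since the exact form of $\rho_{\mathrm{nonadd}}$ involves only eigenvalues (as C.6.1 itself says), O3 is the physical mechanism behind $\Delta_{AB}\neq 0$, not something $\pi$ must express.

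Consequently, one new template reaches $\rho_{\mathrm{nonadd}}$ and the finite-difference form of $\rho_{\mathrm{MTC\text{-}bor}}$ at once: a non-vanishing mixed second difference of a scalar functional of the tuple, over a $2\times 2$ rectangle generated by two block operators. If ``single extension'' means one new template, the clause is false. If it means enlarging one coordinate of your signature tuple, it is true by fiat, and all the content moves into the choice of coordinates, which is where your witnesses break. The defensible endpoint is your fallback: Theorem~$1'$ is falsified twice over, and O1--O5 are distinct descriptions of how the two MRs miss the Table~\ref{tab:translate} templates. But ``irreducibly five-fold'' is established neither by your plan nor by the paper.
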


\begin{remark}[Open: Composite \texttt{Translate}]
\label{rem:composite-translate}
A natural follow-up is to define a \emph{Composite \texttt{Translate}} $\widetilde{\mathrm{Translate}}: \mathcal{I}_{s_{1}} \times \cdots \times \mathcal{I}_{s_{k}} \to \mathrm{MR}(P)$ that combines invariants from multiple blocks under a generalised $\pi$ template admitting (i) operator-spectrum output, (ii) homomorphism-failure relations, (iii) configuration-indexed adjoint structure, (iv) higher-order mixed differences, and (v) two-direction joint parametric dependence. Whether such an extension preserves Theorem~\ref{thm:closure}'s closure (now over $\widetilde{\mathrm{MR}}(\mathcal{A}_{P})$) and Theorem~\ref{thm:decidable}'s polynomial-time decidability is the principal open problem this subsection leaves to future work. Five independent extensions are needed; a single uniform Composite \texttt{Translate} covering all of them would be a substantive theoretical contribution.
\end{remark}

\subsection*{C.7 Worked enumeration of CONSTRUCT-MP on the Boltzmann algebra (migrated)}
\label{app:moved-C7}
The line-by-line enumeration of CONSTRUCT-MP Steps 1--4 on $\mathcal{A}_{\mathrm{Boltz}}$, originally Appendix~C.7, is provided in supplementary~\texttt{S9\_migrated\_appendices/C7\_boltzmann\_worked.tex}.

\section*{Appendices D, E (implementation + consistency check)}
\label{app:moved-D-E}
A reference implementation of CONSTRUCT-MP (Appendix~D) and the construct-trace consistency check on hand-crafted block-targeted mutants (Appendix~E, formerly demoted from the H3a.1 evidence base) are provided in supplementary~\texttt{S9\_migrated\_appendices/D\_E\_implementation\_consistency.tex}; both are confidence-strengthening illustrative material that does not carry inferential weight for the paper's H3 verdicts.

\section*{Data Availability Statement}
The companion artefact for this paper, including the
\texttt{CONSTRUCT-MP} implementation, the 8-block decompositions for
the three case-study algebras, the Set~N MR catalogue
(\S\ref{subsec:test-design}: $36$ MRs on \texttt{MathSignalClass}~+~\texttt{ComplexSignal};
ISSUE-009: $5$ MRs on Apache Commons Math), the Set~G GenMorph rerun
output, the LLM-ensemble Set~L harvest with token-cost logs, the
PIT $1.7.4$ mutation traces, and the per-block + D1/D2 aggregation
pipeline, will be released as open-source upon acceptance of the
paper. Reproduction scripts (\texttt{setup.sh}, \texttt{run\_all.sh},
\texttt{tests/run.sh}) and the experiment-side \texttt{CLAUDE.md}
collaboration record are bundled. Authoring repository and Zenodo
DOI to be added in the camera-ready version.

\begin{acks}
This work was supported by the National Natural Science Foundation of China (NSFC) General Program (grant no.\ 12575176), the Hunan Provincial Education Department Project, China (grant no.\ 202502000728), the Research Project on Degree and Graduate Education Reform of the University of South China (grant no.\ 2023JG030), the Natural Science Foundation of Hunan Province, China (grant no.\ 2025JJ70193), and an industry-funded research project (grant no.\ 230KHX060001).

\smallskip
\noindent\textbf{Author contributions (CRediT).}
\textbf{Meng Li}: Conceptualization, Methodology, Software, Writing---original draft.
\textbf{Xiaohua Yang}: Supervision, Formal analysis, Writing---review \& editing.
\textbf{Jie Liu}: Investigation, Validation.
\textbf{Shiyu Yan}: Data curation, Visualization.

\smallskip
\noindent\textbf{Declaration of competing interest.}
The authors declare no conflict of interest.
\end{acks}

\bibliographystyle{unsrtnat}    
\bibliography{NOETHER_paper}

\end{document}